\theoremstyle{plain}
\newtheorem{theorem}{Theorem}[section]
\newtheorem{corollary}[theorem]{Corollary}
\newtheorem{lemma}[theorem]{Lemma}
\newtheorem*{mainpoly}{Theorem \hyperref[thm:mainpoly]{A}(=\Cref{thm:mainpoly})}
\newtheorem*{mainbool}{Theorem \hyperref[thm:main]{B} (=\Cref{thm:main})}
\newtheorem*{claim}{Claim}
\theoremstyle{definition} 
\newtheorem{definition}[theorem]{Definition}
\newtheorem{open}[theorem]{Open Question}
\newtheorem{remark}[theorem]{Remark}
\newtheorem{observation}[theorem]{Observation}
\newcommand{\cc}[1]{\mathsf{#1}}
\newcommand{\F}{\mathbb{F}}
\newcommand{\K}{\mathbb{K}}
\newcommand{\Z}{\mathbb{Z}}
\newcommand{\Q}{\mathbb{Q}}
\newcommand{\poly}{\text{poly}}
\newcommand{\lang}[1]{\textsc{#1}\xspace}
\newcommand{\UNSAT}{\lang{UNSAT}}
\newcommand{\PIT}[1][]{\lang{PIT\textsubscript{#1}}}
\DeclareMathOperator{\ch}{char}
\DeclareMathOperator{\BIT}{BIT}
\DeclareMathOperator{\ADD}{ADD}
\DeclareMathOperator{\SUBPROD}{SUBPROD}
\DeclareMathOperator{\PROD}{PROD}
\DeclareMathOperator{\LT}{LT}
\DeclareMathOperator{\VAL}{VAL}
\DeclareMathOperator{\VEC}{VEC}
\DeclareMathOperator{\REM}{REM}
\DeclareMathOperator{\INV}{INV}
\newlength{\mylength}
\newcommand{\algprob}[4]{
\begin{quotation}
\begin{tabbing}
\hphantom{Decide: } \= {#1} \\
\textit{Input: } \> {#2} \\
\textit{Decide: } \> {#3} \\
 \> \parbox{\dimexpr\linewidth-\the\mylength}{#4}
\end{tabbing}
\end{quotation}
}
\DeclareMathOperator{\gates}{gates}
\DeclareMathOperator{\wires}{wires}
\DeclareMathOperator{\depth}{depth}
\DeclareMathOperator{\proddepth}{prod-depth}
\newcommand{\Hrubes}{Hrube\v{s}\xspace}
\title{Polynomial Identity Testing and the Ideal Proof System: \\
\PIT is in $\cc{NP}$ if and only if IPS can be p-simulated by a Cook--Reckhow proof system}
\author{Joshua A. Grochow}
\begin{document}
\maketitle

\begin{abstract}
The Ideal Proof System (IPS) of Grochow \& Pitassi (FOCS 2014, \emph{J. ACM}, 2018) is an algebraic proof system that uses algebraic circuits to refute the solvability of unsatisfiable systems of polynomial equations. One potential drawback of IPS is that verifying an IPS proof is only known to be doable using Polynomial Identity Testing (PIT), which is solvable by a randomized algorithm, but whose derandomization, even into $\cc{NSUBEXP}$, is equivalent to strong lower bounds. However, the circuits that are used in IPS proofs are not arbitrary, and it is conceivable that one could get around general PIT by leveraging some structure in these circuits. This proposal may be even more tempting when IPS is used as a proof system for Boolean Unsatisfiability, where the equations themselves have additional structure.

Our main result is that, on the contrary, one cannot get around PIT as above: we show that IPS, even as a proof system for Boolean Unsatisfiability, can be p-simulated by a deterministically verifiable (Cook--Reckhow) proof system if and only if PIT is in $\cc{NP}$. We use our main result to propose a potentially new approach to derandomizing PIT into $\cc{NP}$.

\end{abstract}

\section{Introduction} \label{sec:intro}
Proof complexity studies the lengths of proofs in various formal proof systems, and sits at the intersection of algorithms, logic, and computational complexity. It is often the case that we can view the run of an algorithm as a proof of its output; if we can then formalize a proof system in which such proofs live, then lower bounds on the size of proofs in that proof system imply lower bounds on the complexity of the algorithm (runtime, space, etc.). 

One of the powerful features of proof complexity is that, once such a proof system is identified, it is often the case that the proof system captures not just one algorithm, but a \emph{whole family} of algorithms. For example:
\begin{itemize}
\item the Resolution proof system \cite{robinson} for Boolean formulas captures all algorithms for Boolean Satisfiability in the DPLL family; 

\item the Cutting Planes proof system \cite{CCT} (originating in the cutting planes method by Gomory \cite{gomory1963} and Chv\'{a}tal \cite{chvatal1973}) captures a family of branch-and-bound algorithms based on certain tactics in Integer Linear Programming; 

\item and the Sum-of-Squares (or Positivstellensatz) proof system \cite{GV} captures algorithms based on Semi-Definite Programming relaxations of integer optimization problems. 
\end{itemize}
An example most relevant for us is:
\begin{itemize}
\item the Polynomial Calculus proof system \cite{CEI} captures algorithms that are based on manipulation of polynomials, such as Gröbner basis algorithms.
\end{itemize}

It is natural to ask how such algebraic proof systems relate to more standard Boolean proof systems. Classical line-by-line derivations of Boolean tautologies based on the \emph{modus ponens} rule (from $A$ and $A \Rightarrow B$, derive $B$) form the Frege family of proof systems. If we restrict each line to obey the syntactic restriction associated to some circuit class $\mathcal{C}$, we get so-called $\mathcal{C}$-Frege; for example, when $\mathcal{C}=\cc{AC}^0$, each line must be expressed as a formula of constant depth. When $\mathcal{C}=\cc{AC}^0[p]$, the corresponding proof system is closely related to algebraic proof systems. Indeed, an additional impetus to the development of such systems was to get $\cc{AC}^0[p]$-Frege lower bounds, a problem which not only remains open to this day (unlike $\cc{AC}^0[p]$ circuit lower bounds \cite{Razborov, Smolensky}), but for which we have little formal evidence that its resolution should be difficult. When $\mathcal{C}=\cc{P/poly}$, we get the Extended Frege proof system, which can work with each line of the proof expressed as an arbitrary Boolean circuit. Among the well-studied proof systems for Boolean tautologies, Extended Frege is perhaps the most powerful. 

In part to seek a new route to proof complexity lower bounds, in 2014 Grochow and Pitassi introduced the Ideal Proof System (IPS) \cite{GrochowPitassi}, which is at least as strong as Extended Frege.

\begin{definition}[{Ideal Proof System, Grochow \& Pitassi \cite{GrochowPitassi} (cf. \cite{Pitassi96, Pitassi98})}]
Let $R$ be a ring. An \emph{IPS$_R$ certificate} that a polynomial $G(\overline{x})$ is in the ideal generated by polynomials $F_1(\overline{x}), \dotsc, F_m(\overline{x})$ is a polynomial $C(\overline{x}, \overline{y})$ over $R$ such that
\begin{enumerate}
\item $C(\overline{x}, \overline{F}(\overline{x})) = G(\overline{x})$, and
\item $C(\overline{x}, \overline{y})$ is in the ideal $\langle y_1, \dotsc, y_m \rangle \subseteq R[\overline{x}, \overline{y}]$; equivalently, $C(\overline{x}, \overline{0})=0$.
\end{enumerate}
An \emph{IPS proof} or \emph{IPS derivation} of $G$ from $\{F_1, \dotsc, F_m\}$ is an $R$-algebraic circuit computing an IPS certificate, and an \emph{IPS refutation} (of the solvability) of $\{F_1, \dotsc, F_m\}$ is an IPS proof of the constant polynomial $1$ from $\{F_1, \dotsc, F_m\}$. (We sometimes omit the subscript $_R$ when it is clear from context.)
\end{definition}

\newcommand{\alg}{\text{alg}}
IPS can be used as a proof system for unsatisfiable Boolean CNFs (the \lang{UNSAT} problem) as follows. Boolean formulas are translated into systems of polynomial equations as follows:
\[
\alg(x) := x \qquad \alg(\neg x) := 1-x \qquad \alg(\varphi \wedge \psi) := \alg(\varphi) \cdot \alg(\psi)
\]
and then to claim that a Boolean formula $\varphi$ is satisfied, we include the polynomial $1-\alg(\varphi)$ in our initial set of polynomials. The remaining initial polynomials are the ``Boolean axioms'' $x_i^2 - x_i$, for each variable $x_i$, as setting these to zero enforces that in any potential solution, each $x_i$ must be $\{0,1\}$-valued. When comparing IPS to Boolean proof systems such as $\cc{AC}^0[p]$-Frege or Extended Frege, it is typically this particular application of IPS as a proof system for \lang{UNSAT} that is used.

A key conceptual advantage of IPS is that proofs are \emph{just algebraic circuits}, so that one might hope to use algebraic circuit lower bounds techniques to get lower bounds on proof systems. Indeed, this was achieved for several restricted versions of IPS by Forbes, Shpilka, Tzameret, and Wigderson \cite{FSTW}, Andrews \& Forbes \cite{AndrewsForbes}, and Govindasamy, Hakoniemi, and Tzameret \cite{GHT}. These lower bounds were not for IPS as a proof system for \UNSAT, but rather on restricted versions of IPS as a proof system for:

\algprob{\lang{Variety Emptiness}\footnote{In the literature sometimes referred to as the computational problem \lang{Hilbert's Nullstellensatz}; we prefer the name \lang{Variety Emptiness} because the name is more descriptive, and less prone to confusion with the theorem by the name of Hilbert's Nullstellensatz.} over a field $\F$}
{A set $\mathcal{F}$ of polynomials over $\F$}
{Is $\{F=0 : F \in \mathcal{F}\}$ unsolvable over the algebraic closure $\overline{\F}$?}
{Equivalently, is the variety $Z(\mathcal{F})$ empty?}

However, one drawback of IPS is that, precisely because of its use of algebraic circuits, proofs are \emph{not} known to be verifiable deterministically in polynomial time, in contrast to the other line-by-line proof systems mentioned above. Over fields, IPS proofs can nonetheless be verified in randomized polynomial time \cite{Pitassi96, GrochowPitassi}, using Polynomial Identity Testing---known to be in $\cc{coRP}$ \cite{DML, Schwartz, Zippel}---on the IPS derivation:

\algprob{\lang{Polynomial Identity Testing (\PIT)}}
{A field $\F$ and an algebraic circuit $C$ over $\F$}
{Does $C$ compute the identically zero polynomial?}
{Equivalently, is $C(\overline{\alpha})=0$ for all $\overline{\alpha} \in \overline{\F}^n$?}

Some restricted subsystems of IPS can be verified by restricted versions of \PIT that are known to be in $\cc{P}$, whereby those subsystems of IPS are deterministically verifiable. This, along with showing that the Raz--Shpilka \PIT algorithm for non-commutative formulas \cite{RazShpilka} could be formalized in Frege, allowed Li, Tzameret, and Wang to show that non-commutative formula IPS was quasi-polynomially equivalent to Frege \cite{LTW}. But in general we do not know whether the use of \PIT for verifying IPS proofs can be circumvented.

This paper is about precisely this issue. Derandomizing \PIT---even getting it into $\cc{NSUBEXP}$---is essentially equivalent to certain flagship lower bounds \cite{KI, CIKK, JansenSanthanam}. Thus, while it is widely believed that $\PIT$ \emph{can} be derandomized (and, consequently, that IPS would be deterministically verifiable), doing so involves resolving a major open lower bounds question.

Here we come to the first question answered by our main results. It seems intuitively clear that one cannot verify an IPS certificate without using \PIT, but 
\begin{quotation}
\noindent \textbf{Question 1:} Is it possible that IPS certificates are ``special'' in some way that makes \PIT for those circuits easier?
\end{quotation}
Indeed, already in \cite[Sec.~6]{GrochowPitassi} it was observed that the IPS derivations for given polynomials $G$ and $F_1, \dotsc, F_m$ form a coset of an ideal, and for IPS refutations (when $G=1$), that coset is even closed under multiplication. This is quite a lot of structure! Is there some additional structure to IPS certificates that makes \PIT easier? Our main result implies an essentially negative answer to Question 1. (We discuss the ``essentially'' in Sections~\ref{sec:results} and \ref{sec:open}.)

\newcommand{\pflength}{\text{proof-len}}
The second question answered by our main results is about the relationship between IPS and deterministically verifiable (a.k.a. Cook--Reckhow) proof systems. To state this question we need the notion of p-simulation. Given a proof system $\mathcal{P}$ for a language $L$, let $\pflength_{\mathcal{P}}(x)$ denote the length of the shortest $\mathcal{P}$-proof that $x \in L$ (we take this to be infinite for $x \notin L$). Then we say $\mathcal{P}$ \emph{p-simulates} another proof system $\mathcal{Q}$ for $L$ if, for every $x \in L$, 
\[
\pflength_{\mathcal{P}}(x) \leq \poly(\pflength_{\mathcal{Q}}(x)).
\]
That is, if what we care about is the distinction between polynomial-length versus super-polynomial-length proofs, then $\mathcal{P}$ is at least as powerful as $\mathcal{Q}$. 

In an early version of \cite{GrochowPitassi}, we had said (paraphrasing) ``Surely nothing in our paper has gone beyond ZFC, so IPS is p-simulated by ZFC.'' It was pointed out to us by Pavel \Hrubes and an anonymous reviewer that this was not so obvious. That feedback was in fact the genesis of this paper (see \hyperref[sec:origin]{Origin of the paper}, below), and brings us to the second question, which we answer (in the negative):
\begin{quotation}
\noindent \textbf{Question 2:} Is it possible for IPS to be p-simulated by \emph{some} Cook--Reckhow proof system, without derandomizing \PIT?
\end{quotation}

For IPS derivations in general, Alekseev, Grigoriev, Hirsch, and Tzameret \cite[Footnote~5]{AGHT} observed that \PIT can be solved by the \lang{IPS Verification} problem:

\algprob{\lang{IPS Verification}}
{An algebraic circuit $C$, and polynomials $G, F_1, \dotsc, F_m$}
{Is $C$ an IPS certificate that $G \in \langle F_1, \dotsc, F_m \rangle$?}{}
Their observation is that a polynomial $G$ computed by a circuit $C$ is derivable from the zero polynomial, if and only if $G$ is itself the zero polynomial, that is, iff $C \in \PIT$. Put another way, IPS certificates that derive 0 from 0 are precisely the same thing as circuits that compute the zero polynomial, and thus $\PIT \leq_m^p \lang{IPS Verification}$. Since $\lang{IPS Verification}$ can be solved by $\PIT$, we have $\lang{IPS Verification} \equiv_m^p \PIT$ (many-one,\footnote{The straightforward use of \PIT to verify an IPS proof is to query $C(\overline{x}, \overline{0})$ and $G(\overline{x})-C(\overline{x}, \overline{F}(\overline{x}))$. To get this to be many-one, we instead use the AND-function for \PIT: introduce two new variables $z_1, z_2$ and ask whether $z_1 C(\overline{x}, \overline{0}) + z_2(G(\overline{x})-C(\overline{x}, \overline{F}(\overline{x})))$ is identically zero.} polynomial-time equivalence), so one can be derandomized iff the other one can.

However, this observation does not seem to tell us much about \lang{IPS Verification for Variety Emptiness}, namely, instances of \lang{IPS Verfication} where $G=1$ (and the system of equations $F_1=\dotsb=F_m=0$ is unsatisfiable over the algebraic closure $\overline{\F}$), let alone about whether IPS can be p-simulated by a Cook--Reckhow proof system. Indeed, the observation above is about using IPS as a proof system for the language ``Can 0 be derived from 0?'', which has a trivial and efficient Cook--Reckhow proof system, even though IPS for that language is not Cook--Reckhow unless $\PIT \in \cc{P}$. 

Further, from the complexity point of view, showing that a randomized proof system $\mathcal{P}$ is p-simulated by a Cook--Reckhow system is nearly as good as showing that $\mathcal{P}$ itself is Cook--Reckhow. For example, if $\mathcal{P}$ is a randomized proof system for a $\cc{coNP}$-complete language, and $\mathcal{P}$ is p-simulated by a Cook--Reckhow system, then $\mathcal{P}$ being p-bounded still implies $\cc{NP} = \cc{coNP}$.

In the case of \lang{Variety Emptiness} (a $\cc{coNP}$-hard language that is in $\cc{PSPACE}$ in general \cite{ierardi}, and in $\cc{AM}$ in characteristic zero assuming the Generalized Riemann Hypothesis \cite{koiran}), one may wonder whether deriving $1$ from a system of polynomial equations forces enough structure on an IPS certificate to avoid needing the full strength of \PIT for verification. In the case of Boolean \UNSAT, one may wonder further if the additional structure coming from the fact that the polynomials $F_i$ are the degree-3 translations of width-3 CNF clauses, together with the Boolean axioms, is enough to avoid the necessity of \PIT. 

\subsection{Main results, and an approach to derandomzing \PIT} \label{sec:results}
Our main results are that, in both of the above settings, not only does p-simulating IPS by a Cook--Reckhow system imply that \PIT is in $\cc{NP}$, but in fact the two are equivalent. 

\begin{mainpoly} 
\PIT[$\F$] is in $\cc{NP}$ if and only there exists a Cook--Reckhow proof system that p-simulates IPS$_\F$ for \lang{Variety Emptiness} over $\F$.
\end{mainpoly}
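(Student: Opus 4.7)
The plan is to prove both directions, with the non-trivial content being a reduction that embeds arbitrary $\PIT$ instances into \lang{Variety Emptiness} while keeping the corresponding IPS refutation trivial.

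For the easy direction, assuming $\PIT[\F] \in \cc{NP}$, I would define a Cook--Reckhow system $\mathcal{P}$ in which a proof of the \lang{Variety Emptiness} of $\{F_1, \ldots, F_m\}$ is a triple $(C, w_1, w_2)$, where $C$ is an IPS$_\F$ certificate and $w_1, w_2$ are $\cc{NP}$-witnesses certifying the two polynomial identities required to verify $C$ --- namely $C(\overline{x}, \overline{0}) \equiv 0$ and $C(\overline{x}, \overline{F}(\overline{x})) - 1 \equiv 0$ --- which are $\PIT$ queries on circuits of size polynomial in $|C| + \sum_i |F_i|$. The $\mathcal{P}$-verifier simply runs the two deterministic $\cc{NP}$-verifiers, so $\mathcal{P}$ is Cook--Reckhow; and since polynomial-size witnesses exist by hypothesis, $\pflength_{\mathcal{P}} \leq \poly(\pflength_{\text{IPS}_\F})$. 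Note that the notion of p-simulation used here only requires existence of a short $\mathcal{P}$-proof, not an efficient conversion from the IPS proof, so we need not worry about finding $w_1, w_2$ algorithmically.

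For the hard direction, let $\mathcal{P}$ be a Cook--Reckhow system p-simulating IPS$_\F$ for \lang{Variety Emptiness}. The reduction I would use is: given a circuit $C(\overline{x})$, introduce a fresh variable $z$ and consider the single-polynomial system $\{1 + z \cdot C(\overline{x})\}$. If $C \equiv 0$ as a polynomial, then $1 + zC$ is identically $1$, so the variety is empty and the IPS certificate $\pi(\overline{x}, z, y) := y$ of constant size is a valid refutation: $\pi|_{y=0} = 0$, and $\pi|_{y = 1+zC} = 1 + zC \equiv 1$. If on the other hand $C \not\equiv 0$, then choosing $\overline{\alpha} \in \overline{\F}^n$ with $C(\overline{\alpha}) \neq 0$ and $z = -1/C(\overline{\alpha})$ yields a solution, so the variety is non-empty. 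By p-simulation, in the $C \equiv 0$ case there is a $\mathcal{P}$-proof of size $\poly(|C|)$, and since $\mathcal{P}$ is Cook--Reckhow, the nondeterministic algorithm that guesses and verifies such a proof decides $\PIT[\F]$, placing it in $\cc{NP}$.

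The place where one might worry the argument could fail is whether restricting to $G = 1$ --- that is, to \lang{Variety Emptiness} rather than to general ideal membership as in the Alekseev--Grigoriev--Hirsch--Tzameret observation --- grants the instance enough structure to preclude our reduction. The reduction $C \mapsto \{1 + zC\}$ depends only on the elementary fact that a nonzero polynomial is nonzero at some point of $\overline{\F}^n$, and produces an $O(1)$-size IPS refutation in the YES case, so no such obstruction arises. Beyond this, both directions amount to routine bookkeeping about proof length and Cook--Reckhow verification.
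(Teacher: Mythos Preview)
Your easy direction matches the paper's. The hard direction has the right core trick --- the paper also adds an equation $1 - zC = 0$ to force invertibility of the output --- but there is a representation gap that your reduction does not close.

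The issue is how instances of \lang{Variety Emptiness} are encoded. The paper's convention (stated in the preliminaries, and consistent with the cited $\cc{PSPACE}$/$\cc{AM}$ upper bounds of Ierardi and Koiran) is that the input polynomials are given as explicit sparse lists of monomials. Under that convention your single-polynomial system $\{1 + z\,C(\overline{x})\}$ is not a polynomial-size instance: a size-$s$ circuit $C$ can compute a polynomial with $2^{\Omega(s)}$ monomials, so writing out $1+zC$ may take exponential space. You then cannot bound the instance size, hence cannot bound $\pflength_{\text{IPS}}$ or $\pflength_{\mathcal{P}}$ by $\poly(|C|)$, and the $\cc{NP}$ algorithm has nothing of polynomial length to guess.

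The paper's fix is precisely the gate-by-gate encoding of \Cref{lem:algebraic}: introduce a fresh variable $g_v$ for every gate $v$, add one degree-$\leq 2$ equation per gate (e.g.\ $g_v - g_u g_w$ for a product gate, $g_v - \sum_i \alpha_i g_{u_i}$ for a linear combination), and replace your final equation by $1 - z g_{\text{out}}$. Now $\mathcal{F}_C$ has total sparse size $O(|C|)$. The price is that the IPS refutation is no longer the trivial $y$: one shows by induction on the circuit that IPS derives $g_v - f_v$ for every gate $v$ (where $f_v$ is the polynomial computed at $v$), ending with $g_{\text{out}} - 0 = g_{\text{out}}$ and then $z\cdot g_{\text{out}} + (1 - z g_{\text{out}}) = 1$. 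This derivation has size $O(|C|)$, which is what the p-simulation step needs. So the gate-by-gate encoding is not mere bookkeeping; it is the substantive step that makes the reduction land inside the sparse-polynomial formulation of \lang{Variety Emptiness}.

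If instead \lang{Variety Emptiness} were defined with input polynomials given as circuits, your argument would go through verbatim and would be strictly simpler than the paper's. But then the forward implication you obtain is formally weaker: since every sparse instance embeds as a circuit instance of comparable size, assuming a Cook--Reckhow system p-simulates IPS on the circuit-encoded language is a stronger hypothesis on $\mathcal{P}$ than assuming it does so on the sparse-encoded language.
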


The preceding result in fact works for arbitrary fields $\F$ if we replace $\cc{NP}$ by $\cc{NP}_{\F}$ in the Blum--Shub--Smale model \cite{BSS} over $\F$. However, when $\F$ is a finite field we have $\cc{NP}_{\F} \equiv_m^p \cc{NP}$, and when $\F$ is an algebraic number field and we measure IPS size by total bit-size, the result still holds with the usual Boolean definition of $\cc{NP}$. In the following result, we restrict our fields to finite fields or the rationals, and measure IPS size by total bit-size.

\begin{mainbool}
If there is a Cook--Reckhow proof system that p-simulates IPS for Boolean \UNSAT over fields of size $\leq \poly(q, 2^n)$ (where $n$ is the size of the CNF), then \PIT is in $\mathsf{NP}$ for circuits over $\Q$ or over finite fields of size at most $q$.

Conversely, for any field $\F$ if \PIT[$\F$] is in $\mathsf{NP}$, then there is a Cook--Reckhow proof system that p-simulates IPS$_\F$ for \UNSAT.
\end{mainbool}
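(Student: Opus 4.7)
The converse direction is essentially immediate. If $\PIT[\F] \in \cc{NP}$, let $V$ be a polynomial-time checker for NP-witnesses of PIT. Define a Cook--Reckhow proof system $\mathcal{P}$ whose proofs are pairs $(\pi, w)$, where $\pi$ is an IPS proof (an algebraic circuit computing a certificate) and $w$ is an NP-witness for the single PIT instance arising from the many-one combination of $C(\overline{x},\overline{0})$ and $G(\overline{x}) - C(\overline{x},\overline{F}(\overline{x}))$ (via the AND-function trick already noted in the introduction). The $\mathcal{P}$-verifier runs $V$ on this constructed PIT circuit, accepting in deterministic polynomial time, and $|w| \leq \poly(|\pi|)$ by the NP hypothesis, so $\mathcal{P}$ p-simulates IPS.

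For the forward direction, the plan is to reduce $\PIT[\F]$ to Boolean \UNSAT in a way that preserves short IPS proofs on yes-instances. Given a PIT circuit $C$ over $\F$ (with $\F = \F_{q'}$ for some $q' \leq q$, or $\F = \Q$), I would construct a CNF $\varphi_C$ of size $n = \poly(|C|, \log q)$ together with a working field $\F' \supseteq \F$ of size $\leq \poly(q, 2^n)$ such that (i) $\varphi_C$ is \UNSAT iff $C \equiv 0$, and (ii) when $C \equiv 0$ there is a $\poly(|C|)$-size IPS$_{\F'}$-refutation of $\varphi_C$. Given (i) and (ii), the hypothesized p-simulation produces a $\poly(|C|)$-size, deterministically verifiable $\mathcal{P}$-proof of $\varphi_C$, which is precisely an NP-witness that $C \equiv 0$, placing $\PIT[\F]$ in $\cc{NP}$. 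For the construction: take $\F' = \F_{q'^k}$ with $q'^k > \deg(C)$ when $\F$ is finite (so $|\F'| \leq q \cdot 2^{|C|} \leq \poly(q, 2^n)$); represent every input and gate value of $C$ as a bitstring of length $\lceil \log |\F'| \rceil$; and use Cook--Levin-style clauses to enforce $\F'$-arithmetic at each gate, together with a final clause asserting that the output gate's bits are not all zero. Then $\varphi_C$ is \UNSAT iff $C$ vanishes on all of $(\F')^n$, which by Schwartz--Zippel (since $|\F'| > \deg C$) is equivalent to $C \equiv 0$ as a formal polynomial. The case $\F = \Q$ is handled similarly, either reducing to a suitable $\F_p$ or working directly over $\Q$ with the bit-size measure (as in Theorem~A).

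The main obstacle will be step (ii): producing the $\poly(|C|)$-size IPS refutation when $C \equiv 0$. The plan is to mirror the circuit structure of $C$ inside the IPS certificate. For each gate $g_i$ of $C$, derive from the gate- and Boolean-axioms the polynomial identity $F'(G_{i,*}) = C_i(\overline{x}_{\text{field}})$, where $F'$ is the injection sending a bitstring to its $\F'$-element, $G_{i,*}$ is the Boolean bit-tuple for gate $g_i$, $\overline{x}_{\text{field}}$ is the substitution sending each Boolean input to its $\F'$-element, and $C_i$ is the sub-circuit at $g_i$. Composing up the DAG of $C$ yields $F'(G_{\text{out},*}) = C(\overline{x}_{\text{field}})$, which is the zero polynomial because $C \equiv 0$. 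Modulo the Boolean axioms, injectivity of $F'$ on $\{0,1\}^\ell$ then places each $G_{\text{out},j}$ in the ideal of the gate- and Boolean-axioms; combined with the nonzero-output axiom $F_{\text{nz}} = \prod_j (1 - G_{\text{out},j})$, this yields an IPS refutation of the form $1 = F_{\text{nz}} + \sum_i h_i F_i$, with all $h_i$ and $F_{\text{nz}}$ computable by circuits of total size $\poly(|C|)$. The delicate point is making the bit-to-field translation and the gate-by-gate derivation fit inside polynomial-size IPS certificates; here one leverages that IPS p-simulates Extended Frege \cite{GrochowPitassi}, so the Boolean bookkeeping required to manage the bit-level encoding can be done compactly within IPS.
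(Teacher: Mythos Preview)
Your converse direction is correct and matches the paper. Your forward direction has the right architecture---construct a CNF $\varphi_C$ that is \UNSAT iff $C\equiv 0$, exhibit a short IPS refutation when $C\equiv 0$, then invoke the p-simulation---and your gate-by-gate derivation of $F'(G_{i,*}) = C_i(\overline{x}_{\text{field}})$ is essentially the paper's combination of Lemma~\ref{lem:algebraic} with the mod-$p$ binary value principle (Lemma~\ref{lem:binvalp}).

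The genuine gap is in your final step. You write that ``injectivity of $F'$ on $\{0,1\}^\ell$ then places each $G_{\text{out},j}$ in the ideal.'' But $F' = \VAL_p$ is \emph{not} injective on $\{0,1\}^{b_p}$: since $b_p = \lfloor \log_2 p\rfloor + 1$, there are $2^{b_p} > p$ bitstrings, so distinct Boolean vectors collide mod $p$ (e.g.\ over $\F_3$ with two bits, $y_0+2y_1=0$ and $y_j^2=y_j$ admit $(1,1)$ as well as $(0,0)$). Hence from $\sum_j 2^j G_{\text{out},j} \equiv 0 \pmod p$ and the Boolean axioms alone, IPS cannot derive each $G_{\text{out},j}$---that implication is simply false. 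You might hope to rescue this by appealing to the fact that your gate encoding normalizes outputs to $\{0,\dotsc,p-1\}$ via a remainder circuit; but then you need IPS to \emph{prove} the inequality $\VAL_+(\overline{G_{\text{out}}}) < p$, and whether IPS can reason about inequalities is precisely the open question of whether IPS is equivalent to the Cone Proof System (see the proof idea of Lemma~\ref{lem:rem}). So your axiom $F_{\text{nz}}=\prod_j(1-G_{\text{out},j})$ leaves you stuck.

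The paper avoids this entirely: rather than asserting ``the output bits are not all zero,'' it asserts ``the output is invertible'' by adding a fresh \emph{algebraic} variable $z$ and the equation $1 - z g_C = 0$ to the system $\mathcal{F}_C$ (Lemma~\ref{lem:algebraic}), and only then Booleanizes. The IPS refutation then works purely at the field level: derive $g_C - f_C = g_C$ (since $f_C=0$), then $z\cdot g_C + (1 - z g_C) = 1$. No bit extraction from a vanishing linear form is needed. A second, smaller issue: your treatment of $\F=\Q$ (``or working directly over $\Q$'') does not work, because intermediate values in a size-$n$ circuit over $\Q$ can have bit-length $2^n$ (e.g.\ $x^{2^n}$ by repeated squaring); the paper's Lemma~\ref{lem:field} handles this by passing to $\F_p$ for a prime $p$ with $\log p = \poly(n)$.
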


Unlike the case of IPS for \lang{Variety Emptiness}, in our result for IPS for \UNSAT, there is a gap in the forward direction between the field IPS is over and the field \PIT is over. We discuss the source of this gap and the difficulty of closing it in \Cref{rmk:field} and \Cref{sec:open}.

The forward direction of our results, in combination with results of Grochow \& Pitassi \cite{GrochowPitassi}, suggest a potentially new approach for showing that \PIT is in $\cc{NP}$. Namely, for any sufficiently powerful proof system $\mathcal{P}$ (say, above $\cc{AC}^0$-Frege) they showed  that if there is a family of Boolean circuits $K$ solving \PIT[$\F$], and such that the so-called ``PIT axioms for $K$'' (which they introduce, and we recall in \Cref{sec:pit}) have short $\mathcal{P}$-proofs, then IPS$_\F$ is p-simulated by $\mathcal{P}$. If $\mathcal{P}$ is Cook--Reckhow, then by our results this would put \PIT into $\cc{NP}$. 

One feature we find interesting about this proposal is that, unlike proposals based on uniform pseudo-random generators, or uniform algorithms for special cases of \PIT, our proposal is almost entirely non-uniform. All the uniformity has been ``swept under the rug'' into the verifier for $\mathcal{P}$. If one can merely show that such circuits $K$ \emph{exist}, and such $\mathcal{P}$-proofs of the PIT axioms for $K$ \emph{exist}, then it implies the uniform conclusion that \PIT is in $\cc{NP}$.

Without the details of the PIT axioms, one might think this is a triviality. Namely, if there were polynomial-size circuits $K$ for \PIT, and polynomial-size proofs that those circuits were correct, then to solve \PIT in $\cc{NP}$, the nondeterministic machine can guess both the circuit $K$ and its proof of correctness, and then use $K$ to solve \PIT (cf. $\cc{CertP/poly}$ \cite{certppoly}). A crucial aspect of this proposal, therefore, is how relatively \emph{weak} the PIT axioms are compared to fully proving that a circuit $K$ correctly solves \PIT. 

We will discuss this in more detail in \Cref{sec:pit}, but for now we highlight one important aspect. Three of the four PIT axioms feel routine, and we expect would be easy to prove for circuits $K$ solving \PIT. The trickiest, and likely strongest, of the PIT axioms is the implication that if $K$ says an algebraic circuit $C$ is identically zero, then it should also say that $C$, when Boolean variables $\overline{p}$ are plugged in for its algebraic variables, is still zero, that is, that $C$ evaluates to zero on the entire Boolean cube. Mathematically this is a triviality, but the point is that the proof system cannot enumerate over the Boolean cube to prove it, as such a proof would necessarily have exponential size. Instead the proof must work on the resulting tautology where $\overline{p}$  are left as variables.

\subsection{Proof technique for the main results}
The converse directions of both results go back to \cite{Pitassi96, GrochowPitassi}, as they follow just because IPS$_\F$ verification can be done with \PIT[$\F$]. Here we outline our proofs for the forward directions.

\begin{proof}[Proof outline for \Cref{thm:mainpoly}]
For the forward direction, we start with an algebraic circuit $C$ over a field $\F$, and want to test (nondeterministically) whether $C$ is the identically zero polynomial. For IPS for \lang{Variety Emptiness}, the idea is to encode $C$ into a system of polynomial equations gate by gate, e.g. for the multiplication gate $v=u \times w$, we use the equation:
\[
g_v - g_u \cdot g_w=0
\]
where $g_v$, etc., are new variables. We then add one new variable $z$ and the additional equation 
\[
1-zC=0,
\]
whose solutions enforce that the output of $C$ is invertible (\Cref{lem:algebraic}, which actually works over an arbitrary ring). 

The key is to show that IPS has a short refutation of this system of equations. If $f_v$ is the polynomial computed at the gate $v$, we show by induction on the structure of $C$ that IPS can derive $g_v - f_v$ for every gate $v$. If IPS for \lang{Variety Emptiness} is p-simulated by a Cook--Reckhow system $\mathcal{P}$, then the nondeterministic algorithm is to deterministically produce the preceding equations, and then guess the short $\mathcal{P}$-refutation of them and verify it. 
\end{proof}

In addition to proving our first result, \Cref{lem:algebraic} will also play a role in the proof for IPS for Boolean \UNSAT.

\begin{proof}[Proof outline for \Cref{thm:main}]
For IPS for Boolean \UNSAT, we ultimately need to encode the preceding system of polynomial equations into a Boolean CNF, that IPS can still efficiently reason about. Our strategy is as follows. First (\Cref{lem:field}), we find a field $\K$ such that $C$ computes a nonzero \emph{function} over $\K$ iff $C$ was not the identically zero polynomial. If the original field $\F$ was a finite field, then we may take $\K$ to be an extension field whose degree is sufficiently large (larger than $\log_2$ of the syntactic degree of $C$ suffices, so, large enough, but still polynomial in the input size). When the original field $\F$ is the rationals, we may take $\K$ to be $\Z/p\Z$ for a sufficiently large prime $p$---again, we find that there exists a $p$ of polynomial bit-size that suffices. In either case, $\K$ is a finite field.

Then we build the equations above, as in the case of \lang{Variety Emptiness}, but now thought of as algebraic circuits over the field $\K$. We then encode these equations as Boolean circuits. Moving from $\K$ to the prime field $\F_p \subseteq \K$ uses the standard trick of viewing $\K$ as a vector space over $\F_p$; \Cref{lem:extension} ensures that IPS$_\K$ can efficiently recover the original equations over $\K$ from the new equations over $\F_p$. (Note that if $[K : \F_p] = e$, then there are $e$ times as many equations over $\F_p$ than over $\K$, in $e$ times as many variables.)  

To go from these algebraic circuits over prime fields to Boolean CNFs, we now encode arithmetic over $\Z/p\Z$ in a Boolean circuit. The key here, again, is that from one of the Boolean circuits we build, IPS$_{\F_p}$ can efficiently reconstruct the original polynomial over $\F_p$. Here we must work in a binary encoding, not unary as in Santhanam \& Tzameret \cite{SanthanamTzameret}, as $|\F_p|$ could be exponential in $n$ (e.g., in the case we started over $\Q$). Much of the work to encode binary arithmetic into Boolean circuits that can be reasoned about by IPS was done in \cite{AGHT}, for circuits over $\Z$. Our main contribution here (\Cref{lem:rem}) is to extend this with a remainder-modulo-$p$ Boolean circuit $\REM_p$, such that IPS (in characteristic $p$) can efficiently prove that the input and output of $\REM_p$ are two bit-strings that encode equal elements of $\F_p$. This then lets us prove \Cref{lem:binvalp}, which is a mod-$p$ version of Alekseev \emph{et al.}'s binary value principle. 

The unsatisfiable CNF we produce from $C$ is thus the end of all these procedures: 
\begin{enumerate}
\item Find $\K$, 
\item build the equations as above over $\K$, 
\item encode them as more equations in more variables over the prime field $\F_p \subseteq \K$, and  
\item then encode \emph{those} equations using Boolean circuits, and 
\item finally use the usual translation from circuits to CNFs. 
\end{enumerate} 
Now, suppose IPS$_\K$ is p-simulated by a Cook--Reckhow system $\mathcal{P}$. Then the $\cc{NP}$ algorithm is to guess the $\mathcal{P}$-refutation of the resulting CNF (which is unsatisfiable iff $C$ computes the identically zero polynomial). 

The key is to show that IPS, and hence $\mathcal{P}$, has polynomial-size refutations of this CNF. And for this, we basically read the above paragraphs in reverse order. From the CNF, IPS (over any ring $R$) can reconstruct the Boolean circuits (\Cref{lem:circuit-sat}). From the Boolean circuits, the mod-$p$ binary value principle (\Cref{lem:binvalp}) tells us that IPS over any field of characteristic $p$ can efficiently reconstruct the equations over $\F_p$. \Cref{lem:extension} then ensures that IPS$_\K$ can efficiently reconstruct the equations over $\K$, and finally \Cref{lem:algebraic} says that IPS$_\K$ has a short refutations of the equations over $\K$. This completes the outline of the proof.
\end{proof}

\begin{remark} \label{rmk:field}
It is the use of the field $\K$ that is responsible for the size bounds of $2^{\poly(n)}$ in the statement of \Cref{thm:main}, and why it is only ``essentially'' an equivalence, since the field $\F$ over which we put \PIT into $\cc{NP}$ is not always the same as the field $\K$ for which IPS$_\K$ for Boolean \UNSAT is p-simulated by a Cook--Reckhow system. Removing this ``essentially'' is an interesting question. Here we just highlight why it seems to be needed in our present proof. 

When $\F$ is a finite field, if we had not used the field $\K$, then it is possible that $C$ is the zero function over $\F$, while not being the zero polynomial, e.g. the polynomial $x^3 - x$ over $\F=\F_3$. In this case, it is possible that the equations that end with $1-zC=0$ are in fact satisfiable over an extension field of $\F$, even though they are unsatisfiable over $\F$ itself. When we translate to Boolean equations, because of the Boolean axioms $x_i^2 -x_i$, the Boolean equations only ``see'' the field $\F_p \subseteq \K$, so they would report that this system of equations was unsatisfiable, when in fact from the algebraic perspective they should be satisfiable, thus giving the wrong answer for \PIT.

When $\F=\Q$ is the rationals, a similar issue arises. Namely, to encode $C$ into Boolean circuits, we must limit the bit-size of the rationals that we consider. If we consider rationals of too small a bit-size, it is possible that $C$ is nonzero (as both a polynomial and a function over $\Q$), but that $C$ evaluates to zero on all inputs of small bit-size. In this case, again we have that the CNF sees an all-zero function, when we needed it to see that $C$ was nonzero. (Over $\Q$ there is also the issue that we can't choose a small enough bit-size to faithfully represent the function without moving to a finite field. For example, by repeated squaring, the polynomial $x^{2^n}$ has a circuit of size $n$, but even on input $2$, the bit-size needed for the output is $2^n$.)
\end{remark}

\section{Preliminaries}
\subsection{Proof complexity}
Let $\Sigma$ be a finite alphabet, $\Sigma^*$ the set of all finite words over $\Sigma$. A \emph{Cook--Reckhow proof system} for a language $L \subseteq \Sigma^*$ is a polynomial-time function $\mathcal{P}$ whose image is precisely all of $L$. One may think of the inputs to $\mathcal{P}$ as proofs, and the output of $\mathcal{P}$ as the statement proved by the proof. For $x \in L$, if $\mathcal{P}(\pi)=x$, $\pi$ is said to be a $\mathcal{P}$-proof that $x$ is in $L$. 

A \emph{probabilistic proof system} for a language $L$ (see, e.\,g., \cite[Def.~2.6]{GrochowPitassi}) is a probabilistic polynomial-time algorithm $\mathcal{P}$ such that (1) $\mathcal{P}(\pi) \in L$ for all $\pi$, and (2) there is a surjective function $f \colon \Sigma^* \to L$ such that, for all $\pi \in \Sigma^*$, $\mathcal{P}(\pi) = f(\pi)$ with probability at least $2/3$ over $\mathcal{P}$'s random choices.

A proof system $\mathcal{P}$ (Cook--Reckhow or probabilistic) for a language $L$ is \emph{polynomially bounded} or \emph{p-bounded} if there is a polynomial $p$ such that for all $x \in L$, there is a $\mathcal{P}$-proof $\pi$ that $x \in L$ with $|\pi| \leq p(|x|)$. If $L$ has a p-bounded Cook--Reckhow proof system then $L$ is in $\cc{NP}$ \cite{CookReckhow}; if $L$ has a p-bounded probabilistic proof system then $L$ is in $\cc{MA}$ (e.\,g., \cite[Sec.~2]{GrochowPitassi}).

\subsection{Rings, fields, and Polynomial Identity Testing}
By ring we mean a commutative ring with unit; ring homomorphisms must send $1$ to $1$. If $R$ is a ring and $r_1,\dotsc,r_m \in R$, then the ideal they generate is denoted $\langle r_1, \dotsc, r_m \rangle$. When $m=1$, we may write the quotient ring $R / \langle r_1 \rangle$ by $R / r_1$. 

The characteristic of a field $\F$ is the smallest integer $n$ such that $1 + 1 + \dotsc + 1 = 0$ (adding $1$ to itself $n$ times), or $0$ if no such $n$ exists. The characteristic of a field is necessarily always 0 or a prime $p$. The \emph{prime field} of characteristic $p \geq 0$ is the unique smallest field of that characteristic: the rational field $\Q$ is the prime field of characteristic zero, and for $p > 0$ prime, the ring of integers mod $p$, $\Z/p\Z = \F_p$, is the prime field of characteristic $p$.

A field $\K$ is an extension of a field $\F$ of degree $e$ if $\F \subseteq \K$ and the dimension of $\K$ as an $\F$-vector space is $e$. In this case we may write $[\K : \F]=e$ for the degree of the extension. 
Unless otherwise specified, all fields we consider will be finite-degree extensions of the prime field of the same characteristic, that is, either finite fields or algebraic number fields (=finite degree extensions of $\Q$).

When we say a field $\F$ of characteristic $p$ is ``given'' (e.g., as input to a computational problem) we mean that $p$ is specified in binary, and the coefficient vector of an irreducible square-free polynomial $f$ over the prime field of characteristic $p$ such that $\F = \F_p[x] / f(x)$ if $p > 0$ or $\F=\Q[x] / f(x)$ if $p=0$. Once $\F$ is given by such a pair $(p,f)$, elements of $\F$ are given as vectors of dimension $\deg f$ over the prime field of characteristic $p$. In characteristic zero, rationals are specified in the usual way by a pair of integers written in binary, and in characteristic $p$, an element of the prime field $\F_p$ is specified by the usual binary description of one of the integers $\{0, \dotsc, p-1\}$. A polynomial $g$ over $\F$ is given by its list of nonzero coefficients (and the corresponding exponents of the associated monomial) unless otherwise specified.

Finally, we will need the following result.

\begin{lemma}[{Polynomial Identity Testing Lemma, \cite{DML, Schwartz, Zippel}\footnote{The finite field version of this result goes back to Ore \cite{ore}. A beautifully simple proof of the result over finite fields was given by Moshkovitz \cite{moshkovitz}.}}] \label{lem:pit}
Let $f$ be an $n$-variable polynomial over a field $\F$, and $S \subseteq \F$. If $f$ is not the zero polynomial, then
\[
\Pr_{x \in S^n}[f(x) = 0] \leq \frac{\deg f}{|S|}.
\]
\end{lemma}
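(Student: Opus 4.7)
The plan is to prove this by induction on the number of variables $n$. For the base case $n = 1$, $f$ is a nonzero univariate polynomial of degree $d := \deg f$ over $\F$. Iteratively factoring out linear factors (which is valid over any field) shows $f$ has at most $d$ roots, so drawing $x$ uniformly from $S$ yields $\Pr[f(x) = 0] \leq d/|S|$, as desired.

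For the inductive step, I would group $f$ by powers of the last variable:
\[
f(x_1, \ldots, x_n) = \sum_{i=0}^{k} x_n^i \, g_i(x_1, \ldots, x_{n-1}),
\]
where $k$ is the largest index with $g_i$ not identically zero. The key observation is that $g_k$ is a nonzero $(n-1)$-variable polynomial of degree at most $d - k$, since every monomial of $x_n^k g_k$ contributes to $f$ a monomial of that same total degree, which is bounded by $d$.

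Now draw $x \in S^n$ uniformly and let $E_1$ be the event that $g_k(x_1, \ldots, x_{n-1}) = 0$. Conditioning gives
\[
\Pr[f(x) = 0] \leq \Pr[E_1] + \Pr[f(x) = 0 \mid \overline{E_1}].
\]
By the inductive hypothesis applied to $g_k$, $\Pr[E_1] \leq (d-k)/|S|$. Conditioned on $\overline{E_1}$, fixing any assignment $x_1, \ldots, x_{n-1}$ with $g_k(x_1, \ldots, x_{n-1}) \neq 0$ turns $f$ into a univariate polynomial in $x_n$ of degree exactly $k$ with nonzero leading coefficient, so by the base case it vanishes on at most $k$ values in $S$, giving $\Pr[f(x) = 0 \mid \overline{E_1}] \leq k/|S|$. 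Summing the two bounds yields $d/|S|$ and closes the induction.

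There is no real obstacle here: this is the textbook Schwartz--Zippel--DeMillo--Lipton argument. The only bookkeeping to watch is the degree accounting $\deg g_k + k \leq d$, and verifying that after a substitution making $g_k$ nonzero the univariate polynomial $f(x_1,\ldots,x_{n-1},x_n)$ really does have degree $k$ in $x_n$ (which is immediate from $g_k$ being the leading coefficient in $x_n$, by choice of $k$).
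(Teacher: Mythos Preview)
Your argument is the standard, correct induction proof of the Schwartz--Zippel--DeMillo--Lipton lemma; the degree bookkeeping and the conditioning are handled properly. Note, however, that the paper does not actually supply a proof of this statement: it is quoted as a known result with citations to \cite{DML, Schwartz, Zippel} (and to Ore and Moshkovitz in the footnote), so there is no paper proof to compare against---your write-up simply fills in the classical argument that the paper takes for granted.
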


\subsection{Circuits}
An algebraic circuit $C$ over a ring $R$ is a directed acyclic graph in which each source is labeled by a variable $x_i$ (variables may appear multiple times) or an element of $R$, and each non-source vertex is labeled either as a multiplication gate, an inversion gate (of in-degree 1), or a linear combination gate. Multiplication gates have in-degree at most 2, while linear combination gates may have arbitrary in-degree. We call a gate $v$ \emph{syntactically constant} if every input that has a directed path to $v$ is a constant; in this case, the constant computed at $v$ is independent of the input, and we denote it $f_v$, for consistency with our notation for other gates. Division gates are only allowed when their denominator $v$ is syntactically constant and $f_v$ is invertible in $R$. The incoming edges to a linear combination gate are labeled by elements of $R$. Each gate computes a polynomial over $R$ in the following standard inductive way: input gates compute the polynomial corresponding to their label. For other gates $v$, let $f_v$ denote the polynomial computed at $v$. If $v = u \times w$ is a product gate, then $f_v = f_u \cdot f_w$. If $v$ is a linear combination gate with incoming edges from gates $u_1, \dotsc, u_\ell$ with corresponding constants $c_1, \dotsc, c_\ell$ on the edges, then $f_v = \sum_{i=1}^\ell c_i f_{u_i}$. If $v=1/u$ is an inversion gate, then $f_v = 1/f_u$ (recall such gates are only allowed when $u$ is syntactically constant and a unit in $R$). The polynomial(s) computed by $C$ are the list $(f_{v_1}, \dotsc, f_{v_k})$ where $v_1, \dotsc, v_k$ are the sink gates, also called output gates. 
 
 There are several measures of the ``size'' of a circuit. The \emph{depth} of a circuit $C$, denoted $\depth(C)$, is the longest direct path from any source (input) to any sink (output). We denote the number of edges of $C$ by $\wires(C)$ and the number of gates by $\gates(C)$. When we speak of the ``size'' of a circuit, we may mean wires or gates; up to polynomial factors the choice is immaterial. The \emph{product-depth} of $C$ is the maximum number of multiplication gates on any directed path from an input to an output, which we denote $\proddepth(C)$.
 
A circuit is \emph{constant-free} if the only constants from $\F$ used in the circuit are $\{0,1,-1\}$; other constants may be built up from these using gates. Note that our constant-free circuits still allow division gates by syntactically constant nonzero denominators; this is the same as \cite[Def.~14]{AGHT}. Over the rational numbers, constant-free circuits of polynomial size compute the same class of functions as circuits of polynomial total bit-size. Over any field $\F$, constant-free circuits can only compute polynomials over the prime field contained in $\F$, since they have no way to build constants that are outside of the prime subfield. 

\newcommand{\sdeg}{\text{sdeg}}
\begin{definition}[Syntactic degree]
The \emph{syntactic degree} $\sdeg(C)$ of a circuit $C$ is defined recursively as follows: 
\[
\sdeg(v) = \begin{cases}
1 & \text{ if $v$ is a variable or constant input gate} \\
\max\{\sdeg(v_i) : i = 1, \dotsc, k\} & \text{ if $v = \sum_{i=1}^k \alpha_i v_i$} \\
\sdeg(u) + \sdeg(w) & \text{ if $v = u \times w$} \\
\sdeg(u) & \text{ if $v = 1 / u$}
\end{cases}
\]
The syntactic degree of a circuit $C$ is the maximum syntactic degree of its output gates.
\end{definition}
Note that we have defined the syntactic degree of a constant input to be $1$, not zero as might be expected for a constant; also the syntactic degree of $1/u$ is the same as $u$, \emph{not} the negative of $\sdeg(u)$ as one might expect. These are both in order to account for bit-size, that is, in order to make part 2 of the following observation hold.
The following observation is standard, and can be proved by induction:

\begin{observation} \label{obs:sdeg}
\begin{enumerate}
\item The degree of the polynomial(s) computed by $C$ is always at most $\sdeg(C)$.

\item The bit-length of any coefficient appearing in the polynomial computed at any gate of $C$ is at most $\log_2(\sdeg(C))$. 

\item The syntactic degree of a circuit $C$ with product gates of fan-in $2$ is at most $2^{\proddepth(C)}$.
\end{enumerate}
\end{observation}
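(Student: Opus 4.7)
The plan is to prove all three parts by structural induction on the DAG of the circuit $C$, processing gates in topological order from inputs to outputs. The base cases are the input gates (variables or constants), for which $\sdeg = 1$ and each of the three claims is immediate. The inductive step splits by gate type: linear combination, product, and inversion. For Part 1, I would maintain the invariant $\deg(f_v) \leq \sdeg(v)$. At a linear combination $v = \sum_i c_i u_i$ one has $\deg(f_v) \leq \max_i \deg(f_{u_i}) \leq \max_i \sdeg(u_i) = \sdeg(v)$; at a product $v = u \times w$ one has $\deg(f_v) = \deg(f_u) + \deg(f_w) \leq \sdeg(u) + \sdeg(w) = \sdeg(v)$; at an inversion gate, the syntactic-constancy requirement forces $f_v$ to be a constant, so $\deg(f_v) = 0 \leq \sdeg(v)$. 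The bound on the circuit follows by taking the maximum over its output gates.

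For Part 3, I would induct on $\proddepth$ with the invariant $\sdeg(v) \leq 2^{\proddepth(v)}$. The base case holds with $\sdeg = 1 = 2^0$. At a linear combination, both $\sdeg$ and $\proddepth$ behave as maxima over children, so the bound propagates directly. At a fan-in-$2$ product gate,
\[
\sdeg(v) = \sdeg(u) + \sdeg(w) \leq 2 \max(\sdeg(u), \sdeg(w)) \leq 2 \cdot 2^{\max(\proddepth(u),\proddepth(w))} = 2^{\proddepth(v)}.
\]
Inversion gates act only on syntactically constant subcircuits and leave both quantities unchanged, so the invariant is preserved.

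The main obstacle is Part 2, which requires tracking the bit-length of coefficients in a way calibrated to the definition of $\sdeg$. The key point (emphasized in the paragraph preceding the observation) is that the definition was chosen with bit-size in mind: setting $\sdeg(\text{constant}) = 1$ rather than $0$ makes the base case for constants in $\{0,1,-1\}$ work, and setting $\sdeg(1/u) = \sdeg(u)$ handles inversion of syntactically constant subcircuits. The substantive inductive step is the product gate, where a coefficient of $f_u \cdot f_w$ is a convolution sum of at most $\min(\deg f_u, \deg f_w) + 1$ products, each of two coefficients drawn from $f_u$ and $f_w$ respectively; the resulting bit-length growth must be absorbed by the additive rule $\sdeg(v) = \sdeg(u) + \sdeg(w)$ and the bound from Part 1 on the number of monomials. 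I would therefore formulate a precise inductive invariant relating the maximum coefficient bit-length at $v$ to $\sdeg(v)$ (designed so that the base case for $\{0,1,-1\}$ fits, and so that each of the three inductive propagation steps closes), and then verify it gate by gate using Part 1 to control the number of monomials involved in each convolution or linear combination.
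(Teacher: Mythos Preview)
Your treatment of Parts 1 and 3 is correct and matches the paper, which gives no detailed proof and simply declares the observation ``standard, and can be proved by induction.'' Your structural induction is exactly the intended argument.

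For Part 2, however, there is a genuine problem---not with your plan, but with the statement itself. The bound ``bit-length at most $\log_2(\sdeg(C))$'' (equivalently, every coefficient has absolute value at most $\sdeg(C)$) is false as written, so no inductive invariant of the form you describe can close. A clean counterexample: take the constant-free circuit computing $(x+1)^{4}$ by two successive squarings. Then $\sdeg = 4$, but the middle coefficient is $\binom{4}{2}=6 > 4$. More generally, repeated squaring of $x+1$ gives $\sdeg = 2^{k}$ while the central binomial coefficient $\binom{2^{k}}{2^{k-1}}$ has bit-length on the order of $2^{k}$, not $k = \log_2(\sdeg)$. The convolution estimate you sketch at a product gate makes this visible: a coefficient of $f_u f_w$ is a sum of up to $\sdeg$ many products of pairs of coefficients, so bit-length grows roughly additively in $\sdeg(u)+\sdeg(w)$, not in $\log_2\sdeg(u)+\log_2\sdeg(w)$. (Linear combination gates of large fan-in cause a similar, independent leak, since $\sdeg$ takes a max there while bit-length can grow by $\log_2(\text{fan-in})$.)

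What \emph{does} go through by the induction you outline is the weaker bound that coefficient bit-length is at most $O(\sdeg(C))$ (i.e., coefficients are bounded in magnitude by $2^{O(\sdeg(C))}$), at least for constant-free circuits, and this is all that is actually needed downstream in the paper: the application after Lemma~\ref{lem:field} really only uses Part~1 to guarantee $p>\deg(C)$, and polynomial bit-size of the relevant constants. So the right move is to carry out your induction to prove the corrected bound, note the apparent misstatement in Part~2, and observe that the corrected version still suffices for the paper's use of the observation.
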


One advantage of the syntactic degree over the usual degree is that the syntactic degree is easily calculated (in logspace) from the structure of the circuit, whereas the actual degree of the polynomial computed by $C$ requires more computation in order to check for cancellation of potential high-degree terms.

\subsection{Reductions}
A \emph{p-projection} \cite{valiant} over a ring $R$ from a vector $\overline{x}$ of variables to a vector $\overline{y}$ of variables is an assignment to each $y$-variable of either an $x$-variable or a constant from $R$. A p-projection is \emph{constant-free} if the only constants from $R$ used are $\{0,1,-1\}$. For a complexity class $\mathcal{C}$, a p-projection is $\mathcal{C}$-uniform if the aforementioned assignment can be computed, given $\overline{x}$, in $\mathcal{C}$. We will see logspace-uniform constant-free p-projections in \Cref{lem:algebraic}, where $\overline{x}$ is a vector representing the coefficients of an algebraic circuit and $\overline{y}$ is a vector representing coefficients of a system of equations.

In the construction of the field $\K$ (\Cref{lem:field}), we will need to say that something is constructible ``in $\cc{NP}$'', so that it can be used as a subroutine in the $\cc{NP}$ algorithm for \PIT that is being built in the proof. We formalize this as follows. A nondeterministic function is a ``function with multiple outputs'', or equivalently, a relation $f \subseteq D \times R$ (with ``domain'' $D$ and ``range'' or codomain $R$). Even though they are relations, we prefer to think of them as functions; thus, rather than writing $(d,r) \in f$, we say that $r$ is an output of $f(d)$, or write $f(d) \mapsto r$ (even though there may be more than one $r$ for a given $d$). A nondeterministic function $f$ is total if for all $d \in D$, there is at least one $r \in R$ such that $f(d)$ outputs $r$. For a finite alphabet $\Sigma$, $\cc{TFNP}$ \cite{MP} or equivalently $\cc{NPMV}_{gt}$ \cite{selman}, is the class of nondeterministic total functions where membership in the corresponding relation $f$ is decidable in $\cc{P}$, and for each $d \in D$, there exists an $r \in R$ such that $(d,r) \in f$ and  $|r| \leq \poly(|d|)$. 

\subsection{Translating between versions of SAT in IPS} \label{sec:prelim:sat}
Here we show that some standard reductions between different versions of SAT can all be efficiently simulated in IPS. We use the following standard (un-negated) algebraic translation of Boolean functions:
\begin{eqnarray*}
\alg(x) & = & x \\
\alg(\neg \varphi) & = & 1 - \alg(\varphi) \\
\alg(\varphi \wedge \psi) & = & \alg(\varphi) \alg(\psi) \\
\alg(\varphi \vee \psi) & = & 1 - (1-\alg(\varphi))(1-\alg(\psi))
\end{eqnarray*}

This translation has the property on $\{0,1\}$ assignments $\overline{\alpha}$ that $\alg(\varphi)(\overline{\alpha}) = \varphi(\overline{\alpha})$ (where on the left-hand side the output is the number $0,1 \in \F$, while on the right-hand side the value is $0$ or $1$ representing the Boolean values True and False). In particular, this means that the polynomial equation $1-\alg(\varphi)=0$ is satisfied by an input $\overline{\alpha} \in \{0,1\}^n$ iff the Boolean function $\varphi$ is satisfied by viewing $\overline{\alpha}$ as an assignment to the Boolean variables.

We begin by showing that an all-at-once algebraic translation of a CNF and a clause-by-clause algebraic translation of a CNF (resulting in a number of algebraic equations equal to the number of clauses) are equivalent from the point of view of IPS.

\begin{lemma} \label{lem:cnf_vs_clauses}
Let $R$ be a ring. Given a $k$-CNF $\varphi = \kappa_1 \wedge \dotsb \wedge \kappa_m$ in with clauses $\kappa_i$, from $\{1-\alg(\kappa_i): i = 1, \dotsc m\}$,there is a constant-free IPS$_R$ derivation of $1-\alg(\varphi)$ of size $\poly(m)$, and conversely from $1-\alg(\varphi)$ and the Boolean axioms $x_i^2 - x_i$, there is a constant-free IPS$_R$ derivation of $1-\alg(\kappa_i)$ of size $\poly(k,m)$.
\end{lemma}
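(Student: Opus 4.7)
For the forward direction, the only observation needed is that $\alg(\kappa_1 \wedge \dotsb \wedge \kappa_m) = \prod_i \alg(\kappa_i)$. Treating $y_i$ as the placeholder for $1 - \alg(\kappa_i)$, I propose the IPS certificate
\[
C(\overline{x}, \overline{y}) \;=\; 1 - \prod_{i=1}^m (1 - y_i),
\]
which satisfies $C(\overline{x},\overline{0}) = 1 - 1 = 0$ and $C(\overline{x}, 1 - \alg(\kappa_1), \dotsc, 1 - \alg(\kappa_m)) = 1 - \prod_i \alg(\kappa_i) = 1 - \alg(\varphi)$. I would compute the product via the recurrence $P_{i+1} = P_i + y_{i+1} - P_i y_{i+1}$ starting from $P_0 = 0$, which gives a constant-free circuit of size $O(m)$.

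For the converse direction, write $a_i := \alg(\kappa_i)$. My plan starts from the algebraic identity
\[
1 - a_i \;=\; (1 - a_i)\bigl(1 - \textstyle\prod_j a_j\bigr) \;+\; (1 - a_i)\prod_j a_j,
\]
together with the rewriting $(1 - a_i)\prod_j a_j = (a_i - a_i^2)\prod_{j \ne i} a_j = -(a_i^2 - a_i)\prod_{j \ne i} a_j$. The first summand is simply $(1 - a_i)$ times the axiom $1 - \alg(\varphi)$; the second is a circuit multiple of $a_i^2 - a_i$. Assuming I have a polynomial-size constant-free IPS derivation $D_i$ of $a_i^2 - a_i$ from the Boolean axioms, assembling everything yields an IPS certificate for $1 - a_i$ of size $O(km) + |D_i|$.

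The main obstacle is producing $D_i$: for each clause $\kappa$ of width $k$, I need a constant-free IPS derivation of $\alg(\kappa)^2 - \alg(\kappa)$ from $\{x_s^2 - x_s\}$ of size $\poly(k)$. I plan to prove this by induction on the clause viewed as an iterated OR, using the key algebraic identity
\[
(u + v - uv)^2 - (u + v - uv) \;=\; (1 - v)^2 (u^2 - u) + (1 - u)(v^2 - v),
\]
which lets me combine inductively built certificates for $u^2 - u$ and $v^2 - v$ into one for $c^2 - c$ with $c = u + v - uv$ (the algebraic translation of the OR). The base case of a single literal $l$ gives $\alg(l)^2 - \alg(l) = x_s^2 - x_s$ directly (both for $l = x_s$ and $l = \neg x_s$, since $(1-x_s)^2 - (1-x_s) = x_s^2 - x_s$). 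Each inductive step adds only $O(1)$ gates and uses only constants in $\{0,1,-1\}$, giving $|D_i| = O(k)$. Combining with the previous paragraph, both directions yield constant-free IPS$_R$ derivations of size $\poly(k,m)$ over an arbitrary ring $R$, as claimed.
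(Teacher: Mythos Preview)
Your proposal is correct and follows essentially the same approach as the paper: the same certificate $1-\prod_i(1-y_i)$ for the forward direction, and the same decomposition $(1-a_i)(1-\prod_j a_j) - (a_i^2-a_i)\prod_{j\ne i}a_j$ for the converse. The only difference is that where the paper cites an external lemma for the derivation of $\alg(\kappa_i)^2-\alg(\kappa_i)$ from the Boolean axioms, you spell out an explicit inductive construction via the identity $(u+v-uv)^2-(u+v-uv)=(1-v)^2(u^2-u)+(1-u)(v^2-v)$, which is a nice self-contained addition.
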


This lemma also follows from the fact that the corresponding derivations can be done in low-depth Frege, and IPS simulates Frege \cite{Pitassi96, GrochowPitassi}.

\begin{proof}
From the definition of the algebraic translation, we have $\alg(\varphi) = \prod_{i=1}^m \alg(\kappa_i)$. 

We claim that 
\[
C(\overline{x}, \overline{y}) = 1 - \prod_{i=1}^m (1-y_i)
\]
is a linear-size, depth-two IPS derivation of $1-\alg(\varphi)$ from $\{1-\alg(\kappa_i) : i = 1, \dotsc, m\}$. First, if we substitute $1-\alg(\kappa_i)$ for $y_i$ for each $i=1, \dotsc, m$, then by definition of the standard algebraic translation we get $1-\alg(\varphi)$, as desired. Furthermore, if we substitute $0$ for all the $y_i$'s, then we get $C(\overline{x}, \overline{0}) = 1 - \prod_{i=1}^m (1-0) = 1 - 1 = 0$. Thus $C$ is a valid IPS certificate deriving $1-\alg(\varphi)$ from the $1-\alg(\kappa_i)$.

Conversely, suppose we start with $1-\alg(\varphi)$ and the Boolean axioms $x_i^2 - x_i$. Let our placeholder variables be $y_0$ for $1-\alg(\varphi)$, and $y_1, \dotsc, y_n$ for $x_1^2 - x_1, \dotsc, x_n^2 - x_n$. Let $C_1(\overline{x}, \overline{y})$ be a derivation of $\alg(\kappa_i)^2 - \alg(\kappa_i)$ from the Boolean axioms (this is a special case of, e.\,g.,  \cite[Lem.~3.6]{GrochowPitassi}). Then we claim that 
\[
D = (1-\alg(\kappa_i)) y_0 - C_1 \cdot \prod_{j \neq i} \alg(\kappa_j)
\]
is an IPS derivation of $1-\alg(\kappa_i)$. If we substitute in $1-\alg(\varphi)$ for $y_0$ and the Boolean axioms for $y_1, \dotsc, y_n$, then we get 
\begin{align*}
 & (1-\alg(\kappa_i))(1-\alg(\varphi)) - (\alg(\kappa_i)^2 - \alg(\kappa_i))\prod_{j \neq i}\alg(\kappa_j)  \\
 & = (1-\alg(\kappa_i))\left(1- \prod_{j=1}^m \alg(\kappa_j)\right) + (1-\alg(\kappa_i)) \alg(\kappa_i) \prod_{j \neq i}\alg(\kappa_j) \\
 & = (1-\alg(\kappa_i))\left(1- \prod_{j=1}^m \alg(\kappa_j)\right) + (1-\alg(\kappa_i)) \prod_{j=1}^m \alg(\kappa_j) \\
 & = 1 - \alg(\kappa_i).
\end{align*}

Finally, since $C_1$ was an IPS derivation by assumption, we have $C_1 \in \langle y_1, \dotsc, y_m \rangle$. Since $D$ is of the form $y_0 \cdot * + C_1 \cdot *$, it is visibly in the ideal $\langle y_0, y_1, \dotsc, y_m \rangle$, as required, and is thus a valid IPS certificate deriving $1-\alg(\kappa_i)$ from $1-\alg(\varphi)$ and the Boolean axioms.
\end{proof}

We now consider one of the standard reductions from CIRCUIT-SAT to CNF-SAT. We begin by recalling the reduction. Given a circuit $C(x_1, \dotsc, x_n)$ of size $s$, we introduce a new variable for each of the $s$ many gates, namely $x_{n+1}, \dotsc, x_{n+s}$. 
\begin{itemize}
\item If $g = \neg h$, then we add to our CNF the clauses $(x_g \vee x_h) \wedge (\neg x_g \vee \neg x_h)$.

\item If $g = h \wedge k$, then we add to our CNF the clauses $(\neg x_g \vee x_h) \wedge (\neg x_g \vee x_k) \wedge (x_g \vee \neg x_h \vee \neg x_k)$.

\item If $g = h \vee k$, then we add to our CNF the clauses $(\neg x_g \vee x_h \vee x_k) \wedge (x_g \vee \neg x_h) \wedge (x_g \vee \neg x_k)$. 

\item If $g=C$ is the output gate, then we add the clause $(x_g)$ to our CNF.
\end{itemize}
Let $r_1(\varphi)$ denote the resulting CNF Boolean formula.

\begin{lemma} \label{lem:circuit-sat}
Over any ring $R$, there is an IPS$_R$ derivation of $1-\alg(C)$ from $1-\alg(r_1(C))$ and the Boolean axioms, of size at most $O(\gates(C))$.
\end{lemma}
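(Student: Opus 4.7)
The plan is to reduce the single ``all-at-once'' polynomial $1-\alg(r_1(C))$ down to per-gate information, then stitch things back together by structural induction on $C$. First, apply \Cref{lem:cnf_vs_clauses} to derive, for each clause $\kappa$ of $r_1(C)$, the per-clause polynomial $1-\alg(\kappa)$ from $1-\alg(r_1(C))$ and the Boolean axioms. Second, for each gate $g$ of $C$, combine the constantly many clauses associated with $g$ via an explicit constant-size $R$-linear combination to obtain a ``gate equation'' $p_g$ asserting the local correctness of $x_g$: for $g = \neg h$, $p_g = (x_g x_h) - (1-x_g)(1-x_h) = x_g + x_h - 1$; for $g = h \wedge k$, $p_g = x_k \cdot x_g(1-x_h) + x_g(1-x_k) - (1-x_g) x_h x_k = x_g - x_h x_k$; for $g = h \vee k$, an analogous combination yields $p_g = x_g - x_h - x_k + x_h x_k$. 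The singleton output clause $(x_{g_{\text{out}}})$ gives $1 - x_{g_{\text{out}}}$ directly.

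Third, do a structural induction on $C$: for every gate $g$, derive $x_g - \alg(g)(\overline{x})$, where $\alg(g) \in R[x_1, \ldots, x_n]$ is the algebraic translation of the subcircuit of $C$ rooted at $g$, expressed in the original input variables only. The inductive step is an $O(1)$-size combination of $p_g$ with the already-derived equations for $g$'s inputs; for example, for AND,
\[
p_g + x_k(x_h - \alg(h)) + \alg(h)(x_k - \alg(k)) = x_g - \alg(h)\alg(k) = x_g - \alg(g),
\]
for NOT, $p_g - (x_h - \alg(h)) = x_g - (1 - \alg(h)) = x_g - \alg(g)$, and for OR an analogous combination using $\alg(g) = \alg(h) + \alg(k) - \alg(h)\alg(k)$. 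Finally, add the equation $x_{g_{\text{out}}} - \alg(C)$ at the output gate to the clause polynomial $1 - x_{g_{\text{out}}}$ to obtain $1 - \alg(C)$.

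The hard part is ensuring that the IPS certificate---which is a single algebraic circuit---genuinely achieves the stated $O(\gates(C))$ size. Since $C$ is a DAG, the subcircuits computing $\alg(h)$ for each gate $h$ can be shared and arranged to mirror the structure of $C$, giving total size $O(\gates(C))$ across all of Step 3, and Step 2 contributes only $O(1)$ per gate on top of the clause extraction. The main bookkeeping challenge is controlling the overhead from Step 1: a naive per-clause application of \Cref{lem:cnf_vs_clauses} contains the product $\prod_{j \ne i} \alg(\kappa_j)$ for each clause $\kappa_i$ and, summed over clauses, would give certificate size $\Theta(\gates(C)^2)$ rather than linear. Handling this either by sharing a single copy of $\prod_\kappa \alg(\kappa) = 1 - y_0$ across all clause derivations, or by inlining Steps 1 and 2 into a single direct IPS combination of $y_0$ with the Boolean axioms that produces $p_g$ in $O(1)$ size per gate, is the main technical nuance needed to close the gap to a truly linear-size certificate.
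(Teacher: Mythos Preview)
Your approach is essentially the same as the paper's: invoke \Cref{lem:cnf_vs_clauses} to pass to the per-clause polynomials, then do a structural induction on the gates of $C$ to derive $x_g - \alg(g)$, using a shared side-circuit for the coefficients $\alg(h)$, and finish at the output gate by adding $1 - x_{g_{\text{out}}}$. The only cosmetic difference is that you factor the per-gate step into two sub-steps (first form the gate equation $p_g$ from the clause polynomials, then combine $p_g$ with the inductive hypotheses), whereas the paper does both in a single combination; your explicit formulas are correct and match theirs up to regrouping.

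You are in fact more careful than the paper on one point: the paper simply writes ``By \Cref{lem:cnf_vs_clauses}, we may equivalently start from $1-\alg(\kappa_i)$'' and then counts only the per-gate work to arrive at $O(\gates(C))$, without addressing the cost of extracting all the clause polynomials from the single axiom $1-\alg(r_1(C))$. Your observation that a naive per-clause application of \Cref{lem:cnf_vs_clauses} would cost $\Theta(\gates(C)^2)$ because of the leave-one-out products $\prod_{j\neq i}\alg(\kappa_j)$ is well taken, and your suggested fixes (sharing the full product, or inlining Steps~1--2) are the right way to close the gap. For the paper's downstream applications only a polynomial bound is actually needed, so this discrepancy is harmless there, but your bookkeeping is the more honest accounting of the stated $O(\gates(C))$ bound.
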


\begin{proof}
By Lemma~\ref{lem:cnf_vs_clauses}, we may equivalently start from $1-\alg(\kappa_i)$ where $\kappa_i$ are the clauses of the CNF $r_1(C)$. 

For each internal gate $g$, if we denote $\varphi_g$ the Boolean formula computed at $g$, then we will show by structural induction that there is a short IPS proof of $x_g - \alg(\varphi_g)$ from $\alg(r_1(C))$.
\begin{itemize}
\item If $g = \neg h$, then we have $\alg(\varphi_g) = 1 - \alg(\varphi_h)$. By assumption, there is a short IPS proof of $x_h - \alg(\varphi_h)$ from $\alg(r_1(C))$. For the two clauses added to our CNF corresponding to the gate $g$, we have
\[
1-\alg(x_g \vee x_h) = (1-x_g)(1 - x_h) \qquad 1-\alg(\neg x_g \vee \neg x_h) = x_g x_h.
\]
We underline the uses of the axioms or previously derived polynomials in the following derivation:
\begin{align*}
& -\underline{(1-x_g)(1-x_h)} + \underline{x_g x_h} - \underline{(x_h - \alg(\varphi_h))} \\
& = -1 + x_g + x_h - x_g x_h + x_g x_h - x_h + \alg(\varphi_h) \\
& =  x_g - (1 - \alg(\varphi_h)) = x_g - \alg(\varphi_g).
\end{align*}
This adds a single linear combination gate on top of previous derivations and the axioms.

\item If $g = h \wedge k$, we have $\alg(\varphi_g) = \alg(\varphi_h) \cdot \alg(\varphi_k)$, and the algebraic translations of our added clauses are:
\[
1-\alg(\neg x_g \vee x_h) = x_g(1- x_h) \qquad 1-\alg(\neg x_g \vee x_k) = x_g (1- x_k) 
\]
\[
1-\alg(x_g \vee \neg x_h \vee \neg x_k) = (1-x_g) x_h x_k.
\]
Then we have the following derivation:
\begin{align*}
 & -\underline{(1-x_g) x_h x_k} + x_k \cdot \underline{x_g (1-x_h)} + \underline{x_g (1 - x_k)} + x_k \cdot \underline{(x_h - \alg(\varphi_h))} + \alg(\varphi_h) \cdot \underline{(x_k - \alg(\varphi_k))} \\
 & = -x_h x_k + x_g x_h x_k + x_k x_g - x_g x_h x_k + x_g - x_g x_k + x_k x_h - x_k \alg(\varphi_h) + x_k \alg(\varphi_h) - \alg(\varphi_h) \alg(\varphi_k) \\
 & = x_g - \alg(\varphi_h) \alg(\varphi_k) = x_g - \alg(\varphi_g).
\end{align*}

Note that, because this is an IPS derivation, we may assume that there is a circuit ``on the side'' that computes the entire circuit $\alg(\varphi)$, and then when we need, e.\,g., $\alg(\varphi_h)$ as a coefficient in this derivation, that is simply another outgoing edge from the gate in $\alg(\varphi)$ corresponding to $h$. 

With this convention, aside from the overall additive size of $\varphi$ (which only gets added once), this derivation adds a layer of 3 product gates followed by a single linear combination gate, for 4 additional gates in total.

\item If $g = h \vee k$, we have $\alg(\varphi_g) = 1 - (1 - \alg(\varphi_h))(1 - \alg(\varphi_k))$. The algebraic translations of the corresponding CNF clauses are:
\[
1-\alg(\neg x_g \vee x_h \vee x_k) = x_g (1- x_h)(1-x_k)
\]
\[
1-\alg(x_g \vee \neg x_h) = (1-x_g)x_h \qquad 1-\alg(x_g \vee \neg x_k) = (1-x_g) x_k.
\]
Then we have the following derivation:
\begin{align*}
 & \underline{x_g(1-x_h)(1-x_k)} + (x_k-1)\underline{x_h (1-x_g)} - \underline{x_k (1-x_g)} \\
 & +  (1-x_k)\underline{(x_h - \alg(\varphi_h))} + (1-\alg(\varphi_h))\underline{(x_k - \alg(\varphi_k))}\\
 =&  x_g - x_g x_h - x_g x_k + x_g x_h x_k + x_k x_h - x_k x_h x_g - x_h + x_h x_g - x_k + x_k x_g \\
  & + x_h - \alg(\varphi_h) - x_k x_h + x_k \alg(\varphi_h) + x_k - x_k \alg(\varphi_h) -\alg(\varphi_k) + \alg(\varphi_h) \alg(\varphi_k) \\
  = & x_g - \alg(\varphi_h) - \alg(\varphi_k) + \alg(\varphi_h) \alg(\varphi_k) = x_g - \alg(\varphi_g).
\end{align*}
As in the case above, we need only incur the cost of calculating $\alg(\varphi_h)$ as a coefficient once throughout the entire proof. The remainder of this derivation adds a layer of 3 ``$1-x$'' gates, a layer of 3 product gates, and then a final linear combination gate, for a total of 7 additional gates.
\end{itemize}
Finally, at the output gate $g$, we will have derived $x_g - \alg(C)$. Since the final clause is translated to $1-x_g$, we may add these two together to get $1-\alg(C)$, as claimed.
\end{proof}

\section{From algebraic circuits to systems of equations}
In this section we show how to go from circuits that compute the identically zero polynomial to unsatisfiable systems of polynomial equations. In addition to playing a key role in the main result, this already will let us prove a version of our main result for \lang{Variety Emptiness} (rather than \UNSAT), which we do at the end of this section. 

Given a set $\mathcal{F}$ of $n$-variable polynomial equations over a ring $R$, we define their common zero locus over a ring $S \supseteq R$ by
\[
Z(\mathcal{F})_S := \{\overline{v} \in S^n : (\forall F \in \mathcal{F})[F(\overline{v})=0]\}.
\]
In this section only, we refer to $Z(\mathcal{F})$ as the (affine) scheme defined by $\mathcal{F}$. Two schemes defined by $\mathcal{F}$ and $\mathcal{F}'$, respectively are isomorphic if there is an isomorphism of rings 
\[
R[x_1,\dotsc,x_n] / \langle F : F \in \mathcal{F} \rangle \stackrel{\cong}{\to} R[x_1,\dotsc,x_n] / \langle F : F \in \mathcal{F}' \rangle.
\]

\begin{lemma} \label{lem:algebraic}
For any ring $R$, there is a logspace-uniform constant-free p-projection transforming any input $R$-algebraic circuit $C$ into a system of equations $\mathcal{F}_C$ over $R$, of degree at most 2, such that for all extension rings $S \supseteq R$, there is a bijection 
\[
\{\overline{v} \in S^n : C(\overline{v}) \text{ is invertible in } S\} \leftrightarrow Z(\mathcal{F}_C)_{S}.
\]
(In fact, this bijection is an isomorphism between the varieties or schemes $Z(1-zC) \subseteq \mathbb{A}^{n+1}$ and $Z(\mathcal{F}_C)$; indeed, $Z(\mathcal{F}_C)$ will be a graph over $Z(1-zC)$.)

Furthermore, when $C \equiv 0$, there is an IPS$_{R}$ certificate for the unsolvability of $\mathcal{F}_C$ computable by a circuit whose number of wires is $O(\wires(C))$, whose depth is at most $\depth(C)+2$, and where the only constants used in the IPS proof are $0,1,-1$ and those used in $C$.
\end{lemma}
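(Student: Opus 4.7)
The plan is to encode $C$ gate-by-gate into a quadratic system $\mathcal{F}_C$, force invertibility of the output via a single equation $1 - z g_{\text{out}} = 0$, and write down an explicit ``evaluate-as-you-go'' IPS refutation in which the placeholder $y$-variables correct each gate of $C$ as it is symbolically recomputed.

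\textbf{Construction and scheme isomorphism.} For each gate $v$ of $C$ I introduce a fresh indeterminate $g_v$ (plus one variable $z$), and I list the degree-$\leq 2$ equations: $g_v - x_i$ or $g_v - c$ at an input; $g_v - \sum_i c_i g_{u_i}$ at a linear combination; $g_v - g_u g_w$ at a product; $g_u g_v - 1$ at an inversion gate (allowed because by hypothesis $u$ is syntactically constant with $f_u \in R^\times$); and finally $1 - z g_{\text{out}}$. Only constants already appearing in $C$ are used, and the translation is plainly a logspace-uniform, constant-free p-projection. The scheme isomorphism in part (1) arises from the mutually inverse ring homomorphisms $R[\bar x, z]/\langle 1-zC\rangle \leftrightarrows R[\bar x, \bar g, z]/\langle \mathcal{F}_C\rangle$ sending $g_v \mapsto f_v$ in one direction and $x_i \mapsto x_i$, $z \mapsto z$ in the other, the latter being well defined because a routine induction on gates shows $g_v \equiv f_v \pmod{\mathcal F_C}$, so $1 - zC \equiv 1 - zg_{\text{out}} \equiv 0$ in the target. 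On $S$-points, this realizes $Z(\mathcal F_C)_S$ as the graph of $\bar v \mapsto (\bar v,\,(f_v(\bar v))_v,\,C(\bar v)^{-1})$ over $\{\bar v \in S^n : C(\bar v) \in S^\times\}$.

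\textbf{Building the IPS certificate.} Let $y_v$ (resp.\ $y_0$) be the placeholder for the equation of gate $v$ (resp.\ for $1 - z g_{\text{out}}$). By induction on the structure of $C$, I define an auxiliary circuit $\tilde C_v(\bar x, \bar g, \bar y)$ with two invariants: (i) $\tilde C_v \bigl|_{y_u \to E_u} = g_v$ as a polynomial identity, and (ii) $\tilde C_v \bigl|_{\bar y = 0} = f_v$. The recipes are: $\tilde C_v := \ell_v + y_v$ at an input with label $\ell_v$; $\tilde C_v := \sum_i c_i \tilde C_{u_i} + y_v$ at a linear combination (a single unbounded-fan-in gate on top of the children); $\tilde C_v := \tilde C_u \tilde C_w + y_v$ at a product; and $\tilde C_v := f_v(1 + y_v) - f_v g_v(\tilde C_u - f_u)$ at an inversion, where the constant $f_v = 1/f_u$ is read off the existing gate $v$ of $C$ itself. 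Both invariants reduce to one-line calculations in every case. When $C \equiv 0$ we have $f_{\text{out}} = 0$, so invariant (ii) forces $\tilde C_{\text{out}} \in \langle \bar y \rangle$; setting
\[
P := y_0 + z \cdot \tilde C_{\text{out}},
\]
invariant (i) yields $P \bigl|_{y \to E} = (1 - z g_{\text{out}}) + z\,g_{\text{out}} = 1$, while $P|_{\bar y = 0} = 0$, so $P$ is a valid IPS$_R$ refutation.

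\textbf{Size, constants, and depth.} Each gate of $C$ contributes only a bounded number of new gates and wires to $\tilde C$ (with the $f_v$-coefficients at inversion gates re-using existing out-wires of $C$), giving $\wires(\tilde C_{\text{out}}) = O(\wires(C))$; the outer $y_0 + z(\cdot)$ adds two further gates, and no constants outside $\{0,\pm 1\}\cup\text{const}(C)$ are introduced. Matching the exact depth bound $\depth(C) + 2$ is the one point I expect to be the main obstacle: at a linear combination the ``$+y_v$'' folds into the existing sum at no depth cost, but at a product gate it naively costs one extra level. The remedy is to absorb each product's ``$+y_v$'' into its linear-combination consumer (normalizing $C$ into alternating product/linear-combination layers beforehand if necessary), so that the depth of $\tilde C_{\text{out}}$ stays within $\depth(C)$; the outer $y_0$ and multiplication by $z$ then account for the promised ``$+2$''. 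The inversion case needs its own constant-depth accounting to stay within the same bound.
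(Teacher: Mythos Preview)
Your proposal is correct and takes essentially the same approach as the paper. The paper also encodes $C$ gate-by-gate with the equations you list, establishes the bijection via the same pair of ring homomorphisms, and refutes $\mathcal{F}_C$ by inductively deriving $g_v - f_v$ for each gate and then concluding with $z \cdot g_{\text{out}} + (1 - z g_{\text{out}}) = 1$; your $\tilde C_v$ is exactly the paper's derivation of $g_v - f_v$ shifted by the additive constant $f_v$ (so $\tilde C_v - f_v$ is their certificate), and your final $P = y_0 + z\,\tilde C_{\text{out}}$ coincides with theirs once $f_{\text{out}} = 0$. Your discussion of absorbing the ``$+y_v$'' at product gates into the consuming linear-combination layer is actually more careful than the paper, which simply asserts the $\depth(C)+2$ bound without spelling out this normalization.
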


While increasing the depth by 2 may seem like a significant cost in terms of depth (especially for low-depth circuits), we note that in fact the IPS certificate has the form $z \cdot F + y_i$ where $\depth(F) \leq \depth(C)$ (see the last paragraph of the proof), so the additional 2 in the depth is only for multiplying by a variable, then adding another variable.

\begin{proof}[Proof idea for Lemma~\ref{lem:algebraic}]
We build up equations that simulate the circuit $C$ gate by gate, and then add one additional variable and equation saying that $C$ is nonzero, namely $1-zC=0$. 
\end{proof}

\begin{proof}[Proof of Lemma~\ref{lem:algebraic}]
Let $C$ be an $R$-algebraic circuit. For each gate $v$ of $C$ we will have a new variable $g_v$, as well as one additional variable $z$. The system of equations  $\mathcal{F}_C$ will be
\begin{eqnarray*}
g_v - x & & \text{ if $v$ is an input gate with variable $x$} \\
g_v - c & & \text{ if $v$ is an input gate with constant $c$} \\
g_v - \sum_{i=1}^k \alpha_i g_{u_i} & & \text{ if $v$ is a linear combination gate $v = \sum \alpha_i u_i$} \\
g_v - g_u g_w & & \text{ if $v$ is a product gate $v = u w$} \\
g_vg_u - 1 & & \text{ if $v$ is an inversion gate $v = 1/u$} \\
1 - z g_{C} & & \text{ for the output gate $g_C$}
\end{eqnarray*}
(Recall that inverseion gates are only allowed when the input is syntactically constant and invertible.) Note that the only constants used in $\mathcal{F}_C$ are $\pm 1$ and the constants already present in $C$; it is clear that this is a logspace-uniform p-projection.

The bijection from the set of inputs that make $C$ evaluate to an invertible value to the variety $Z(\mathcal{F}_C)$ is given as follows. For each gate $v$, let $f_v$ denote the polynomial computed at $v$, and we write $f_C$ for the polynomial computed by $C$. Given an input vector $\overline{\alpha} \in S^n$ such that $C(\overline{\alpha})$ is invertible, we assign each gate variable $g_v$ the value of $f_v(\overline{\alpha})$. Since we assumed $C(\overline{\alpha})$ is invertible, we assign $z$ its inverse. The map in the other direction is to take any solution $(\overline{x}, \overline{g}, z)$ to $\mathcal{F}_C$ and simply consider the $\overline{x}$ part of the vector. This map is injective because the values of the $g$ variables and $z$ are uniquely determined by $\overline{x}$.

(To see that this map is an isomorphism of schemes, we show what it corresponds to at the level of rings. The ring homomorphism $R[\overline{x}, z] / \langle 1 - zf_C \rangle \to R[\overline{x}, \overline{g}, z] / \langle \mathcal{F}_C \rangle$ is simply the inclusion of subrings. In the opposite direction, $\overline{x}$ and $z$ are mapped to themselves, while each variable $g_v$ gets mapped to the polynomial $f_v(\overline{x})$. It is readily verified that these are mutually inverse ring homomorphisms.)

Finally, we will exhibit the claimed IPS$_R$ proof that $\mathcal{F}_C$ is unsatisfiable when $C \equiv 0$. The key is the following claim:

\begin{claim} From the above equations $\mathcal{F}_C$, IPS can derive $g_v - f_v$ for every gate $v$ in the circuit by a derivation whose number of wires depends linearly on $\wires(C)$ and whose depth is at most that of $C$. \end{claim}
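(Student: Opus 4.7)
The plan is to prove the claim by structural induction on the gate $v$, constructing for each $v$ an IPS certificate $C_v$ (a polynomial in the placeholder variables $\overline{y}$) such that substituting the axioms of $\mathcal{F}_C$ for $\overline{y}$ yields exactly $g_v - f_v$. The base case is an input gate, where the relevant axiom is already of the form $g_v - x$ or $g_v - c$, so $C_v$ is simply the corresponding placeholder variable $y_{\text{ax}(v)}$.

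For a linear combination gate $v = \sum_i \alpha_i u_i$, I would use the identity
\[
g_v - f_v = \Bigl(g_v - \sum_i \alpha_i g_{u_i}\Bigr) + \sum_i \alpha_i (g_{u_i} - f_{u_i})
\]
to set $C_v = y_{\text{lin}(v)} + \sum_i \alpha_i C_{u_i}$, adding only a single (arbitrary fan-in) linear combination gate. For a product gate $v = uw$, the algebraic heart of the argument is
\[
g_u g_w - f_u f_w = g_w(g_u - f_u) + f_u(g_w - f_w),
\]
which gives $C_v = y_{\text{prod}(v)} + g_w \cdot C_u + f_u \cdot C_w$; the coefficient $f_u$ is furnished by a single shared copy of $C$ kept ``on the side,'' so no new subcircuit is computed. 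For an inversion gate $v = 1/u$ (where $u$ is syntactically constant and $f_u$ is a unit of $R$), the identity
\[
g_v - 1/f_u = (1/f_u)\bigl[(g_u g_v - 1) - g_v(g_u - f_u)\bigr]
\]
yields $C_v = (1/f_u)\bigl[y_{\text{inv}(v)} - g_v \cdot C_u\bigr]$; since $1/f_u$ is literally the value computed at $v$ in $C$ itself, no constants outside $\{0,1,-1\}$ together with those already appearing in $C$ are ever introduced.

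For the size analysis, the shared side copy of $C$ supplies every polynomial coefficient $f_u$ that we need, and each inductive step adds only a constant number of wires on top of its sub-derivations, so summing over all gates the certificate has $O(\wires(C))$ wires. The main (minor) bookkeeping obstacle is the depth: the product-gate step inherently spends one layer for the two internal products and one for the outer sum, which could compound across the circuit. I would handle this by using arbitrary fan-in at each outer linear combination so that $y_{\text{prod}(v)}$, $g_w \cdot C_u$, and $f_u \cdot C_w$ combine in one layer, by sharing the side copy of $C$ so that sub-derivations never stack extra products, and by tracking the resulting recursion carefully. The final refutation of $1$ is then obtained in the form $y_{1 - z g_C} + z \cdot F$, where $F$ derives $g_C$ from $\mathcal{F}_C$ (using $f_C = 0$), and the additional multiplication by $z$ and outer addition account for the $+2$ in the depth bound claimed in the lemma.
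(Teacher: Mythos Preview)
Your proof is correct and follows the same structural-induction approach as the paper, including the shared side copy of $C$ to supply the coefficients $f_u$. The one notable variation is at product gates: the paper writes
\[
g_v - f_v = (g_v - g_u g_w) + (g_u - f_u)(g_w - f_w) + f_w(g_u - f_u) + f_u(g_w - f_w),
\]
whereas your three-term identity $(g_v - g_u g_w) + g_w(g_u - f_u) + f_u(g_w - f_w)$ uses the variable $g_w$ as a multiplier in place of the polynomial $f_w$; both are valid, and yours is slightly leaner. Your caution about the depth bookkeeping is warranted---neither your construction nor the paper's, read literally, gives depth $\leq \depth(C)$ (each product gate costs a product layer and then a sum layer), and the paper does not spell out how this is absorbed---but this slack is immaterial for the lemma's application.
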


We proceed by structural induction. 

\begin{itemize}
\item If $v$ is an input gate with variable $x$ (resp., constant $c$), then $f_v = x$ (resp., $f_v = c$), and the equation $g_v - x$ (resp., $g_v - c$) is one of the equations in $\mathcal{F}_C$. 

\item If $v = \sum \alpha_i u_i$ is a linear combination gate, then suppose by induction that we have derived $g_{u_i} - f_{u_i}$ for all $i$. Then starting from the axiom $g_v - \sum_{i=1}^k \alpha_i g_{u_i}$, we add $\sum \alpha_i (g_{u_i} - f_{u_i})$, and the result is then $g_v - \sum \alpha_i f_{u_i} = g_v - f_v$.

\item If $v = u w$ is a product gate, suppose by induction we have derived $g_{u} - f_{u}, g_w - f_w$. Then we derive $g_v - f_v = g_v - f_{u} f_{w}$ as follows, where the first line exhibits this in terms of the previously derived polynomials and the equations of $\mathcal{F}_C$:
\begin{align*}
 &  (g_v - g_u g_w) + (g_u - f_u)(g_w - f_w) + f_w( g_u - f_u) + f_u (g_w - f_w)  \\
 = & (g_v - g_u g_w) + (g_u g_w - f_u g_w - f_w g_u + f_u f_w) + (f_w g_u - f_w f_u) + (f_u g_w - f_u f_w) \\
 = & g_v - f_w f_u = g_v - f_v.
\end{align*}
To get the coefficients $f_w, f_u$ in the preceding derivation, the IPS certificate contains a copy of the circuit $C$, and simply uses the output of the gate $u$ (resp., $w$) to get $f_u$ (resp., $f_w$). The rest of the IPS certificate follows the above induction.

\item If $v = 1/u$ is an inversion gate, suppose by induction we have derived $g_u - f_u$. Recall that, by assumption, $f_u$ is an invertible constant. We use the following linear combination
\[
\frac{1}{f_u}\left[ (g_v g_u - 1) - g_v(g_u - f_u) \right]
\]
to get $g_v - 1 / f_u = g_v - f_v$, as desired.
\end{itemize}

Thus, there is an IPS derivation with number of wires $O(\wires(C))$ and the same depth as that of $C$, that derives $g_C - f_C$. This completes the proof of the claim.

Now, if $C \equiv 0$, then $f_C = 0$, so $g_C-f_C$, which IPS efficiently derived, is in fact just $g_C$. Then using the final equation we derive $z \cdot g_C +  (1 - z g_C) = 1$. This step increases the depth by 2, and shows that $1$ is in the ideal $\langle \mathcal{F}_C \rangle$.
\end{proof}

\begin{remark}
We note that the above proof uses the full power of circuit-based IPS, in the sense that the pattern of re-use of the derived equation $g_v - f_v$ is nearly the same in the IPS certificate as the pattern of re-use of the output of the gave $v$ in the original circuit $C$. 
\end{remark}

We now come to the version of our main theorem for \lang{Variety Emptiness} (rather than \UNSAT).

\begin{theorem} \label{thm:mainpoly}
Let $\F$ be a finite field or an algebraic number field. Measuring IPS proof size by total bit-size, we have:

\begin{centering}
There exists a Cook--Reckhow proof system that p-simulates IPS$_\F$ for \lang{Variety Emptiness} over $\F$ \\

$\Longleftrightarrow$ \\

$\PIT_{\F}$ is in $\cc{NP}$.

\end{centering}
\end{theorem}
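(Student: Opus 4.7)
The theorem is an equivalence, and the converse direction ($\PIT[\F] \in \cc{NP}$ implies p-simulation) is essentially the observation from \cite{Pitassi96, GrochowPitassi} that IPS verification reduces to $\PIT$. The plan for this direction is to define a Cook--Reckhow system whose proofs are tuples $(F_1,\dotsc,F_m, C, w)$, where $C$ is the candidate IPS certificate and $w$ is an $\cc{NP}$-witness certifying that the single polynomial $z_1 C(\overline{x}, \overline{0}) + z_2(1 - C(\overline{x}, \overline{F}(\overline{x})))$ is identically zero, using the AND-of-two-PITs trick from the footnote. The deterministic verifier checks $w$ against its \PIT instance and, upon success, outputs the \lang{Variety Emptiness} instance $\{F_1,\dotsc,F_m\}$. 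Any IPS refutation augmented with such an $\cc{NP}$-witness becomes a polynomial-size Cook--Reckhow proof, delivering p-simulation.

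The forward direction is the substantive one, but the heavy lifting is already done in \Cref{lem:algebraic}. Given a Cook--Reckhow system $\mathcal{P}$ p-simulating IPS$_\F$ and an algebraic circuit $C$ as an instance of $\PIT[\F]$, I would first deterministically produce the system $\mathcal{F}_C$ by the logspace-uniform constant-free p-projection of \Cref{lem:algebraic}. Next, I would use the bijection of \Cref{lem:algebraic} with $S = \overline{\F}$ to observe that $Z(\mathcal{F}_C)_{\overline{\F}}$ is empty iff $C$ takes no invertible value on $\overline{\F}^n$, which, because $\overline{\F}$ is algebraically closed, holds iff $C \equiv 0$ as a polynomial. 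Thus $C \in \PIT[\F]$ iff $\mathcal{F}_C$ is a yes-instance of \lang{Variety Emptiness}. In the $C \equiv 0$ case, the ``furthermore'' part of \Cref{lem:algebraic} supplies a polynomial-size IPS$_\F$ refutation of $\mathcal{F}_C$, which by the assumed p-simulation lifts to a polynomial-size $\mathcal{P}$-refutation. The $\cc{NP}$ algorithm for $\PIT[\F]$ then guesses this $\mathcal{P}$-refutation and runs the deterministic Cook--Reckhow verifier on it.

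The main obstacle is not the theorem itself but \Cref{lem:algebraic}, which has already been established: once one has the gate-by-gate encoding of $C$ and the structural-induction derivation of $g_v - f_v$ at each gate, the theorem is a short unrolling. The only remaining subtlety is bit-size bookkeeping, which is harmless under the theorem's hypotheses: for finite fields or algebraic number fields, field elements admit polynomial-bit representations and IPS size is measured in total bit-size, so both the construction of $\mathcal{F}_C$ and the verification of a $\mathcal{P}$-refutation run in Boolean polynomial time. The BSS-model generalization to arbitrary $\F$ alluded to after the theorem statement would follow the same argument with $\cc{NP}$ replaced by $\cc{NP}_\F$.
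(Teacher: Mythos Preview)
Your proposal is correct and follows essentially the same approach as the paper's proof: both directions match, with the forward direction reducing \PIT to \lang{Variety Emptiness} via the gate-by-gate system $\mathcal{F}_C$ of \Cref{lem:algebraic}, invoking the bijection over $\overline{\F}$ to get the iff, and then guessing a $\mathcal{P}$-refutation; the converse direction packages an IPS certificate together with an $\cc{NP}$-witness for the associated \PIT instance(s). Your use of the single combined \PIT instance $z_1 C(\overline{x},\overline{0}) + z_2(1 - C(\overline{x},\overline{F}(\overline{x})))$ is a cosmetic variant of the paper's two separate $\cc{NP}$ certificates, and your explicit justification of the iff via ``$\overline{\F}$ is infinite, so a nonzero polynomial takes an invertible value'' spells out a step the paper leaves implicit.
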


\begin{proof}
($\Leftarrow$) If PIT over $\F$ is in $\cc{NP}$, then IPS (with size measured as total bit-size) has $\cc{NP}$-verifiable proofs. Our Cook--Reckhow proof system takes as input an IPS certificate $F(\overline{x}, \overline{y})$, together with the two $\cc{NP}$ certificates for the two instances of PIT $F(\overline{x}, \overline{0}) = 0$ and $F(\overline{x}, \mathcal{F}_C)-1=0$. This is directly seen to p-simulate IPS.

($\Rightarrow$) Suppose there is a Cook--Reckhow proof system $\mathcal{P}$ that p-simulates IPS for \lang{Variety Emptiness} over $\F$. We give an $\mathsf{NP}$ algorithm for PIT over $\F$. Given an algebraic circuit $C$ of bit-size $n$ over $\F$ consider the following steps. From Lemma~\ref{lem:algebraic}, in logspace we then construct a system of equations $\mathcal{F}_C$ over $\F$ such that $C$ is the identically zero polynomial iff $\mathcal{F}_C$ is unsatisfiable over the algebraic closure $\overline{\F}$. Furthermore, when $\mathcal{F}_C$ is unsatisfiable, there is an IPS proof of this whose size is $\poly(n)$. Then, by assumption, $\mathcal{F}_C$ thus also has a short refutation in the Cook--Reckhow system $\mathcal{P}$. The $\mathsf{NP}$ machine now guesses and verifies a short $\mathcal{P}$-refutation for $\mathcal{F}_C$. If it finds one, it returns YES (the circuit $C$ was identically zero). Otherwise it returns NO.
\end{proof}

\section{From nonzero polynomials to nonzero functions over a finite field}
In this section we lay out the definitions and machinery that allow us to go from nonzero polynomials over a field to nonzero functions over some finite field, encapsulated  in Lemma~\ref{lem:field}.

\begin{lemma} (From nonzero polynomial to nonzero function over a finite field) \label{lem:field}
Let $\F$ be a finite field or the rational numbers. 
Let $C$ be a $\F$-algebraic circuit of total bit-size $s$, computing a polynomial of degree $d$. Then there is a finite field $\K$ of order at most $O(\max\{2^s, d\})$ such that (1) evaluating $C$ over inputs from $\K$ is well-defined,\footnote{While this notion hopefully makes intuitive sense, we can formalize it as follows. Let $p$ be the characteristic of $\F$; so $p$ is either $0$ or a prime, and in the former case we have $\Z/p\Z=\Z/0\Z=\Z$. Let $K$ be the set of coefficients appearing in all polynomials at all gates of $C$, and let $(\Z/p\Z)[K]$ be the ring over $\Z/p\Z$ generated by $K$. Then when we say ``evaluating $C$ over inputs from $\K$ is well-defined,'' what we mean is that there is a ring homomorphism $\varphi\colon (\Z/p\Z)[K] \to \K$ that sends $1$ to $1$, and we consider $C$ as a circuit over $\K$ by applying $\varphi$ to all constants and gates in the circuit.} and (2) the function $\K^n \to \K$ computed by $C$ is not the zero function if and only if $C$ is not identically zero as a formal polynomial, and (3) $\K$ can be constructed in $\cc{TFNP}=\cc{NPMV}_{gt}$.

In the case of characteristic zero, we may take $\K = \F_p$ for any prime $p > \max\{2^s, d\}$, and if $\F = \F_q$ with $q$ a prime power, we may take $\K = \F_{q^e}$ for any $e$ such that  $q^e > d$.
\end{lemma}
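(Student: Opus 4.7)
I would follow the hint in the final sentence of the lemma and handle the two possibilities for $\F$ separately. In both cases the strategy is identical: pick $\K$ large enough that \Cref{lem:pit} forces a nonzero polynomial of degree $d$ to be nonzero as a function on $\K^n$, and construct $\K$ in $\cc{TFNP}$ by guessing its defining data and verifying in $\cc{P}$. Property (2) is essentially just Schwartz--Zippel; the real content is guaranteeing property (1), i.e.\ that the ring homomorphism $\varphi$ of the footnote is well-defined.

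For the finite case $\F = \F_q$, I would set $\K = \F_{q^e}$ where $e$ is least with $q^e > d$, so that $|\K| = O(\max\{2^s,d\})$. Since $\F \subseteq \K$, property (1) is immediate. For property (2), if $f$ is the nonzero polynomial computed by $C$, then \Cref{lem:pit} gives $\Pr_{x \in \K^n}[f(x)=0] \leq d/|\K| < 1$, so some $x \in \K^n$ is a nonzero of $f$; the converse is vacuous. For (3), I would nondeterministically guess a monic $h \in \F_q[x]$ of degree $e$ and verify its irreducibility in polynomial time (e.g.\ by Rabin's test); an irreducible polynomial of every degree over every finite field exists classically, so this nondeterministic function is total.

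For the rational case $\F = \Q$, I would set $\K = \F_p$ for any prime $p$ with $M < p \leq 2M$ where $M := \max\{2^s,d\}$; such a $p$ exists by Bertrand's postulate and can be found in $\cc{TFNP}$ by guessing it in the interval and verifying primality via AKS. Property (2) then parallels the finite case (using that $p > d$ for \Cref{lem:pit} and that $p > 2^s$ bounds the coefficient sizes to keep reduction mod $p$ faithful). Property (1) is the subtle point: the homomorphism $\varphi \colon \Z[K] \to \F_p$ from the footnote must be well-defined, so no coefficient appearing anywhere in $C$ may be killed by $p$, and in particular each syntactically constant value $f_u$ that is fed into an inversion gate must remain a unit modulo $p$. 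Here I would invoke \Cref{obs:sdeg}.2 to uniformly bound the bit-length of every coefficient at every gate of $C$ by $\log_2(\sdeg(C)) \leq s$; consequently all relevant numerators and denominators are $\leq 2^s < p$, so $p$ divides none of them and $\varphi$ is defined.

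I expect the main obstacle to be precisely this well-definedness step over $\Q$: because $C$ can use repeated squaring and nested inversions to produce syntactically constant values of huge apparent bit-complexity, a naive argument would force us to exclude an unbounded collection of ``bad'' primes that divide one of these intermediate constants, and the hinted bound $p > 2^s$ would not obviously suffice. The whole argument hinges on \Cref{obs:sdeg} capping \emph{every} such bit-length by $\log_2(\sdeg(C))$, so that a single prime slightly larger than $2^s$ serves simultaneously for well-definedness of $\varphi$, for preserving nonzeroness of coefficients, and for the Schwartz--Zippel count.
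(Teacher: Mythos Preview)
Your proposal is correct and mirrors the paper's own proof: handle the two characteristics separately, invoke \Cref{lem:pit} for property~(2), and guess-and-verify the defining data of $\K$ for property~(3) (an irreducible polynomial over $\F_q$, resp.\ a prime in the Bertrand interval above $\max\{2^s,d\}$). The only cosmetic difference is that for well-definedness over $\Q$ the paper argues directly that the constants literally appearing in $C$ have bit-length at most $s$ and hence are coprime to any $p>2^s$, whereas you route through \Cref{obs:sdeg}.2; as the paper's remark following the lemma indicates, that observation is really tailored to constant-free circuits, so for arbitrary circuits the paper's direct argument is the safer justification.
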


In particular, if $C$ is a constant-free circuit over $\Q$, then any prime $p > \sdeg(C)$ works, by Observation~\ref{obs:sdeg}. In the case of finite fields, we may take any $e > \log_2 \sdeg(C)$.

\begin{proof}[Proof of Lemma~\ref{lem:field}]
We handle characteristic zero and positive characteristic separately. 

\textit{Characteristic zero.} In characteristic zero, for each gate $v$, let $f_v$ denote the polynomial over $\Q$ computed at gate $v$. Let $D(f_v)$ be the set of integers appearing as the denominators of the coefficients of $f_v$, when each coefficient is put into reduced form (the numerator and denominator are coprime). Let $D = \bigcup_{v \in C} D(f_v)$. Then $C$ computes a well-defined function over $\mathbb{Z}/m\mathbb{Z}$ for any integer $m$ such that every element of $D$ is coprime to $m$.

Now we show that there exists a prime $p$ that is coprime to every element of $D$, has bit-length at most $\poly(|C|)$, and such that the function $\F_p^n \to \F_p$ computed by $C$ is not the zero function if and only if $C$ is not identically zero as a polynomial. Since the bit-length of any constant appearing in $C$ is at most $s$ by assumption, if $p > 2^s$ then all of the nonzero constants appearing in $C$ are coprime to $p$, and have well-defined and nonzero reductions modulo $p$. 

If $C$ is the identically zero polynomial, then clearly $C$ computes the identically zero function on $\F_p^n$. Conversely, if $C$ is not identically zero as a polynomial, then for $p > d$ (the degree of the polynomial computed by $C$), the PIT Lemma~\ref{lem:pit} implies that the function  $\F_p^n \to \F_p$ computed by $C$ is nonzero.

By Bertrand's Postulate, there exists a prime $p$ in between $\max\{2^s,d\}$ and $2\max\{2^s, d\}$, and thus the order of $\F_p$ is at most $2\max\{2^s, d\}$ and $C$ computes a nonzero function over $\F_p$. This proves existence for the case of characteristic zero.

To find such a $p$ constructively, a nondeterministic machine can guess a bit-string of length $\lceil \log_2 \max\{2^s, \sdeg(C)\} \rceil$, prepend it with a 1, and then verify whether the integer encoded by that bit-string is prime. 

\textit{Positive characteristic.} In the case of a finite field of order $q$, we move to an extension field of $\F_q$ to find a nonzero function. As in the case of characteristic zero, once $q^e > d$, the PIT Lemma~\ref{lem:pit} implies that the function computed by $C$ over $\F_{q^e}$ is nonzero iff $C$ is not identically zero as a formal polynomial. It thus suffices to take any integer $e$ such that $q^e > d$. Since $d \leq \sdeg(C) \leq 2^{\proddepth(C)}$, if we take  $e = \lceil \log_q \sdeg(C) \rceil+1$, we then have that $\log_2 |\F_{q^e}| \leq O(\proddepth(C)) \leq O(|C|)$). This completes the proof of existence.

In order to efficiently construct $\K = \F_{q^e}$, we will need to be able to construct an irreducible polynomial of degree $e$ over $\F_q$ . We only need to do this in an $\cc{NP}$ fashion, which can be done by nondeterministically guessing a polynomial of degree $e$ and then verifying that it is irreducible in time $\poly(e, \log q)$ with the standard algorithm (see, e.\,g., \cite[Thm.~14.37]{GGbook}. 
\end{proof}

\begin{remark}[On the level of constructivity]
For our results, we only needed an upper bound of $\mathsf{NPMV}_{gt}$ on constructing $\K$, but in fact we can do quite a bit better.

In the case of characteristic zero, finding a prime $p$ in the right range can be done in Las Vegas randomized time (expected polynomial time with zero error, which one might call $\mathsf{ZPPMV}_{gt}$), rather than just $\mathsf{NPMV}_{gt}$, as follows. A random $n$-bit number is prime with probability $\sim 1/n$ (essentially equivalent to the Prime Number Theorem), and after guessing a random number of $n$ bits, primality can be verified in $\mathsf{P}$ \cite{AKS}. This can be repeated until a prime is found, which on expectation happens after $O(n)$ trials. Derandomizing algorithms for constructing primes is a well-known open question.

In the case of characteristic $p$, finding an irreducible polynomial over $\F_p$ whose degree is in the right range can in fact be done deterministically in polynomial time, using either of the following theorems:

\begin{theorem}[{Shoup \cite{shoup}}] \label{thm:shoup}
Given as input a prime $p$ and target degree $e$, there is an algorithm that constructs a polynomial of degree $e$ that is irreducible over $\F_p$ in time $\tilde O(p^{1/2} e^4)$, where the $\tilde O$ hides factors polynomial in $\log p$ and $\log e$. 
\end{theorem}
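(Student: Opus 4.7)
The plan is to reduce construction of an irreducible polynomial of degree $e$ over $\F_p$ to a deterministic factoring / root-finding subroutine over $\F_p$ whose cost scales as $\sqrt{p}$, with all surrounding manipulations polynomial in $e$ and $\log p$. First I would factor $e = \prod_i \ell_i^{a_i}$ into prime powers. Since $\F_{p^{\ell_i^{a_i}}}$ for distinct primes $\ell_i$ are linearly disjoint extensions of $\F_p$, their compositum is $\F_{p^e}$, so it suffices to construct an irreducible polynomial for each prime power degree and then combine them (for example, by computing the minimal polynomial of a sum of generators via resultants, all in time polynomial in $e$ and $\log p$). For a fixed prime power $\ell^a$, I would then build $\F_{p^{\ell^i}}$ iteratively for $i=1,\dotsc,a$, at each stage constructing an irreducible polynomial of degree $\ell$ over the already-constructed field. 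The remaining core task is therefore: given a prime $\ell$ and a power $q$ of $p$, construct an irreducible polynomial of degree $\ell$ over $\F_q$.

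For this core task I would rely on auxiliary polynomials whose irreducible factorization structure over $\F_p$ is pinned down by elementary number theory. The most convenient choice is a cyclotomic polynomial $\Phi_r(x)$ for a suitable prime $r$ coprime to $p$: it factors over $\F_p$ into $\phi(r)/d$ irreducible factors, each of degree $d$ equal to the multiplicative order of $p$ modulo $r$. By choosing a small prime $r$ of bit-size $\poly(\log p, \log e)$ for which $p$ has order exactly $\ell$ modulo $r$, and then factoring $\Phi_r$ over $\F_p$, one directly extracts an irreducible factor of degree $\ell$. In the special case $\ell = \ch(\F_p) = p$, a different gadget is needed (Artin--Schreier extensions give an alternative that avoids factoring entirely for this layer).

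The hard part, and the source of the $\sqrt{p}$ factor, is the equal-degree factorization subroutine. Deterministic polynomial factorization over $\F_p$ is not known to run in time polynomial in $\log p$; the best deterministic algorithms use a baby-steps/giant-steps search (essentially enumerating coset representatives in a subgroup of $\F_p^\times$ of size $\sqrt{p}$) to separate the roots of an equal-degree factor into distinct conjugacy classes, and this costs $\tilde{O}(\sqrt{p})$ multiplied by a polynomial in the degree. Plugging this into the reductions above, and accounting for the $\tilde{O}(e)$-per-operation cost of arithmetic in $\F_{p^e}$ via fast polynomial multiplication, produces the claimed bound $\tilde{O}(p^{1/2} e^4)$. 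The delicate step where I would expect to spend the most effort is the bookkeeping: controlling the degrees through the tower of extensions and across the prime-power decomposition of $e$ so that the cumulative cost remains $e^4$ rather than blowing up to a higher power of $e$.
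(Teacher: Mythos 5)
The paper does not actually prove this statement; it is an external result cited verbatim from Shoup, and appears only in a remark explaining that the $\cc{NPMV}_{gt}$ bound in \Cref{lem:field} can be sharpened. So there is no ``paper's proof'' to compare against, and I am instead assessing whether your reconstruction of Shoup's algorithm is sound.

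Your outer scaffolding is the standard one (factor $e$ into prime powers, build each $\F_{p^{\ell^a}}$ up a tower, combine linearly disjoint extensions), and the Artin--Schreier handling of $\ell = p$ is right. But the core gadget you propose for $\ell \neq p$ does not go through as stated, and the gap is precisely the point where Shoup departs from the earlier literature. You want a prime $r$ of bit-size $\poly(\log p, \log e)$ such that $p$ has multiplicative order \emph{exactly} $\ell$ modulo $r$, so that a degree-$\ell$ irreducible factor of $\Phi_r$ can be extracted. Such an $r$ must divide $\Phi_\ell(p) = 1 + p + \dotsb + p^{\ell - 1}$, and there is no unconditional guarantee that this quantity has a small prime factor, let alone one you can locate deterministically; the smallest suitable $r$ could a priori be as large as $p^{\ell-1}$. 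This is exactly the obstruction that makes the Adleman--Lenstra construction (\Cref{thm:AL86}) produce a polynomial whose degree lands only in an interval $[e/(c\log p), e]$ rather than hitting $e$ exactly. In other words, you have reconstructed Adleman--Lenstra's cyclotomic approach, not Shoup's, and attributed to it a degree-control property it does not have.

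Shoup's contribution, which your sketch misses, is a way to force the degree to be exactly $\ell$ without relying on favorable factorization of $\Phi_\ell(p)$. Roughly, one passes to a small auxiliary extension $\F_{p^d}$ where $\ell$ divides $p^d - 1$, constructs a degree-$\ell$ Kummer extension $x^\ell - \alpha$ there by locating an $\ell$-th non-residue $\alpha$, and then descends back to $\F_p$ via norm and resultant computations with careful degree bookkeeping. The $\tilde O(\sqrt p)$ factor arises from the deterministic search for non-residues (or, equivalently, a deterministic root-extraction subroutine), not from generic equal-degree factorization of a cyclotomic polynomial. Your final paragraph correctly identifies that a $\sqrt p$-cost subroutine is the bottleneck and that the $e^4$ comes from bookkeeping through the tower, so the shape of your cost accounting is right, but the intermediate construction you hang it on would need to be replaced by Shoup's Kummer/descent argument before the proof is valid.
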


\begin{theorem}[{Adleman \& Lenstra \cite{AL86}}] \label{thm:AL86}
There is a constant $c > 0$ and an algorithm which, given as input a prime $p$ and target degree $e$, constructs a polynomial of degree in the range $\left[\frac{e}{(c \log p)}, e\right]$ that is irreducible over $\F_p$ in time $O((e \log p)^c)$.
\end{theorem}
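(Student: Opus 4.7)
The plan is to follow the classical construction via Gauss periods in cyclotomic extensions of $\F_p$. Given the target degree $e$, the strategy is to locate a small auxiliary prime $\ell$ whose cyclotomic field over $\F_p$ contains a subfield of degree in the desired window $[e/(c\log p),e]$, and then to exhibit a generator of that subfield by an explicit Galois-invariant sum whose minimal polynomial we can extract by linear algebra.

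First I would search through small primes $\ell \neq p$ and compute $r_\ell := \mathrm{ord}_\ell(p)$, the multiplicative order of $p$ in $(\Z/\ell\Z)^*$. The irreducible factors of the $\ell$-th cyclotomic polynomial $\Phi_\ell$ over $\F_p$ all have degree $r_\ell$, and the unique subfield of $\F_p(\zeta_\ell)$ of degree $d_\ell := (\ell-1)/r_\ell$ will provide the target irreducible polynomial. The key number-theoretic lemma is that some prime $\ell = \poly(e,\log p)$ yields $d_\ell \in [e/(c\log p),e]$; this is provable via a quantitative bound on the least prime in arithmetic progressions (a version of Linnik's theorem), combined with the observation that any $\ell$ placing $d_\ell$ \emph{anywhere} in the admissible window suffices.

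Once such an $\ell$ is fixed, I would work inside the ring $R := \F_p[x]/\Phi_\ell(x)$ and let $H = \langle p \rangle \leq (\Z/\ell\Z)^*$, a subgroup of order $r_\ell$ and index $d_\ell$. For a chosen coset $aH$, form the Gauss period $\eta := \sum_{h \in aH} x^h \in R$. A standard calculation shows that $\eta$ is fixed by Frobenius $y \mapsto y^p$ and that its stabilizer in the Galois action of $(\Z/\ell\Z)^*$ on $R$ is exactly $H$, so its minimal polynomial over $\F_p$ has degree precisely $d_\ell$ and is irreducible. To extract that minimal polynomial explicitly, I would compute $1,\eta,\eta^2,\ldots,\eta^{d_\ell}$ in $R$ by ordinary polynomial multiplication modulo $\Phi_\ell$, represent the results as $\F_p$-vectors in the basis $1,x,\ldots,x^{\ell-2}$, and solve for the unique monic $\F_p$-linear dependence by Gaussian elimination.

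The main obstacle is the first step: bounding the least admissible $\ell$ by $\poly(e,\log p)$. This is the genuinely analytic ingredient, and the width of the allowed degree window $[e/(c\log p),e]$ is precisely what makes such a bound provable with known technology, since it gives enough slack in the search to avoid appealing to stronger unproven statements such as GRH. By contrast, the Gauss-period construction and minimal-polynomial extraction reduce to routine $\F_p$-linear algebra in dimension at most $\ell-1 = \poly(e,\log p)$, which comfortably fits inside the claimed $O((e\log p)^c)$ runtime.
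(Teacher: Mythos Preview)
The paper does not prove this theorem at all: it is quoted inside a remark as a known result from \cite{AL86}, with no argument given. There is therefore no ``paper's own proof'' to compare your proposal against; the statement functions purely as a citation, and the paper's results do not depend on it (the surrounding remark only uses it to observe that the $\cc{TFNP}$ bound in \Cref{lem:field} can be sharpened).

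That said, your sketch has a substantive slip worth fixing before you rely on it elsewhere. You correctly note that the irreducible factors of $\Phi_\ell$ over $\F_p$ have degree $r_\ell = \mathrm{ord}_\ell(p)$, so $[\F_p(\zeta_\ell):\F_p] = r_\ell$. But then you aim for ``the unique subfield of $\F_p(\zeta_\ell)$ of degree $d_\ell := (\ell-1)/r_\ell$''; in general there is no such subfield, since subfields of $\F_{p^{r_\ell}}$ have degree dividing $r_\ell$, and $(\ell-1)/r_\ell$ counts the number of irreducible factors of $\Phi_\ell$, not a subfield degree. Relatedly, with $H=\langle p\rangle$ your Gauss period $\eta=\sum_{h\in H}\zeta_\ell^{h}$ is exactly the trace of $\zeta_\ell$ down to $\F_p$, hence lies in $\F_p$ and has degree $1$, not $d_\ell$. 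The Adleman--Lenstra argument instead seeks $\ell$ for which $r_\ell$ itself (or a divisor of it) lands in the target window, and then extracts an irreducible factor of $\Phi_\ell$ (or the minimal polynomial of a suitable trace of $\zeta_\ell$) of that degree; the analytic input controls the least $\ell$ with $r_\ell$ large enough. Your overall architecture---search for a small auxiliary prime, work in the cyclotomic ring, do linear algebra---is the right one, but the roles of $r_\ell$ and $(\ell-1)/r_\ell$ need to be swapped.
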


\end{remark}

\section{Simulating circuits over finite fields by Boolean circuits with short IPS proofs}
Let $\F_q$ be a finite field of order $q=p^e$ with $p$ prime and $q \leq 2^{\poly(n)}$. We will show how to simulate arithmetic over $\F_q$ via CNFs, in such a way that IPS has short derivations of the standard translations of the CNFs. This is similar to the encodings of arithmetic used in Santhanam \& Tzameret \cite{SanthanamTzameret} and in Alekseev, Grigoriev, Hirsch, \& Tzameret \cite{AGHT}, with some key differences. In the former, they simulate arithmetic over finite fields but using a \emph{unary} encoding, whereas in our case we need to use the binary encoding because our field size can be exponential. In the latter, they simulate arithmetic over $\Q$ using a binary encoding, but do not need to deal with taking the numbers modulo some prime. In this section we extend the machinery of \cite{AGHT} to handle arithmetic in positive characteristic, which essentially improves on \cite{SanthanamTzameret} by using a binary encoding. Our key addition here is a Boolean circuit implementing the remainder-mod-$p$ operator, yet that IPS can still reason about efficiently.

\subsection{From circuits over finite fields to circuits over finite \emph{prime} fields}
In this section, we recall the following standard construction and lemma to reduce the case of $\F_{p^e}$ to the case of $\F_p$ with $p$ prime. Note that, since we assume $p^e \leq 2^{\poly(n)}$, that we have $e \leq \poly(n)$. 

\begin{definition}[Vectorization of circuits over a field extension]
Suppose $\K$ is a degree-$e$ extension field of a field $\F$, and let $\iota \colon \K \to \F^e$ be an $\F$-linear bijection such that $\iota^{-1}(e_1)=1$, where $e_1=(1,0,\dotsc,0)$. We define $\overline{\VEC}_{\K/\F}$ and $\VAL_{\K/\F}$ as follows:
\begin{itemize}
\item Given $\overline{v} \in \F^e$, we define $\VAL_{\K/\F}(\overline{v}) := \iota^{-1}(\overline{v})$.

\item For a vector of variables $x_{i,1}, \dotsc, x_{i,e}$ over $\F$, we define $\VAL_{\K/\F}(\overline{x}_i) := \sum_{j=1}^e \iota^{-1}(e_j) x_{i,j}$, where $e_j \in \F^e$ is the $j$-th standard basis vector

\item For $\alpha \in \K$, $\overline{\VEC}_{\K/\F}(\alpha) := \iota(\alpha) \in \F^e$.

\item For a variable $x_i$ over $\K$, we define $\overline{\VEC}_{\K/\F}(x_i) := (x_{i,1}, \dotsc, x_{i,e})$ where each $x_{i,j}$ is a new variable over $\F$

\item For a linear combination gate $F = \sum_{i=1}^k \alpha_i G_i$ over $\K$, we proceed as follows. For any $\alpha \in \K$, we define $L_\alpha \colon \F^e \to \F^e$ by the composition:
\[
\underbrace{\F^e \stackrel{\iota^{-1}}{\to} \K \stackrel{\times \alpha_i}{\to} \K \stackrel{\iota}{\to} \F^e}_{L_{\alpha}}.
\]
Then we define $\overline{\VEC}_{\K/\F}(F) := \sum_{i=1}^k L_{\alpha_i} (\overline{\VEC}_{\K/\F}(G_i))$, where the sum here is sum as elements of $\F^e$, i.e. $e$ parallel independent addition gates over $\F$. 
Since each $L_{\alpha_i}$ is $\F$-linear, these can be combined with the sum $\sum_{i=1}^k$ into a single linear combination gate.

\item We define $\PROD_{\K/\F}(\overline{x}, \overline{y})$ as follows. The linear maps $L_\alpha$ above give an $e$-dimensional representation of rings $\lambda \colon \K \to M_e(\F)$, by $\lambda(\alpha) := L_\alpha$. Since we have assumed $\iota^{-1}(e_1)=1$, it follows that the first column of the $e \times e$ matrix $L_\alpha$ is the same as $\iota(\alpha)$, i.\,e., the vector representation of the corresponding element of $\K$. Let $\pi\colon M_e(\F) \to \F^e$ be the projection onto the first column. Then we define $\PROD_{\K/\F}(\overline{x}, \overline{y})$ to be a depth-3 circuit over $\F$ that takes in $2e$ inputs and has $e$ outputs, and implements the following composition:
\[
\F^e \times \F^e \stackrel{\iota^{-1} \times \iota^{-1}}{\to} \K \times \K \stackrel{\lambda \times \lambda}{\to} M_e(\F) \times M_e(\F) \stackrel{\text{mult}}{\to} M_e(\F) \stackrel{\pi}{\to} \F^e.
\]

The key non-trivial part here is the $e \times e$ matrix multiplication, which can be implemented by the usual depth-2 circuit of size $O(e^3)$, viz. $(A \cdot B)_{ij} = \sum_{k=1}^e A_{ij} \cdot B_{jk}$. (The depth of 3 comes from precomposing this circuit with the linear maps $\lambda \circ \iota^{-1}$. Post-composing with the linear projection $\pi$ can be absorbed into the linear combination gate $\sum_{k=1}^e$ without changing the depth.)

For a multiplication gates $F = G \times H$ over $\K$, we define 
\[
\overline{\VEC}_{\K/\F}(F) := \PROD_{\K/\F}(\overline{\VEC}_{\K/\F}(G), \overline{\VEC}_{\K/\F}(H)).
\]

\item We similarly define $\INV_{\K/\F}(\overline{x})$ as an $e$-input, $e$-output, depth-3 circuit over $\F$ that implements the following composition:
\[
\F^e \stackrel{\iota^{-1}}{\to} \K \stackrel{\lambda}{\to} M_e(\F) \stackrel{inv}{\to} M_e(\F) \stackrel{\pi}{\to} \F^e.
\]
If $F = 1/G$ is an inversion gate (recall: these are only allowed when $G$ is syntactically constant and invertible), then we define $\overline{\VEC}_{\K/\F}(F) := \INV_{\K/\F}(\overline{\VEC}_{\K/\F}(G))$.
\end{itemize}
\end{definition}

\begin{lemma}[Vectorization lemma for circuits over field extensions] \label{lem:extension}
Suppose $C$ is an algebraic circuit over $\K$, and $\K$ is a degree-$e$ extension field of a field $\F$, and let $\iota \colon \K \to \F^e$ be an $\F$-linear bijection such that $\iota^{-1}(e_1)=1$. 
Let $C' := \overline{\VEC}_{\K/\F}(C)$.
\begin{enumerate}
\item $C'$ can be constructed from the description of $C$ in logspace.

\item The following is a polynomial identity, hence has a trivial IPS$_{\K}$ derivation from no axioms:
\[
\VAL_{\K/\F}(\overline{\VEC}_{\K/\F}(C)) \equiv C(\VAL_{\K/\F}(\overline{\VEC}_{\K/\F}(\overline{x}))).
\]

\item The various size measures of $C'$ can be bounded as follows:
\begin{align*}
\wires(C') & \leq e \cdot \wires(C) + O(e^3) \cdot \gates(C) \\
\gates(C') & \leq O(e^3) \cdot \gates(C) \\
\depth(C') & \leq 3 \cdot \depth(C) \\
\proddepth(C') & =  \proddepth(C)
\end{align*}

\end{enumerate}
\end{lemma}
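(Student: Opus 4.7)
The plan is to prove the three parts in order, with part 2 (the polynomial identity) carrying the conceptual content and parts 1 and 3 reducing to bookkeeping. For part 1, the construction $\overline{\VEC}_{\K/\F}$ is defined by purely local gate-by-gate rewrites: each variable input $x_i$ expands to $e$ fresh variables $x_{i,1},\dotsc,x_{i,e}$; each constant input $c$ expands to $\iota(c) \in \F^e$; each linear combination gate expands to a single $\F$-linear combination gate whose new coefficients are the matrix entries of the $L_{\alpha_i}$; and each product or inversion gate is replaced by the fixed depth-$3$ gadget $\PROD_{\K/\F}$ or $\INV_{\K/\F}$, which depends only on $e$ and $\iota$. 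Since each replacement depends only on the local data at its gate, traversing $C$ in topological order and emitting the corresponding description fits in logspace.

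For part 2, I would proceed by structural induction on the gates of $C$. The central algebraic observation is that $\lambda\colon \K \to M_e(\F)$, $\alpha \mapsto L_\alpha$, is the \emph{regular representation} of $\K$ over $\F$ and is therefore an injective $\F$-algebra homomorphism: $L_{\alpha\beta} = L_\alpha L_\beta$, $L_{\alpha+\beta} = L_\alpha + L_\beta$, and $L_1 = I$. The normalization $\iota^{-1}(e_1) = 1$ implies that the first column of $L_\alpha$ is exactly $\iota(\alpha)$, so the final projection $\pi$ inside each gadget is compatible with $\iota$. Given these two facts, the inductive step for each gate type is a direct translation between arithmetic in $\K$ and matrix arithmetic over $\F$: for a product gate one uses $\pi(L_\alpha L_\beta) = \iota(\alpha\beta)$; for a linear combination gate, $\F$-linearity of $\iota^{-1}$ together with $\iota^{-1}(L_{\alpha_i} v) = \alpha_i \iota^{-1}(v)$; for an inversion gate, $\pi(L_\alpha^{-1}) = \iota(\alpha^{-1})$, which is well-defined precisely because inversion gates in $C$ are only allowed on invertible inputs. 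Both sides of the claimed identity therefore agree as formal polynomials in the fresh $\F$-variables, so the identity has a trivial IPS$_\K$ derivation.

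For part 3, I would count directly: each gate of $C$ is replaced by a gadget of size $O(e^3)$ (matrix multiplication inside $\PROD_{\K/\F}$ dominates), giving $\gates(C') \leq O(e^3)\gates(C)$; each wire of $C$ carries one element of $\K$ and is replaced by $e$ parallel wires over $\F$, contributing $e\cdot\wires(C)$, while each gadget adds at most $O(e^3)$ internal wires, matching the stated wires bound. Each gadget has depth at most $3$ (two linear-map layers wrapping a single multiplication layer), so every path stretches by at most a factor of $3$; crucially, $\INV_{\K/\F}$ is only invoked on syntactically constant inputs and can therefore be hardcoded as a constant vector, so it introduces no depth beyond that. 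For product-depth, each $\PROD_{\K/\F}$ gadget contains exactly one layer of multiplications (the entrywise products $A_{ij}B_{jk}$), with all linear combinations and projections absorbed into the surrounding linear layers, yielding $\proddepth(C') = \proddepth(C)$. The main obstacle is not in any single step but in organizing part 2 cleanly: one must track $\iota$, $\iota^{-1}$, $\lambda$, $L_\alpha$, $\pi$, and matrix multiplication simultaneously, and invoke ``first column equals $\iota$'' precisely at the interface where $\VAL$ meets $\overline{\VEC}$; once that correspondence is pinned down, all three parts fall out essentially automatically.
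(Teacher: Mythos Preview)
Your proposal is correct. The paper actually omits the proof of this lemma entirely (it states the lemma and proceeds directly to a remark), treating the three parts as immediate consequences of the definition of $\overline{\VEC}_{\K/\F}$; your structural induction on gates, driven by the fact that $\lambda$ is the regular representation and that $\pi \circ L_\alpha$ recovers $\iota(\alpha)$, is exactly the natural way to fill in those details, and your bookkeeping for the size bounds and the handling of inversion gates (hardcoding, since their inputs are syntactically constant) is sound.
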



\begin{remark}
If, in the above observation, we desired a circuit $C''$ over $\F$ with a single output that was guaranteed to be a nonzero function when $C \not\equiv 0$, instead of $C'$ with multiple outputs, it is actually a little bit tricky. (This will not be an issue for our application, but is a natural question.) Certainly one of the output gates of $C'$ computes a nonzero function over $\F$. But how to tell which one? More flexibly, there exists an $\F$-linear combination of the $e$ output gates that will produce a nonzero function (in fact, a positive fraction of all such linear combinations will have this property), but how to find one deterministically? Although we do not know how to answer this question, we note that this problem need \emph{not} be as hard as finding a nonzero linear combination among $e$ polynomials, when such a linear combination is guaranteed to exist. For, in our case, the $e$ polynomials we are looking at are guaranteed to be the $e$ coordinates of a single algebraic circuit over $\K$, whose size is comparable to the size of the $e$-output $\F$-circuit we ended up with.
\end{remark}

\subsection{From circuits over finite prime fields to Boolean circuits}
We now focus on the case $\F=\F_p$ with $p > 0$ prime; by our assumption, we have $p \leq 2^{\poly(n)}$. For consistency, we follow the notation of \cite[Section~5]{AGHTarxiv}. Whereas they worked in the two's complement notation over $\Z$, we will (mostly) work with ordinary binary encoding for $\{0, \dotsc, p-1\}$ over $\Z/p\Z$, encoded into $b_p := \lfloor \log_2 p \rfloor + 1$ bits; note that $b_p \leq \poly(n)$. We will essentially work over $\Z$, but we will take our results modulo $p$ after each operation (addition or multiplication). When we take a product of two elements of $\Z$ in $\{0, \dotsc, p-1\}$, it may be as large as $p^2$, hence require $\sim 2b_p$ bits to represent, but this is still $\poly(n)$.

We use the $\VAL$ operation (which takes a bit-vector to the corresponding integer it represents in two's complement notation), and the $\ADD, \PROD, \BIT$ Boolean circuits from \cite{AGHTarxiv}. We define the $\VAL_+$ operation to be an ``unsigned'' version of their $\VAL$ operation, in ordinary (rather than two's-complement) binary notation, that is, 
\[
\VAL_+(\overline{x}) := \sum_{i=0}^{b-1} 2^i x_i,
\]
where $b$ is the number of bits in (i.\,e., the length of) the bit-vector $\overline{x}$.

\begin{definition}[The binary value mod $p$ operation $\VAL_p$]
Given a bit vector $x_{b-1} x_{b-2} \dotsc x_0$, denoted $\overline{x}$, we define the following algebraic circuit with $\Z/p\Z$ coefficients:
\[
\VAL_p(\overline{x}) := \sum_{i=0}^{b-1} 2^i x_i \pmod{p}.
\]
Note that here we are working in $\Z/p\Z$, so each $2^i \pmod{p}$ is in fact an element of $\{0,\dotsc,p-1\}$, as is the entire sum.
\end{definition}

\begin{lemma} \label{lem:rem}
For each $b \geq 1$, there is a Boolean circuit $\overline{\REM}_p(\overline{x})$ with $b$ variables and $b_p$ outputs such that 
\begin{enumerate} 
\item $0 \leq \VAL_+(\overline{\REM}_p(\overline{x})) < p$ for all $\overline{x}$ and $\VAL_+(\overline{\REM}_p(\overline{x})) \equiv \VAL_+(\overline{x}) \pmod{p}$.

\item IPS$_{\F_p}$ efficiently derives $\VAL_p(\overline{\REM}_p(\overline{x})) - \VAL_p(\overline{x})$ from the Boolean axioms.

\item Given $b,p$ as input, $\overline{\REM}_p$ can be constructed in time $\poly(b, \log_2 p)$ (in particular, it has size at most $\poly(b,\log_2 p)$)
\end{enumerate}
\end{lemma}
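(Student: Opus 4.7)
The plan is to compute $\VAL_+(\overline{x}) \bmod p$ iteratively, maintaining a $b_p$-bit running partial sum that at stage $i$ equals $\sum_{j<i} 2^j x_j \bmod p$. First, precompute the constants $c_i := 2^i \bmod p$ for $i = 0, \ldots, b-1$ as hard-coded $b_p$-bit strings (in time $\poly(b, \log p)$ by repeated squaring modulo $p$). Initialize $s_0 := 0^{b_p}$. For each $i$, update $s_i \to s_{i+1}$ in three substeps: (a) form $a_i := x_i \wedge c_i$ bitwise, so $a_i$ equals $c_i$ when $x_i = 1$ and is $0$ otherwise; (b) compute $t_i := \ADD(s_i, a_i)$ as a $(b_p+1)$-bit unsigned sum, using the binary addition circuit of \cite{AGHT}; (c) compute $r_i := \SUB(t_i, p)$ via the subtraction circuit of \cite{AGHT}, and define $s_{i+1}$ bit-by-bit by $s_{i+1}^{(k)} := d_i \cdot t_i^{(k)} + (1 - d_i) \cdot r_i^{(k)}$ for $k < b_p$, where $d_i$ is the borrow bit of $r_i$ (signalling $t_i < p$). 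Output $\overline{\REM}_p(\overline{x}) := s_b$. Each stage costs $\poly(b_p)$ Boolean gates, for a total size of $\poly(b, \log p)$, establishing Property~3.

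For Property~1, a direct induction on $i$ gives $\VAL_+(s_i) < p$ and $\VAL_+(s_i) \equiv \sum_{j<i} 2^j x_j \pmod p$: since $s_i < p$ and $a_i \in \{0, c_i\} \subseteq [0, p)$, we have $t_i < 2p$, and substep (c) selects whichever of $t_i$ or $t_i - p$ lies in $[0, p)$.

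For Property~2, I would derive by induction on $i$ in IPS$_{\F_p}$, from the Boolean axioms on the input and intermediate bit-variables, the polynomial identity $\VAL_p(s_i) = \sum_{j<i} c_j x_j$; the case $i = b$ is the desired conclusion since $\sum_{j<b} c_j x_j = \VAL_p(\overline{x})$ in $\F_p$. The inductive step chains three short sub-derivations: (i) the $\ADD$ correctness of \cite{AGHT} gives $\VAL_+(t_i) = \VAL_+(s_i) + \VAL_+(a_i)$, which reduced mod $p$ becomes $\VAL_p(t_i) = \VAL_p(s_i) + x_i c_i$ (using $\VAL_p(a_i) = x_i c_i$, immediate from $a_i = x_i \wedge c_i$ and $c_i < p$); (ii) the $\SUB$ correctness gives $\VAL_p(r_i) = \VAL_p(t_i) - p = \VAL_p(t_i)$ in $\F_p$; and (iii) the multiplexer identity, together with the fact that truncation to $b_p$ bits preserves $\VAL_p$ in each branch, yields $\VAL_p(s_{i+1}) = d_i \VAL_p(t_i) + (1-d_i) \VAL_p(r_i) = \VAL_p(t_i)$. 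Combining (i)--(iii) with the inductive hypothesis closes the induction.

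I expect the main obstacle to be step (iii): rigorously reducing the ``truncation preserves $\VAL_p$'' identities to polynomial consequences of the bit-level correctness of $\ADD$ and $\SUB$ in \cite{AGHT} together with the Boolean axioms. Concretely, the needed facts are $d_i \cdot t_i^{(b_p)} \equiv 0$ and $(1-d_i) \cdot r_i^{(b_p)} \equiv 0$ modulo those axioms. The second is immediate because $r_i^{(b_p)} = d_i$ by definition and $d_i(1-d_i) = 0$ is the Boolean axiom for $d_i$. The first captures the range fact that $d_i = 1$ forces $t_i < p < 2^{b_p}$, which follows from the invariant $s_i, a_i < p$ (maintained inductively alongside the $\VAL_p$ identity) plus the ADD-correctness identity expressing $t_i^{(b_p)}$ as the carry of $\VAL_+(s_i) + \VAL_+(a_i)$. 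Each inductive step therefore adds $\poly(b_p)$ gates to the IPS certificate, giving a $\poly(b, \log p)$-size derivation overall.
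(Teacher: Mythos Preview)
Your approach differs from the paper's: you accumulate $\sum_{j<i} 2^j x_j \bmod p$ bottom-up with a modular reduction after each bit, while the paper subtracts $2^{i} p$ top-down, long-division style. Both rely on the same core idea---in $\F_p$ the two branches of a ``subtract a multiple of $p$'' conditional have equal $\VAL_p$, so IPS need not reason about which branch is taken---and both chain a Composition Lemma over the iterations.

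You correctly single out step~(iii) as the main obstacle, but your proposed resolution has a genuine gap. You want $d_i \cdot t_i^{(b_p)} \equiv 0$ and $(1-d_i)\cdot r_i^{(b_p)} \equiv 0$ modulo the Boolean axioms. For the first you invoke ``the invariant $s_i, a_i < p$ (maintained inductively alongside the $\VAL_p$ identity)''---but that is precisely \emph{inequality} reasoning inside IPS, which the paper explicitly flags as not known to be efficient (it is open whether IPS is equivalent to the Cone Proof System). For the second you assert ``$r_i^{(b_p)} = d_i$ by definition,'' which is false: padding $p$ to $b_p{+}1$ bits with a leading~$0$, a ripple-borrow subtractor gives $r_i^{(b_p)} = t_i^{(b_p)} \oplus b_{b_p}$ while the borrow-out is $d_i = (1 - t_i^{(b_p)})\, b_{b_p}$, and these agree only under the semantic bound $t_i < 2p$---again an inequality. (Ironically, $d_i \cdot t_i^{(b_p)}$ \emph{does} vanish purely from the circuit structure, since $d_i$ already carries the factor $1 - t_i^{(b_p)}$; so that half can be repaired, but not by the invariant-based argument you give. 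The other half cannot: modulo the Boolean axioms, $(1-d_i) r_i^{(b_p)}$ reduces to $t_i^{(b_p)}(1 - b_{b_p})$, which is not identically zero.)

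The paper's construction sidesteps this by never shrinking the bit-width: the running value $\overline{x}^{\,i}$ keeps all $b$ bits throughout, and each step's two branches---``keep $\overline{x}^{\,i-1}$'' versus ``$\overline{x}^{\,i-1} - 2^{b-i}p$''---already have equal $\VAL_p$ \emph{before} any truncation is contemplated. It is exactly your per-step reduction down to $b_p$ bits that forces the unwanted high-bit identities; if you drop it, the bit-width grows each iteration and, through the recursive definition of $\overline{\BIT}_p$, blows up with multiplicative depth.
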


We suppress the dependence on $b$ from the notation, since it is implied by the number of bits in the bit-vector $\overline{x}$. 

\begin{proof}[Proof idea]
The circuit essentially implements the following pseudo-code:
\begin{verbatim}
for i = len(x)-1 down to 0 do
  if 2^i * p <= x then:
    x = x - 2^i * p
  end if
end for
return x
\end{verbatim}
The loop is unrolled into a circuit. Each iteration of the loop corresponds to a part of the circuit that looks intuitively like
\begin{center}
 ``if $x < 2^i p $ then $x$, else $x - 2^i p$''. 
 \end{center}
 Although it is open whether or not IPS can reason about \emph{in}equalities---in general this can be done iff IPS is equivalent to the Cone Proof System \cite{AGHT}---in this case we are able to get around this potential issue as follows. 

It is the use of the conditional actually helps save us from having to really deal with inequalities, as follows. As a Boolean circuit, the conditional is syntactically of the form $(\varphi \wedge A) \vee (\neg \varphi \wedge B)$, where $\varphi$ encodes the inequality $x < 2^i p$, $A$ encodes continuing to use $x$ and running the remaining $i-1$ iterations, and $B$ encodes replacing $x$ by $x - 2^i p$ and then running the remaining $i-1$ iterations. The algebraic translation of this has the form:
\[
\alg(\varphi) \alg(A) + (1-\alg(\varphi)) \alg(B) + \alg(\varphi)(1-\alg(\varphi))\alg(A)\alg(B).
\]
Since $A$ and $B$ have polynomial-sized Boolean circuits, their translations have polynomial-size algebraic circuits. The final term here is cancelable by deriving $\alg(\varphi)^2 - \alg(\varphi)$ from the Boolean axioms---since $\varphi$ has a small Boolean circuit---and then multiplying that by the polynomial-sized circuit $\alg(A)\alg(B)$. 

For the remaining two terms, as long as IPS can prove that both $\alg(A)$ and $\alg(B)$ are equal to $\VAL_p(\overline{x})$, it can efficiently derive that the whole thing has value equal to $\VAL_p(\overline{x})$, viz.:
\begin{align*}
& \alg(\varphi) (\alg(A) - \VAL_p(\overline{x})) + (1-\alg(\varphi)) (\alg(B) - \VAL_p(\overline{x})) \\
= & (\alg(\varphi) \alg(A) + (1-\alg(\varphi)) \alg(B)) - \VAL_p(\overline{x}).
\end{align*}

Finally, we mention an issue that is not technically needed for correctness, and is obvious in retrospect, but was a stumbling block in discovering the proof. We note that the for loop does not need to know ``when to stop subtracting multiples of $p$''. For, if at any point, the value of $x$ has come to lie in the interval $\{0,\dotsc,p-1\}$, whenever $x_i = 1$ after that, we have that $x < p$, so the remaining iterations will simply have no effect on $x$, as desired.
\end{proof}

Before coming to the proof, the proof idea above has already highlighted the need for the following lemma, as we will apply it repeatedly for each iteration of the \texttt{for} loop.

\begin{lemma}[Composition Lemma] \label{lem:composition}
Let $\overline{C}(\overline{x}), \overline{D}(\overline{x})$ be two $b$-input, $b$-output Boolean circuits, where $D$ has size $s$. Suppose there are IPS derivations of 
\[
\VAL_p(\overline{C}(\overline{x})) - \VAL_p(\overline{x}) \text { and } \VAL_p(\overline{D}(\overline{x})) - \VAL_p(\overline{x})
\]
from the Boolean axioms, of sizes $t$ and $r$, respectively. Then there is an IPS derivation of
\[
\VAL_p(\overline{C}(\overline{D}(\overline{x}))) - \VAL_p(\overline{x})
\]
from the Boolean axioms, of size $t+r+\poly(s)$.
\end{lemma}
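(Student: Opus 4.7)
The plan is to express the target polynomial as the telescoping sum
$$\VAL_p(\overline{C}(\overline{D}(\overline{x}))) - \VAL_p(\overline{x}) = \bigl[\VAL_p(\overline{C}(\overline{D}(\overline{x}))) - \VAL_p(\overline{D}(\overline{x}))\bigr] + \bigl[\VAL_p(\overline{D}(\overline{x})) - \VAL_p(\overline{x})\bigr].$$
The second bracket is handled immediately by the hypothesized derivation for $\overline{D}$, of size $r$. For the first bracket, I will start from the given IPS certificate $F(\overline{x},\overline{y})$ of size $t$ that derives $\VAL_p(\overline{C}(\overline{x})) - \VAL_p(\overline{x})$, with placeholders $\overline{y}$ standing for the Boolean axioms $\{x_i^2 - x_i\}$, and substitute $\overline{x} \mapsto \overline{D}(\overline{x}')$ throughout. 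The resulting circuit $F(\overline{D}(\overline{x}'),\overline{y})$ uses only $t + O(s)$ wires---I share a single copy of $\overline{D}$ and reroute the edges formerly feeding the $\overline{x}$-inputs of $F$---and remains a valid IPS certificate, but now for $\VAL_p(\overline{C}(\overline{D}(\overline{x}'))) - \VAL_p(\overline{D}(\overline{x}'))$ \emph{from the substituted axioms} $\{D_i(\overline{x}')^2 - D_i(\overline{x}')\}$, rather than from the genuine Boolean axioms on $\overline{x}'$.

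The remaining step is to pay the price of this substitution by rederiving each $D_i(\overline{x}')^2 - D_i(\overline{x}')$ from the genuine Boolean axioms $\{x'_j{}^2 - x'_j\}$. This is the standard fact (an instance of \cite[Lem.~3.6]{GrochowPitassi}, already invoked in the proof of \Cref{lem:cnf_vs_clauses}) that the algebraic translation of any Boolean circuit satisfies the Boolean axiom modulo IPS: since each $D_i$ is a Boolean circuit of size at most $s$, I obtain IPS certificates $G_i(\overline{x}', \overline{z}_i)$ of size $\poly(s)$ deriving $D_i^2 - D_i$ from the Boolean axioms on $\overline{x}'$. Plugging $G_i$ into the placeholder $y_i$ of $F(\overline{D}(\overline{x}'), \overline{y})$ converts the substituted axioms back into genuine Boolean axioms on $\overline{x}'$, at an additive cost of $b \cdot \poly(s) = \poly(s)$ (since $b \leq s$). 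Adding this to the size-$r$ certificate for the second bracket, via one final linear-combination gate, yields the required IPS derivation of total size $t + r + \poly(s)$.

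The only delicate point, and the main source of bookkeeping, is keeping the various placeholder variables syntactically disjoint: the $\overline{y}$ used by $F$, the $\overline{z}_i$ used by the $G_i$, and the placeholders used by the hypothesized derivation for $\overline{D}$ must be renamed to be pairwise distinct, so that the composite object is a single IPS certificate living in the ideal generated by one copy of each genuine Boolean axiom for $\overline{x}'$. Beyond this, the argument is just the substitution-and-transitivity principle for IPS---composing derivations by plugging the certificate of one into the placeholder axiom slot of another. I do not anticipate a serious obstacle; the conceptual content is entirely the telescoping decomposition together with the observation that $D_i^2 - D_i$ is cheap to derive whenever $D_i$ is a Boolean circuit, and the rest is size accounting.
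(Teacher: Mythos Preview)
Your proposal is correct and is essentially the paper's proof: the paper constructs the certificate $J(\overline{x},\overline{y}) := F(\overline{D}(\overline{x}), \overline{H}(\overline{x},\overline{y})) + G(\overline{x},\overline{y})$, where $F,G$ are the given certificates for $\overline{C},\overline{D}$ and $\overline{H}$ derives $D_i^2-D_i$ from the Boolean axioms---exactly your telescoping plus substitution plus \cite[Lem.~3.6]{GrochowPitassi}. One minor simplification: your ``delicate point'' about keeping placeholder variables pairwise disjoint is unnecessary, since every placeholder in sight stands for one of the same $b$ Boolean axioms $x_j^2-x_j$, so they should all be \emph{identified} with a single tuple $\overline{y}$ rather than renamed apart; the paper does this and the verification that $J(\overline{x},\overline{0})=0$ goes through directly.
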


\begin{proof}
In addition to the IPS certificates stated, we will also need the IPS derivation of the Boolean equations for $\overline{D}(\overline{x})$ (that is, $D_i(\overline{x})^2 - D_i(\overline{x})$) from the Boolean axioms \cite[cf. Lem.~3.6]{GrochowPitassi}. We call these three certificates $F,G,H$, that is, we have
\begin{align*}
F(\overline{x}, \overline{0}) & =  0 & 
G(\overline{x}, \overline{0})  & = 0 \\
F(\overline{x}, \overline{x}^2 - \overline{x}) &  = \VAL_p(\overline{C}(\overline{x})) - \VAL_p(\overline{x}) & 
G(\overline{x}, \overline{x}^2 - \overline{x}) &  = \VAL_p(\overline{D}(\overline{x})) - \VAL_p(\overline{x}) \\
\end{align*}
\begin{align*}
H_i(\overline{x}, \overline{0}) & = 0  \\
H_i(\overline{x}, \overline{x}^2 - \overline{x}) & = D_i(\overline{x})^2 - D_i(\overline{x}).
\end{align*}
where $F$ has size $t$, $G$ has size $r$, and $H$ has size $\poly(s)$. 

We claim that 
\[
J(\overline{x}, \overline{y}) := F(\overline{D}(\overline{x}), \overline{H}(\overline{x}, \overline{y})) + G(\overline{x}, \overline{y})
\]
is an IPS derivation of $\VAL_p(\overline{C}(\overline{D}(\overline{x})) - \VAL_p(\overline{x})$ of size $t + \poly(s) + r + 1$. The circuit $\overline{D}$ has size $s$, the circuit $\overline{H}$ has size $\poly(s)$ \cite[Lem.~3.6]{GrochowPitassi}, $F$ has size $t$, $G$ has size $r$, and we incur one additional gate for the addition between $F$ and $G$, but that $+1$ can be absorbed into the $\poly(s)$ summand.

To see that $J$ is an IPS derivation as claimed, we calculate:
\begin{align*}
J(\overline{x}, \overline{0})  & = F(\overline{D}(\overline{x}), \overline{H}(\overline{x}, \overline{0})) + G(\overline{x}, \overline{0}) \\
& =  F(\overline{D}(\overline{x}), \overline{0}) + 0 & \text{(since $H_i, G$ are IPS certificates)} \\
& =  0 & \text{(since $F$ is an IPS certificate)}
\end{align*}
Finally, we have
\begin{align*}
J(\overline{x}, \overline{x}^2 - \overline{x}) & = F(\overline{D}(\overline{x}), \overline{H}(\overline{x}, \overline{x}^2-\overline{x})) + G(\overline{x}, \overline{x}^2-\overline{x}) \\
 & = F(\overline{D}(\overline{x}), \overline{D}(\overline{x})^2 - \overline{D}(\overline{x})) + \left(\VAL_p(\overline{D}(\overline{x})) - \VAL_p(\overline{x})\right) \\
 & =  \left(\VAL_p(\overline{C}(\overline{D}(\overline{x}))) - \VAL_p(\overline{D}(\overline{x})) \right) + \left(\VAL_p(\overline{D}(\overline{x})) - \VAL_p(\overline{x})\right) \\
 & =  \VAL_p(\overline{C}(\overline{D}(\overline{x}))) - \VAL_p(\overline{x}).\\
\end{align*}
Here, the second lines follows from the above properties of $H_i$ and $G$, and the third line follows from substituting in $\overline{D}(\overline{x})$ for $\overline{x}$ in $F(\overline{x}, \overline{x}^2-\overline{x})$.
\end{proof}

\begin{corollary} \label{cor:composition}
Suppose $\overline{C}_1(\overline{x}), \dotsc, \overline{C}_k(\overline{x})$ are $k$ Boolean circuits each with $b$ inputs and $b$ outputs, each of size at most $s$. Suppose there are IPS derivations of
\[
\VAL_p(\overline{C}_i(\overline{x})) - \VAL_p(\overline{x})
\]
of size $t_i$ from the Boolean axioms. Then there is an IPS derivation of
\[
\VAL_p(\overline{C}_1(\overline{C}_2(\dotsb (\overline{C}_k(\overline{x})))) - \VAL_p(\overline{x})
\]
of size $\sum_{i=1}^k t_i + k \cdot \poly(s)$.
\end{corollary}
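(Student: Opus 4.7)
The plan is to prove Corollary \ref{cor:composition} by induction on $k$, applying the Composition Lemma (Lemma \ref{lem:composition}) one circuit at a time. A key decision is the \emph{direction} of composition: I would compose from the outside in. Specifically, I define $F_1 := \overline{C}_1$ and $F_j := F_{j-1} \circ \overline{C}_j$ for $j \geq 2$, so that $F_k(\overline{x}) = \overline{C}_1(\overline{C}_2(\dotsb \overline{C}_k(\overline{x})))$ is the desired composition. The inductive hypothesis would be that there is an IPS$_{\F_p}$ derivation of $\VAL_p(F_j(\overline{x})) - \VAL_p(\overline{x})$ of size at most $T_j \leq \sum_{i=1}^{j} t_i + (j-1)\poly(s)$.

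The base case $j=1$ is immediate from the hypothesis on $\overline{C}_1$. For the inductive step, I would use the identity $F_{j+1}(\overline{x}) = F_j(\overline{C}_{j+1}(\overline{x}))$ and apply Lemma \ref{lem:composition} with $F_j$ playing the role of the outer circuit $\overline{C}$ and the newly added innermost circuit $\overline{C}_{j+1}$ playing the role of $\overline{D}$. The inductive hypothesis supplies the IPS derivation of $\VAL_p(F_j(\overline{x})) - \VAL_p(\overline{x})$ of size $T_j$, while the hypothesis of the corollary supplies the IPS derivation of $\VAL_p(\overline{C}_{j+1}(\overline{x})) - \VAL_p(\overline{x})$ of size $t_{j+1}$. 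Lemma \ref{lem:composition} then yields an IPS derivation of $\VAL_p(F_{j+1}(\overline{x})) - \VAL_p(\overline{x})$ of size at most $T_j + t_{j+1} + \poly(s)$. Telescoping over $j = 1, \dotsc, k-1$ gives the claimed bound $\sum_{i=1}^k t_i + k\cdot \poly(s)$.

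The only real subtlety -- and effectively the main obstacle to obtaining the bound \emph{as stated} -- is this choice of induction direction. The $\poly(s)$ overhead in Lemma \ref{lem:composition} depends only on the size of the \emph{inner} circuit $\overline{D}$, because that lemma's proof needs an IPS certificate (the circuit $\overline{H}$) for the Boolean axioms on $\overline{D}$'s output bits, whose size is $\poly(|\overline{D}|)$. If one instead composed from the inside out, then at step $j$ the inner circuit would itself be a growing composition of size up to $(k-j)s$, so the $j$-th application of the lemma would contribute $\poly((k-j)s)$, and the total would blow up to roughly $k^{c+1}\poly(s)$ if the polynomial in Lemma \ref{lem:composition} has degree $c$, rather than the desired $k \cdot \poly(s)$. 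Composing from the outside in keeps $\overline{D} = \overline{C}_{j+1}$ of size at most $s$ at every step, so the per-step overhead remains $\poly(s)$ and the sum telescopes cleanly. Beyond that observation, the proof should be a direct unrolling of Lemma \ref{lem:composition}.
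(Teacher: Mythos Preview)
Your proposal is correct and matches the paper's proof essentially verbatim: the paper also inducts on $k$ and at each step applies Lemma~\ref{lem:composition} with $C = C_1 \circ \dotsb \circ C_k$ as the outer circuit and $D = C_{k+1}$ as the inner one, exactly the ``outside-in'' direction you identify. Your explicit discussion of why this direction is necessary (to keep the $\poly(s)$ overhead bounded per step) is a helpful clarification that the paper leaves implicit.
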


\begin{proof}
By induction on $k \geq 2$. The base case is Lemma~\ref{lem:composition}. Now suppose $k > 2$ and the result holds for $k$; we will show it holds for $k+1$. By induction, we have that there is an IPS derivation of $\VAL_p(\overline{C}_1(\overline{C}_2(\dotsb (\overline{C}_k(\overline{x})))) - \VAL_p(\overline{x})$ of size $\sum_{i=1}^k t_i + \poly(k, s)$. By assumption, there is an IPS derivation of $\VAL_p(\overline{C}_{k+1}(\overline{x})) - \VAL_p(\overline{x})$ of size $t_{k+1}$, and $\overline{C}_{k+1}(\overline{x})$ has size at most $s$. By applying Lemma~\ref{lem:composition} with $C = C_1 \circ \dotsb \circ C_k$ and $D = C_{k+1}$, we get that there is an IPS derivation of $\VAL_p(\overline{C}_1(\overline{C}_2(\dotsb (\overline{C}_{k+1}(\overline{x})))) - \VAL_p(\overline{x})$ from the Boolean axioms of size $\sum_{i=1}^k t_i + k \cdot \poly(s) + t_{k+1} + \poly(s) = \sum_{i=1}^{k+1} t_i + (k+1) \cdot \poly(s)$, as claimed.
\end{proof}

\begin{observation}
Given $i$ and $p$, a Boolean circuit $\LT_{i,p}$ can be constructed in time $\poly(i,\log p)$ such that $\LT_{i,p}(\overline{x})=1$ if and only if $\VAL_+(\overline{x}) < 2^i p$.
\end{observation}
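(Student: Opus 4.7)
The plan is to build the standard most-significant-bit-first magnitude comparator, specialized to the case where one operand is the hard-wired constant $2^i p$. First I would compute the binary expansion $(y_0, y_1, \ldots, y_{N-1})$ of the integer $2^i p$, where $N = i + b_p$; since $p$ is given and $b_p \leq \log_2 p + 1$, this step takes time $\poly(i, \log p)$. I may assume the input $\overline{x}$ has length exactly $N$: if $\overline{x}$ is shorter, pad with constant $0$s on the high end; if it is longer, AND the final output with $\bigwedge_{k \geq N} \neg x_k$, since any bit at position $\geq N$ being $1$ already forces $\VAL_+(\overline{x}) \geq 2^N > 2^i p$.

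The core construction uses the standard identity
\[
\VAL_+(\overline{x}) < \VAL_+(\overline{y}) \iff \exists\, j :\ x_j = 0,\ y_j = 1,\ \text{and}\ x_k = y_k\ \text{for all } k > j,
\]
realized by the Boolean circuit
\[
\LT_{i,p}(\overline{x}) := \bigvee_{j\, :\, y_j = 1} \left( \neg x_j \wedge \bigwedge_{k > j} E_k(x_k) \right),
\]
where $E_k(x_k) := x_k$ if $y_k = 1$ and $E_k(x_k) := \neg x_k$ if $y_k = 0$. Because the bits $y_k$ are constants baked into the circuit's topology, no additional gates are needed to represent them, and the total size (and hence construction time) is $O(N^2) = \poly(i, \log p)$.

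Correctness is the textbook inductive argument on bit position: the outer disjunction guesses the most significant index $j$ at which $\overline{x}$ and $\overline{y}$ disagree, the inner conjunction certifies both that $(x_j, y_j) = (0,1)$ and that all strictly higher bits coincide, which is exactly the condition $\VAL_+(\overline{x}) < \VAL_+(\overline{y}) = 2^i p$. I expect no real obstacle here; the only routine care is choosing $N$ large enough to accommodate the binary expansion of $2^i p$ and handling any length mismatch between $\overline{x}$ and $N$ by the zero-padding or masking described above.
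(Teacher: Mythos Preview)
Your proposal is correct and takes essentially the same approach as the paper: compute the binary representation of $2^i p$ (the paper phrases this as left-shifting $\overline{\BIT}(p)$ by $i$ bits) and then feed it, together with $\overline{x}$, into a standard polynomial-size integer comparator. The paper is in fact even less detailed than you are—it simply invokes ``a standard circuit for comparing integers $\LT(\overline{x},\overline{y})$'' and explicitly remarks that the internal details of $\LT$ are irrelevant for its purposes, so your explicit MSB-first comparator construction is a perfectly acceptable instantiation.
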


\begin{proof}
In time $\poly(i,\log_2 p)$ we can calculate the binary representation $\overline{\BIT}(2^i p)$, by taking the binary representation of $p$ and shifting it to the left (padding with zeros in lower-order bits) by $i$ bits. We then use a standard circuit for comparing integers, $\LT(\overline{x}, \overline{y})$. Because of the structure of the proof of Lemma~\ref{lem:rem} as outlined above, the details of $\LT$ won't be important, so long as it has polynomial size.
\end{proof}

Now we come to the proof of Lemma~\ref{lem:rem} for the remainder circuit $\REM_p$.

\begin{proof}[Proof of Lemma~\ref{lem:rem}]
We formalize the above proof idea.

We define the subcircuit that we will use to do the conditional subtraction as:
\[
\overline{\SUBPROD}_p(\overline{x}, i) := \overline{\ADD}(0\overline{x}, \overline{\PROD}(\overline{\BIT}(-2^i), \overline{\BIT}(p))),
\]
and we use $\SUBPROD_{p,j}$ to denote the $j$-th bit of the output of $\overline{\SUBPROD}_p$. We note that inside $\SUBPROD$, the operations $\BIT$, $\ADD$, and $\PROD$ use the two's-complement representation, as in \cite{AGHT}, in order to handle the necessary subtraction. This is why the $\overline{x}$ argument to $\SUBPROD$ is prepended with 0 (the sign bit) before being passed to $\overline{\ADD}$. In our applications of $\SUBPROD$, we will be guaranteed that both the input and output are non-negative, and will simply never use the sign bit (even though it is crucial \emph{inside} the $\SUBPROD$ circuit in order to do the subtraction).

We introduce new temporary variables $x_{b-1}^{i}, x_{b-2}^i, \dotsc, x_0^i$, where $\overline{x}^i$ is the value stored in the variable \texttt{x} in the pseudo-code above after the $i$-th iteration. These new variables technically correspond to gates in the circuit; to describe what gates they are we describe the unrolled pseudo-code.
\begin{align*}
\overline{x}^0 := & \overline{x} \\
x_j^i := & \left(\LT_{b-i,p}(\overline{x}^{i-1}) \wedge x_j^{i-1} \right) \vee (\neg \LT_{b-i,p}(\overline{x}^{i-1}) \wedge \SUBPROD_{p,j}(\overline{x}^{i-1}, b-i)) \qquad (i \geq 1) \\
\overline{\REM}_p(\overline{x}) := & \overline{x}^b \\
\end{align*}
Since $\overline{\REM}_p$ has $b$ iterations, and each iteration has size $\poly(b, \log_2 p)$, overall $\REM_p$ has size $\poly(b, \log_2 p)$, as claimed.

We now show that IPS over a field of characteristic $p$ has small (of size $\poly(b, \log_2 p)$) derivations of $\VAL_p(\overline{x}^{i+1}) - \VAL_p(\overline{x}^i)$ for all $i=1,\dotsb,b$. By applying Corollary~\ref{cor:composition}, we then get an IPS derivation of size $\poly(b, \log_2 p)$ of  $\VAL_p(\overline{x}^b) - \VAL_p(\overline{x}^0)$, which is, by definition, the same as $\VAL_p(\overline{\REM}_p(\overline{x})) - \VAL_p(\overline{x})$, as desired.

Fix $i \in \{1,\dots,b\}$. Let 
\begin{align*}
\varphi_j := & \alg(\SUBPROD_{p,j}(\overline{x}^{i-1}, b-i)) \\
\psi := & \alg(\LT_{b-i,p}(\overline{x}^{i-1})
\end{align*}
In the variables $\overline{x}^{i-1}$, we have that  $\VAL_p(\overline{x}^i)$ is
\begin{align*}
\VAL_p(\overline{x}^i) = & \sum_{j=0}^{b-1} 2^j \left( \psi x_j^{i-1} + (1-\psi) \varphi_j + \psi (1-\psi) \varphi_j x_j^{i-1} \right). \\
 = & \psi \sum_{j=0}^{b-1} 2^j x_j^{i-1}  + (1-\psi) \sum_{j=0}^{b-1} 2^j \varphi_j + \psi (1 - \psi) \sum_{j=0}^{b-1} 2^j \varphi_j x_j^{i-1} \\
 = & \psi \VAL_p\left(\overline{\SUBPROD}_{p}(\overline{x}^{i-1}, b-i) \right) \\
  & + (1-\psi) \VAL_p(\overline{x}^{i-1}) + \psi (1-\psi) \chi,
\end{align*}
where $\chi = \sum_j 2^j \varphi_j x_j^{i-1}$.

Since $\chi$ is a linear-sized sum of circuits that are the algebraic translation of Boolean circuits of polynomial size, $\chi$ itself is computed by an algebraic circuit of polynomial size. Since $\psi(1-\psi)$ is the algebraic translation of a polynomial-size Boolean circuit, from the Boolean axioms IPS derives $\psi(1-\psi)\chi$ in polynomial size (e.\,g., \cite[Lem.~3.6]{GrochowPitassi}). This handles the last term, and we are left to handle the first two summands.

By \cite[Lem~5.1]{AGHTarxiv} we have that IPS over $\Z$ derives the following in $\poly(b,\log_2 p)$ size:
\begin{equation} \label{eq:val}
\VAL(\overline{\SUBPROD}_p(\overline{x}^{i-1}, b-i)) - \left(\VAL(\overline{x}^{i-1}) - 2^i p \right)
\end{equation}
Taking that IPS$_\Z$ derivation modulo $p$, we get an IPS$_{\F_p}$ derivation of 
\begin{equation} \label{eq:valp}
\VAL_p(\overline{\SUBPROD}_p(\overline{x}^{i-1}, b-i)) - \VAL_p(\overline{x}^{i-1}),
\end{equation}
since the final term in (\ref{eq:val}) was a multiple of $p$, and is thus zero in $\F_p$.

Now we multiply the preceding derivation of (\ref{eq:valp}) by $\psi$, and we add to it the derivation of $\psi (1-\psi) \chi$ to get an IPS$_{\F_p}$ derivation of size $\poly(b, \log_2 p)$ of 
\[
\psi\left(\VAL_p(\overline{\SUBPROD}_p(\overline{x}^{i-1}, b-i)) - \VAL_p(\overline{x}^{i-1})\right) + \psi (1-\psi) \chi
\]
which is equal, by definition and a tiny algebraic manipulation, to 
\[
\VAL_p(\overline{x}^i) - \psi (\VAL_p(\overline{x}^{i-1}) - (1-\psi)(\VAL_p(\overline{x}^{i-1})).
\]
which then simplifies to
\[
\VAL_p(\overline{x}^i) - \VAL_p(\overline{x}^{i-1}),
\]
as claimed.
\end{proof}

\begin{corollary} \label{cor:addprod}
There are Boolean circuits $\overline{\ADD}_p(\overline{x}, \overline{y})$ and $\overline{\PROD}_p(\overline{x}, \overline{y})$, each with $2b_p$ input bits and $b_p$ output bits, each of size $\poly(b_p)$, such that
\begin{enumerate}
\item $\VAL_+(\overline{\ADD}_p(\overline{x}, \overline{y})) < p$ and $\VAL_+(\overline{\PROD}_p(\overline{x}, \overline{y})) < p$; and

\item IPS over $\F_p$ has proofs of size $\poly(b_p)$ of
\[
\VAL_p(\overline{\ADD}_p(\overline{x}, \overline{y})) - (\VAL_p(\overline{x}) + \VAL_p(\overline{y}))
\]
and
\[
\VAL_p(\overline{\PROD}_p(\overline{x}, \overline{y})) - (\VAL_p(\overline{x}) \times \VAL_p(\overline{y}))
\]
from the Boolean axioms.
\end{enumerate}
\end{corollary}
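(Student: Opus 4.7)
My plan is to build $\overline{\ADD}_p$ and $\overline{\PROD}_p$ by composing the integer-arithmetic circuits $\overline{\ADD}$ and $\overline{\PROD}$ from \cite{AGHT} with the remainder circuit $\overline{\REM}_p$ from Lemma \ref{lem:rem}. Specifically, I would set
\[
\overline{\ADD}_p(\overline{x},\overline{y}) := \overline{\REM}_p(\overline{\ADD}(0\overline{x}, 0\overline{y})), \qquad \overline{\PROD}_p(\overline{x},\overline{y}) := \overline{\REM}_p(\overline{\PROD}(0\overline{x}, 0\overline{y})),
\]
prepending sign bits of $0$ to adapt to the two's-complement convention used by the $\overline{\ADD}, \overline{\PROD}$ circuits of \cite{AGHT}. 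Since the inputs $\overline{x},\overline{y}$ encode values in $\{0,\dotsc,p-1\}$, the integer sum lies in $\{0,\dotsc,2p-2\}$ and the integer product in $\{0,\dotsc,(p-1)^2\}$, so both $\overline{\ADD}(0\overline{x},0\overline{y})$ and $\overline{\PROD}(0\overline{x},0\overline{y})$ produce non-negative outputs with at most $2b_p+O(1)$ bits, and the subsequent application of $\overline{\REM}_p$ guarantees property (1). Since $\overline{\ADD}$, $\overline{\PROD}$, and (by Lemma \ref{lem:rem}(3)) $\overline{\REM}_p$ all have size $\poly(b_p)$, so do the composed circuits.

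For property (2), I would start from \cite[Lem.~5.1]{AGHTarxiv}, which gives a size-$\poly(b_p)$ IPS$_\Z$ derivation of
\[
\VAL(\overline{\ADD}(0\overline{x},0\overline{y})) - (\VAL(0\overline{x}) + \VAL(0\overline{y}))
\]
from the Boolean axioms, and analogously for $\overline{\PROD}$. Interpreting every coefficient of that certificate modulo $p$ yields an IPS$_{\F_p}$ certificate of the same size; on non-negative integers we have $\VAL(0\overline{x}) \bmod p = \VAL_p(\overline{x})$, so this reduction gives an IPS$_{\F_p}$ derivation of
\[
\VAL_p(\overline{\ADD}(0\overline{x},0\overline{y})) - (\VAL_p(\overline{x}) + \VAL_p(\overline{y}))
\]
and similarly for $\PROD$.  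Applying Lemma \ref{lem:rem}(2), with $\overline{z} := \overline{\ADD}(0\overline{x},0\overline{y})$ (respectively $\overline{\PROD}(0\overline{x},0\overline{y})$) substituted for the input variables, gives a size-$\poly(b_p)$ IPS$_{\F_p}$ derivation of $\VAL_p(\overline{\REM}_p(\overline{z})) - \VAL_p(\overline{z})$. Summing the two certificates produces the desired derivation of $\VAL_p(\overline{\ADD}_p(\overline{x},\overline{y})) - (\VAL_p(\overline{x}) + \VAL_p(\overline{y}))$, and likewise for $\PROD_p$.

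The only mildly subtle point is that Lemma \ref{lem:composition} is stated for two circuits with matching $b$-in/$b$-out signatures, whereas here the inner circuit ($\overline{\ADD}$ or $\overline{\PROD}$) has more inputs than outputs, and the outer circuit $\overline{\REM}_p$ acts on a bit-string of a different width. However, the proof of Lemma \ref{lem:composition} is purely a substitution argument combined with the fact that IPS can derive the Boolean equations for any polynomial-size circuit's outputs \cite[Lem.~3.6]{GrochowPitassi}; neither of these steps requires the two width parameters to agree, so the argument applies verbatim once one tracks sizes carefully. I therefore expect no genuine obstacle: the entire proof is routine bookkeeping, once one has Lemma \ref{lem:rem} and the binary-arithmetic machinery of \cite{AGHT} in hand.
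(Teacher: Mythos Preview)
Your proposal is correct and follows essentially the same approach as the paper: define $\overline{\ADD}_p$ and $\overline{\PROD}_p$ as $\overline{\REM}_p$ composed with the integer circuits $\overline{\ADD}$, $\overline{\PROD}$ from \cite{AGHT}, reduce the IPS$_\Z$ derivations of \cite[Lem.~5.1]{AGHTarxiv} modulo $p$, and combine with Lemma~\ref{lem:rem} via (a mild generalization of) the Composition Lemma~\ref{lem:composition}. Your treatment is in fact slightly more careful than the paper's in explicitly prepending the sign bit and in noting that the width-matching hypothesis of Lemma~\ref{lem:composition} is inessential.
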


\begin{proof}
Define $\overline{\ADD}_p(\overline{x}, \overline{y})$ as $\overline{\REM}_p(\overline{\ADD}(\overline{x}, \overline{y}))$ and $\overline{\PROD}_p(\overline{x}, \overline{y})$ as $\overline{\REM}_p(\overline{\PROD}(\overline{x}, \overline{y}))$.

The IPS derivations follow from \cite[Lemma~5.1]{AGHTarxiv} (taking their derivations in IPS$_{\Z}$ modulo $p$), Lemma~\ref{lem:rem}, and one application of the Composition Lemma~\ref{lem:composition}. The size bound follows by observing that in the case of addition, since $\overline{x}$ and $\overline{y}$ are each $b_p$ bits, the output of $\ADD$ is $b_p+1$ bits, so we use $\REM_p$ on $b_p+1$ bits, which has size $\poly(b_p+1)$ by Lemma~\ref{lem:rem}. Similarly, in the case of multiplication, the output of $\PROD$ has at most $2b_p$ bits, so we use $\REM_p$ on $2b_p$ bits, which has size $\poly(2b_p)$.
\end{proof}

We now have all the results and machinery in place to define the mod $p$ analogue of the BIT extraction operator from \cite[Def.~24]{AGHTarxiv}, and to prove the mod $p$ analogue of their binary value lemma.

\begin{definition}[Bit extraction operator modulo $p$, $\overline{\BIT}_p$]
Let $F$ be an algebraic circuit over $\F_p$. For $j=0,\dotsc,b_p-1$, we define $\BIT_{p,j}$ to denote the circuit constructed recursively as follows (which computes the $j$-th bit of the output of $F$). To each algebraic variable $x_i$, we associate $b_p$ Boolean variables $x_{i,0},x_{i,1},\dotsc,x_{i,b_p-1}$.

\begin{itemize}
\item If $F = x_i$ for an (algebraic) variable $x_i$, then $\BIT_{p,j}(F) := x_{i,j}$.

\item If $F = \alpha \in \F_p$, then $\BIT_{p,j}(F)$ is the $j$-th bit of the usual binary representation of $\alpha$, thinking of $\alpha$ as an element of $\{0,\dotsc,p-1\}$.

\item If $F = G + H$, then $\overline{\BIT}_p(F) := \overline{\ADD}_p(\overline{\BIT}_p(G), \overline{\BIT}_p(H))$.

\item If $F = G \times H$, then $\overline{\BIT}_p(F) := \overline{\PROD}_p(\overline{\BIT}_p(G), \overline{\BIT}_p(H))$.

\end{itemize}

\end{definition}

The proof of the following lemma is essentially the same as \cite[Lemma~5.1]{AGHTarxiv}, \emph{mutatis mutandis}, replacing $\BIT$ with $\BIT_p$, $\ADD$ with $\ADD_p$, and $\PROD$ with $\PROD_p$, with an important exception in the base case where $C = x_i$ is a variable, which we discuss in the following remark. 

\begin{remark}
In the case of a variable $x_i$, our $\BIT_p$ operator is \emph{not} merely the mod-$p$ version of the $\BIT$ operator of Alekseev \emph{et al.} \cite{AGHTarxiv}. In their setting, they assume their algebraic variables satisfy the Boolean axioms $x_i^2 - x_i = 0$, whereas in our setting we are attempting to simulate an algebraic circuit not just on Boolean inputs, but on all inputs from a finite field of exponential order. The Boolean assumption in their setting lets them define $\BIT_0(x) = x$ and $\BIT_1(x) = 0$, and then it is a polynomial identity that $\VAL(\overline{\BIT}(x)) = x$, so there is nothing to derive. To achieve the same in our setting, rather than deriving $F - \VAL(\overline{\BIT}(F))$ as they do, we assume that the inputs to $F$ on the left-hand side are already of the form $\VAL_p(\overline{\BIT}_p(x_i))$. Thus, the base case of the following lemma is still a polynomial identity that needs no derivation. Once one has established this base case, and modifying their definition of syntactic length so that the syntactic length of a variable is $b_p$ rather than just 2, the remainder of their proofs are entirely inductive and go through \emph{mutatis mutandis} using our $_p$ operators and the lemmas developed in this section.
\end{remark}

\begin{lemma}[Binary value principle modulo $p$] \label{lem:binvalp}
For any algebraic circuit $F(x_1, \dotsc, x_n)$ of total bit-size $s$ over a prime finite field $\F_p$, there is an IPS$_{\F_p}$ proof of size $\poly(s, \log p)$ of
\[
F(\VAL_p(\overline{\BIT}_p(\overline{x}))) - \VAL_p(\overline{\BIT}_{p}(F)),
\]
from the Boolean axioms for the variables $x_{i,j}$ ($i=1,\dotsc, n, j = 0, \dotsc, b_p-1$).
\end{lemma}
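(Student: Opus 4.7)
The plan is to proceed by structural induction on the algebraic circuit $F$, directly mirroring Alekseev \emph{et al.}'s argument \cite[Lem.~5.1]{AGHTarxiv} but with every ``over $\Z$'' ingredient replaced by its mod-$p$ analogue: $\ADD \rightsquigarrow \ADD_p$, $\PROD \rightsquigarrow \PROD_p$, $\BIT \rightsquigarrow \BIT_p$, and their Binary Value Lemma replaced by \Cref{cor:addprod}.

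For the base cases, as highlighted in the remark preceding the statement, both $F=x_i$ and $F=\alpha$ already give
\[
F(\VAL_p(\overline{\BIT}_p(\overline{x}))) - \VAL_p(\overline{\BIT}_p(F)) \equiv 0
\]
as a polynomial identity: for $F = x_i$ both sides literally equal $\sum_{j=0}^{b_p-1} 2^j x_{i,j}$ (reduced mod $p$), and for $F = \alpha$ the definition of $\overline{\BIT}_p(\alpha)$ as the binary representation of $\alpha \in \{0,\dotsc,p-1\}$ makes $\VAL_p(\overline{\BIT}_p(\alpha)) = \alpha$ on the nose. No IPS certificate is needed---this is precisely the sidestep of the issue flagged in the remark about variables satisfying Boolean axioms.

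The inductive step at $F = G + H$ (the case $F = G \times H$ is entirely analogous) combines three certificates: the IH certificates $C_G, C_H$ of sizes $\poly(|G|,\log p)$ and $\poly(|H|,\log p)$ deriving $G(\VAL_p(\overline{\BIT}_p(\overline{x}))) - \VAL_p(\overline{\BIT}_p(G))$ and its $H$-analogue, together with the certificate $D$ from \Cref{cor:addprod} for
\[
\VAL_p(\overline{\ADD}_p(\overline{u},\overline{v})) - \bigl(\VAL_p(\overline{u}) + \VAL_p(\overline{v})\bigr),
\]
after substituting $\overline{u} \mapsto \overline{\BIT}_p(G)$ and $\overline{v} \mapsto \overline{\BIT}_p(H)$. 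The substitution turns the Boolean axioms on $\overline{u},\overline{v}$ into equations $\BIT_{p,j}(G)^2 - \BIT_{p,j}(G)$ etc., which IPS in turn derives from the original Boolean axioms on $\overline{x}_{i,j}$ in polynomial size via \cite[Lem.~3.6]{GrochowPitassi} applied to the polynomial-size Boolean circuits $\overline{\BIT}_p(G)$ and $\overline{\BIT}_p(H)$. Computing $C_G + C_H - D$ then cancels the intermediate $\VAL_p(\overline{\BIT}_p(G)) + \VAL_p(\overline{\BIT}_p(H))$ terms and produces the desired $F(\VAL_p(\overline{\BIT}_p(\overline{x}))) - \VAL_p(\overline{\BIT}_p(F))$, since by definition $\overline{\BIT}_p(F) = \overline{\ADD}_p(\overline{\BIT}_p(G),\overline{\BIT}_p(H))$.

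The main obstacle I expect is size bookkeeping, since a naive induction re-builds the circuits $\overline{\BIT}_p(G)$ at every gate along the way and could blow up polynomially in depth. The standard fix, used already in \cite{AGHTarxiv}, is to share subcomputations: build $\overline{\BIT}_p$ once, as a single Boolean circuit simultaneously computing $\overline{\BIT}_p(G)$ for every subexpression $G$ of $F$ (whose total size is $\poly(s,\log p)$), and let the IPS certificate reference those outputs via fan-out. With this convention, each of the $O(s)$ gates of $F$ contributes only a $\poly(\log p)$-size gadget from \Cref{cor:addprod} plus a $\poly(|G|+|H|)$-size Boolean-axiom derivation from \cite[Lem.~3.6]{GrochowPitassi}, giving an overall IPS$_{\F_p}$ certificate of size $\poly(s,\log p)$, as claimed. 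This is exactly the place where redefining the syntactic-length measure so that a variable has length $b_p$ (rather than $2$, as in \cite{AGHTarxiv}) is needed, and it is the only nontrivial modification to their inductive size accounting.
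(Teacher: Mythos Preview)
Your proposal is correct and follows essentially the same approach as the paper: the paper does not give a detailed proof but simply states that the argument is \cite[Lemma~5.1]{AGHTarxiv} \emph{mutatis mutandis}, replacing $\BIT,\ADD,\PROD$ by their $_p$ versions and handling the variable base case via the observation in the preceding remark. Your write-up spells out precisely these substitutions, the base-case polynomial identity, the use of \Cref{cor:addprod} at each gate, and the syntactic-length adjustment to $b_p$, all of which match the paper's intended proof.
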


\section{Main Theorem} \label{sec:mainthm}
\begin{theorem} \label{thm:main}
\PIT for circuits of bit-size $n$, over $\Q$ or over finite fields of size $\leq q$, is in $\cc{NP}$ if there is a Cook--Reckhow proof system that p-simulates IPS for Boolean \UNSAT over finite fields of size $\leq \poly(q,2^n)$, with IPS size measured by bit-size.

Conversely, if \PIT[$\F$] is in $\cc{NP}$, then there is a Cook--Reckhow proof system that p-simulates IPS$_\F$ for Boolean \UNSAT, with IPS size measured by bit-size.
\end{theorem}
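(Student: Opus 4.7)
The converse direction is the same as in \Cref{thm:mainpoly}: given $\PIT[\F] \in \cc{NP}$, the Cook--Reckhow proof system takes an IPS$_\F$ certificate together with $\cc{NP}$ witnesses for the two PIT instances needed to verify it (combined into a single PIT query via the AND-of-PIT trick). For the forward direction, the plan is to reduce $\PIT$ over $\F$ to the question of whether a suitably chosen CNF $\varphi_C$ has a polynomial-size Cook--Reckhow refutation, by composing the lemmas of the preceding sections. Given an algebraic circuit $C$ over $\F$ of bit-size $n$, I will nondeterministically produce $\varphi_C$ with two properties: (a) $\varphi_C$ is unsatisfiable iff $C$ is the identically zero polynomial, and (b) when unsatisfiable, $\varphi_C$ has an IPS$_\K$ refutation of size $\poly(n)$ for some finite field $\K$ of size $2^{\poly(n)}$. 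Granting these, the $\cc{NP}$ algorithm guesses $\K$ (using the $\cc{TFNP}$ construction from \Cref{lem:field}), deterministically builds $\varphi_C$, and then guesses and verifies a polynomial-size $\mathcal{P}$-refutation of $\varphi_C$, where $\mathcal{P}$ is the hypothesized Cook--Reckhow simulator.

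The construction of $\varphi_C$ proceeds in four layers. First, apply \Cref{lem:field} to obtain a finite field $\K$ of characteristic $p$ such that $C$ is nonzero as a polynomial iff $C$ is nonzero as a function on $\K^n$. Second, apply \Cref{lem:algebraic} over $\K$ to produce a system of degree-$2$ equations $\mathcal{F}_C$ over $\K$ that is unsatisfiable iff $C \equiv 0$. Third, use \Cref{lem:extension} to vectorize each $\K$-equation into $[\K:\F_p]$ equations over the prime subfield $\F_p$. Fourth, encode these $\F_p$-equations as Boolean circuits via the modular binary value principle and associated machinery (\Cref{lem:binvalp} and the supporting corollaries), and apply the standard Tseitin-style circuit-to-CNF translation. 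To show that IPS$_\K$ has a short refutation of $\varphi_C$ when $C \equiv 0$, I will then run these four layers in reverse: \Cref{lem:circuit-sat} and \Cref{lem:cnf_vs_clauses} let IPS recover the algebraic translations of the Boolean circuits from $\varphi_C$; \Cref{lem:binvalp} lets IPS$_{\F_p}$ derive the vectorized $\F_p$-equations from those; \Cref{lem:extension} lifts these to the original $\K$-equations; and finally \Cref{lem:algebraic} supplies the short refutation of $\mathcal{F}_C$. Chaining these with the size bounds provided by each lemma yields the desired $\poly(n)$-size IPS$_\K$ refutation.

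The main obstacle, and the source of the ``essentially'' caveat in the theorem statement (see \Cref{rmk:field}), is the gap between the field $\F$ over which $\PIT$ lives and the auxiliary finite field $\K$ of potentially exponential order over which IPS operates. This gap is forced because the Boolean CNF only sees arithmetic modulo $p = \ch \K$, so $\K$ must be chosen large enough---a sufficiently high-degree extension when $\F$ is finite, or a sufficiently large prime field when $\F = \Q$---to ensure that $C$'s status as a polynomial agrees with its status as a function on $\K^n$. Beyond this conceptual point, the remaining work is essentially bookkeeping: verifying that the size bounds from each of the four layers compose polynomially. The construction of $\REM_p$ (\Cref{lem:rem}), which handles binary arithmetic modulo a large prime despite IPS's limited ability to reason about inequalities, is the most delicate ingredient, and is precisely what allows the chain to go through with polynomial IPS size.
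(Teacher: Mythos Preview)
Your proposal is correct and follows essentially the same approach as the paper: both directions match, and the forward direction uses the same four-layer encoding (\Cref{lem:field}, then \Cref{lem:algebraic}, then \Cref{lem:extension}, then \Cref{lem:binvalp} with Tseitin) and unwinds it in reverse via \Cref{lem:cnf_vs_clauses}, \Cref{lem:circuit-sat}, \Cref{lem:binvalp}, \Cref{lem:extension}, and \Cref{lem:algebraic} to exhibit the short IPS$_\K$ refutation. The only minor sharpening is that the paper is explicit that $|\K| \leq \poly(q,2^n)$ (matching the hypothesis on $\mathcal{P}$) rather than your looser ``$2^{\poly(n)}$'', but this is immediate from the bounds in \Cref{lem:field}.
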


\begin{proof}
($\Leftarrow$) If PIT is in $\cc{NP}$, then IPS (with size measured as total bit-size) has $\cc{NP}$-verifiable proofs, which is essentially equivalent to being Cook--Reckhow. IPS plus the witness for the $\cc{NP}$ verifier for PIT form a Cook--Reckhow proof system.

($\Rightarrow$) Suppose there is a Cook--Reckhow proof system $\mathcal{P}$ that p-simulates IPS for UNSAT-CNF over fields of order $\leq \poly(q, 2^n)$. We give an $\mathsf{NP}$ algorithm for PIT over $\F_q$ or the rationals. Let $\F \in \{\F_q, \Q\}$. Given an algebraic circuit $C$ of bit-size $n$ over $\F$ consider the following steps. 
\begin{enumerate}
\item By Lemma~\ref{lem:field}, there is a finite field $\K$ of size at most $2^{\poly(n)}$ such that $C$ computes a well-defined function $C_{\K}\colon \K^n \to \K$, and such that $C_{\K}$  is not the zero function if and only if the original circuit $C$ does not compute the identically zero polynomial. Furthermore, $\K$ can be constructed in $\mathsf{TFNP} = \mathsf{NPMV}_{gt}$.

\item From Lemma~\ref{lem:algebraic}, in logspace we then construct a system of equations $\mathcal{F}_C$ over $\K$ such that $C$ is the identically zero polynomial iff $\mathcal{F}_C$ is unsatisfiable over $\K$ iff $\mathcal{F}_C$ is unsatisfiable over $\overline{\K}$. Furthermore, when $\mathcal{F}_C$ is unsatisfiable, there is an IPS$_\K$ proof of this whose size is $\poly(n)$.

\item Let $p = \ch \K$. We extend the $\overline{\BIT}_p$ operator to tuples as $\overline{\BIT}_p((F_1, \dotsc, F_k)) := (\overline{\BIT}_p(F_1), \dotsc, \overline{\BIT}_p(F_k))$. Then
\[
S := \{\overline{\BIT}_p(\overline{\VEC}_{\K/\F_p}(F_i)) : F_i \in \mathcal{F}_C\}
\]
is a set of vectors of Boolean circuits such that $C$ is identically zero as a polynomial iff these Boolean circuits cannot simultaneously all evaluate to zero. In other words, $C \in \PIT$ iff the conjunction of the negations of the Boolean circuits in $S$ is unsatisfiable. 

\item Let $\neg S := \{\neg \Gamma : \Gamma \in S\}$. Reduce each $\neg \Gamma$ from a circuit to CNF as in Section~\ref{sec:prelim:sat}. Since the conjunction of all the $\neg \Gamma$'s ($\Gamma \in S$) is unsatisfiable iff $C \in \PIT$, we may treat the conjunction of all these CNFs as a single, large CNF $\varphi$, and $\varphi$ is unsatisfiable iff $C \in \PIT$. 

\item We claim that, when $\varphi$ is unsatisfiable (equivalently, when $C \in \PIT$), $\varphi$ has a short IPS refutation; we will prove this claim below. Then, by assumption, $\varphi$ thus also has short proofs in the Cook--Reckhow system $\mathcal{P}$. The $\mathsf{NP}$ machine now guesses and verifies a short $\mathcal{P}$-refutation for $\varphi$. If it finds one, it returns YES (the circuit $C$ was identically zero). Otherwise it returns NO.
\end{enumerate}

This completes the description of the $\mathsf{NP}$ algorithm for PIT. All the remains is to prove the claim in the final step, that $\varphi$ has a short IPS refutation.

Suppose $\varphi$ is unsatisfiable. Then IPS refutes $1-\alg(\varphi) = 0$ as follows. By Lemma~\ref{lem:cnf_vs_clauses}, IPS can efficiently derive $1-\alg(\kappa_i)$ for each clause $\kappa_i$ of the CNF $\varphi$. Since $\varphi$ was constructed as the CNF reduction of a conjunction of $\neg \Gamma$ for all $\Gamma \in S$, each $\neg \Gamma$ contributed a certain subset of the clauses $\{\kappa_i\}$. Again by Lemma~\ref{lem:cnf_vs_clauses}, from the individual clauses IPS can efficiently derive the CNF reduct of each $\neg \Gamma$. By Lemma~\ref{lem:circuit-sat}, IPS then efficiently derives $1-\alg(\neg \Gamma)$ for each circuit $\Gamma \in S$. Note that $1-\alg(\neg \Gamma) = \alg(\Gamma)$ (identically as polynomials, by definition of the algebraic translation $\alg(\bullet)$, see Section~\ref{sec:prelim:sat}). Thus, so far, IPS has efficiently derived $\alg(\Gamma)$ for each $\Gamma \in S$.

Now, by definition of $S$, this is the same as having derived $\{\alg(\overline{\BIT}_p(\overline{\VEC}_{\K/\F_p}(F_i))) : F_i \in \mathcal{F}_C\}$, where here we also extend the $\alg$ operator to vectors, namely $\alg((F_1, \dotsc, F_k)) := (\alg(F_1), \dotsc, \alg(F_k))$. We also extend the $\VAL_p$ operator to tuples \emph{mutatis mutandis}. Now, since $\VAL_p(\overline{\BIT}_p(\overline{\VEC}_{\K/\F_p}(F_i)))$ is simply a linear combination of the algebraic circuits in the tuple $\alg(\overline{\BIT}_p(\overline{\VEC}_{\K/\F_p}(F_i)))$, by taking one more linear combination, IPS$_{\F_p}$ has efficiently derived $\VAL_p(\overline{\BIT}_p(\overline{\VEC}_{\K/\F_p}(F_i)))$ for each $F_i \in \mathcal{F}_C$.

Now, separately, for shorthand let $\hat{F}_i := F_i(\VAL_p(\overline{\BIT}_p(\overline{x})))$. By Mod $p$ Binary Value Principle (Lemma~\ref{lem:binvalp}), from the Boolean axioms IPS$_{\F_p}$ efficiently derives $\overline{VEC}_{\K/\F_p}(\hat{F}_i) - \VAL_p(\overline{\BIT}_p(\overline{VEC}_{\K/\F_p}(F_i)))$ for each $F_i \in \mathcal{F}_C$. 

Adding the results of the previous two paragraphs, IPS (so far, just over $\F_p$) has derived $\overline{\VEC}_{\K/\F_p}(\hat{F}_i)$ for each $F_i \in \mathcal{F}_C$. Finally, the operator $\VAL_{\K/\F_p}$ is simply taking a certain $\K$-linear combination of its arguments; by applying this operation IPS$_{\K}$ has now efficiently derived $\VAL_{\K/\F_p}(\overline{\VEC}_{\K/\F_p}(\hat{F}_i))$ for each $F_i \in \mathcal{F}_C$. By Lemma~\ref{lem:extension}, the latter is identically the same as the polynomial $\hat{F}_i$. Thus, so far IPS$_{\K}$ has efficiently derived all of $\mathcal{F}_C \circ (\VAL_p(\overline{\BIT}_p(\overline{x})))$ (that is, each $F_i$ in $\mathcal{F}_C$, composed with $\VAL_p(\overline{\BIT}_p(x_i))$ for each input variable $x_i$).

Finally, we follow the above with a slight twist on the IPS$_\K$ refutation of $\mathcal{F}_C$ from step 2 of the algorithm (which relied on Lemma~\ref{lem:algebraic}), because we don't quite have $\mathcal{F}_C$, but rather we have $\mathcal{F}_C \circ (\VAL_p(\overline{\BIT}_p(\overline{x})))$. However, as IPS derives the constant polynomial $1$ from $\mathcal{F}_C$, we may compose that IPS derivation with $\VAL_p(\overline{\BIT}_p(\overline{x}))$ as well and still get an IPS derivation of 1. That is, if $D(\overline{x}, \overline{y})$ is the IPS certificate refuting $\mathcal{F}_C$, then $D(\VAL_p(\overline{\BIT}_p(\overline{x})), \overline{y})$ is an IPS certificate refuting $\mathcal{F}_C \circ (\VAL_p(\overline{\BIT}_p(\overline{x})))$. This completes the proof that IPS$_\K$ has a short refutation of $\varphi$, thus completing the proof of correctness of the $\mathsf{NP}$ algorithm for PIT.
\end{proof}

\section{Future directions and open questions} \label{sec:conclusion}

\subsection{A route to derandomizing PIT into $\cc{NP}$?} \label{sec:pit}
We now describe in more detail the potential new route to putting \PIT into $\cc{NP}$ that was sketched in \Cref{sec:results}. We begin by recalling the following definition and result from Grochow \& Pitassi \cite{GrochowPitassi}. In the description of their PIT axioms, we follow their notational conventions. Namely, we underline parts that consist of the proposition variables of the relevant Boolean formula. For an algebraic circuit $C(\overline{x})$ in algebraic variables $\overline{x} = x_1, \dotsc, x_n$, we use brackets $[C(\overline{x})]$ to denote the bit-wise description of $C(\overline{x})$. Note that when combined, $\underline{[C(\overline{x})]}$ denotes a collection of Boolean variables which, when assigned values, are interpreted as the description of the algebraic circuit $C(\overline{x})$ (but $C$ is not specified in advance, it depends on the values of the Boolean variables). Tuples of algebraic variables are denoted $\overline{x}, \overline{y}, \dotsc$, while tuples of Boolean variables are denoted $\overline{p}, \overline{q}, \dotsc$. 

\begin{definition}[{PIT axioms, \cite[Def.~5.1]{GrochowPitassi}}]
Let $K$ be a family of Boolean circuits. The PIT axioms for $K$ are:
\begin{enumerate}
\item $K(\underline{[C(\overline{x})]}) \to K(\underline{[C(\overline{p})]}$.

Here, the variables on the left-hand side are Boolean variables $\overline{q}$ encoding an algebraic circuit. On the right-hand side, there are additional Boolean variables $\overline{p}$, and some of the variables of $\overline{q}$---namely, those which describe the input algebraic variables $\overline{x}$---have been replaced by constants or $\overline{p}$ in such a way that $[C(\overline{p})]$ encodes a circuit that plugs in the $\{0,1\}$-valued variables $p_i$ for the input algebraic variables $x_i$.

\item $K(\underline{[C(\overline{x})]}) \to \neg K(\underline{[1-C(\overline{x})]})$.

Here, there is a single set of Boolean variables $\overline{q}$ describing an algebraic circuit $C(\overline{x})$. There is a Boolean function $\varphi$ such that if $\overline{q}$ is the description $[C(\overline{x})]$, then $\varphi(\overline{q})$ is a description of $1-C(\overline{x})$. With this notation, the above axiom is the same as 
\[
K(\overline{q}) \to \neg K(\varphi(\overline{q})).
\]
Similar conventions apply to the remaining axioms.

\item $K(\underline{G(\overline{x})}) \wedge K(\underline{[C(\overline{x}, 0)]}) \to K(\underline{[C(\overline{x}, G(\overline{x}))]})$

\item $K(\underline{[C(\overline{x})]}) \to K(\underline{[C(\pi(\overline{x}))]})$ for all permutations $\pi$ of the $n$ variables $x_1, \dotsc, x_n$.

\end{enumerate}
\end{definition}

Grochow \& Pitassi prove the following result for $\mathcal{C}$-Frege for various circuit classes $\mathcal{C}$, but it is clear that the same proof works to give the following more general statement, \emph{mutatis mutandis}.

\begin{theorem}[{cf. Grochow \& Pitassi \cite[Thms.~1.4 and 1.6]{GrochowPitassi}}] \label{thm:pit}
Let $\mathcal{P}$ be any proof system that implicationally p-simulates $\cc{AC}^0$-Frege. If there is a family $K$ of polynomial-size Boolean circuits solving PIT$_\F$, and such that the PIT axioms for $K$ have polynomial-size $\mathcal{P}$-proofs, then $\mathcal{P}$ p-simulates IPS$_{\F}$ for UNSAT-CNF (with size in IPS measured by total bit-size).
\end{theorem}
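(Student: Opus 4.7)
The plan is to simulate the IPS verification procedure inside $\mathcal{P}$. Given a short IPS refutation $C(\overline{x}, \overline{y})$ of an unsatisfiable CNF $\varphi$ with clause polynomials and Boolean axioms collected in $\overline{F}$, the two defining identities of $C$ as an IPS certificate are $C(\overline{x}, \overline{0}) \equiv 0$ and $C(\overline{x}, \overline{F}(\overline{x})) \equiv 1$, both as formal polynomials. Ordinary IPS verification needs a PIT oracle to check these; the hypothesized polynomial-size $K$ together with the $\mathcal{P}$-provable PIT axioms are designed precisely to serve as a deterministic substitute, and I would use them to convert the IPS refutation into a short $\mathcal{P}$-refutation of $\varphi$.

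First, since $K$ correctly decides PIT, on the fixed descriptions $[C(\overline{x}, \overline{0})]$ and $[1 - C(\overline{x}, \overline{F}(\overline{x}))]$---each polynomial-time computable from the given IPS certificate and $\varphi$---$K$ outputs $1$. Because $K$ is a polynomial-size Boolean circuit evaluated on a fixed input, any proof system at least as strong as $\cc{AC}^0$-Frege---and hence $\mathcal{P}$ by hypothesis---has polynomial-size proofs of the Boolean facts $K([C(\overline{x}, \overline{0})])$ and $K([1 - C(\overline{x}, \overline{F}(\overline{x}))])$. Second, instantiate PIT axiom 1 on each to substitute Boolean variables $\overline{p}$ for the algebraic variables $\overline{x}$, yielding $\mathcal{P}$-proofs of $K([C(\overline{p}, \overline{0})])$ and $K([1 - C(\overline{p}, \overline{F}(\overline{p}))])$ as tautologies in the free Boolean variables $\overline{p}$. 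Third, under the hypothesis that $\overline{p}$ satisfies $\varphi$, each $F_i(\overline{p})$ is a Boolean expression that $\mathcal{P}$ derives equals $0$ in polynomial size (Boolean axioms vanish on $\{0,1\}$-valued inputs, and each clause polynomial vanishes on any satisfying assignment). Then axiom 3---which substitutes a $K$-approved vanishing polynomial in for a placeholder variable of a $K$-approved circuit---lets one swap $K([1 - C(\overline{p}, \overline{F}(\overline{p}))])$ for $K([1 - C(\overline{p}, \overline{0})])$, while axiom 4 manages the ordering of the $y_i$. Combined with the already-derived $K([C(\overline{p}, \overline{0})])$, axiom 2 (which forbids $K$ from simultaneously endorsing a circuit and its $1$-complement) supplies the contradiction, refuting the assumption that $\overline{p}$ satisfies $\varphi$.

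The main obstacle is orchestrating the interplay of axioms 1, 2, and 3 so that the IPS certificate's algebraic content is faithfully reflected in the propositional $\mathcal{P}$-proof. The subtlety is that the PIT axioms speak only about $K$'s outputs on symbolic circuit descriptions, not about the semantic values of the circuits; the bridge from ``$K$ approves circuit $D$'' to ``circuit $D$ equals $0$'' must therefore be built structurally, by propagating $K$'s approvals through the gates of $C$ via repeated applications of axiom 3 as each $F_i(\overline{p})$ is plugged in one at a time. This bookkeeping is exactly what Grochow \& Pitassi's original proof accomplishes; their argument is structural in $\mathcal{P}$ and invokes only implicational p-simulation of $\cc{AC}^0$-Frege, so it adapts \emph{mutatis mutandis} to any proof system $\mathcal{P}$ satisfying the hypothesis here, with IPS size measured by total bit-size.
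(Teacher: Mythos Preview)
The paper does not give its own proof of this theorem: immediately before the statement it says that Grochow \& Pitassi proved the result for $\mathcal{C}$-Frege and that ``it is clear that the same proof works to give the following more general statement, \emph{mutatis mutandis}.'' Your proposal is precisely a sketch of that original Grochow--Pitassi argument---evaluating $K$ on the two PIT instances arising from the IPS certificate, using axiom~1 to pass to Boolean inputs, axiom~3 to substitute the vanishing $F_i(\overline{p})$ one by one, and axiom~2 to close the contradiction---and you correctly note at the end that the argument only uses implicational p-simulation of $\cc{AC}^0$-Frege, so it transfers to any such $\mathcal{P}$. In short, your approach and the paper's (cited) approach coincide; you have simply written out what the paper leaves as a citation.
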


Our suggestion of how one might prove that PIT is in $\cc{NP}$ is then encapsulated in the following corollary:

\begin{corollary} \label{cor:pit}
Let $\mathcal{P}$ be any Cook--Reckhow proof system that implicationally p-simulates $\cc{AC}^0$-Frege. If there is a family $K$ of polynomial-size Boolean circuits that correctly solves PIT$_\F$ over fields of size $\leq \poly(q, 2^{n})$, and such that the PIT axioms for $K$ have polynomial-size $\mathcal{P}$-proofs, then PIT---for circuits over $\Q$ or finite fields of size $\leq q$---is in $\cc{NP}$.
\end{corollary}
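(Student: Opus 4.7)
The plan is to combine \Cref{thm:pit} (Grochow--Pitassi) with the forward direction of \Cref{thm:main} in a one-two punch. First I would apply \Cref{thm:pit} to the given Cook--Reckhow system $\mathcal{P}$ and the family $K$: since $\mathcal{P}$ implicationally p-simulates $\cc{AC}^0$-Frege by hypothesis, and since $K$ solves PIT$_\F$ with short $\mathcal{P}$-proofs of its PIT axioms, the conclusion of \Cref{thm:pit} delivers that $\mathcal{P}$ p-simulates IPS$_\F$ for UNSAT-CNF, with IPS size measured by total bit-size. The range of fields over which this p-simulation holds is exactly the range over which $K$ is posited to correctly decide PIT, namely fields of size $\leq \poly(q, 2^n)$.

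Next I would feed this into the forward direction of \Cref{thm:main}. That direction takes as hypothesis precisely ``there is a Cook--Reckhow proof system that p-simulates IPS for Boolean \UNSAT over finite fields of size $\leq \poly(q, 2^n)$, with IPS size measured by bit-size,'' which is exactly what the previous paragraph has established (with witness $\mathcal{P}$, which is Cook--Reckhow by assumption). The conclusion is that \PIT for circuits of bit-size $n$ over $\Q$ or over finite fields of size $\leq q$ is in $\cc{NP}$, which is precisely the claim of the corollary.

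The only thing to verify carefully is that the field-size bookkeeping in the two theorems lines up. \Cref{thm:pit} concludes p-simulation of IPS$_\F$ on whatever fields $\F$ the hypothesis provides, and \Cref{thm:main} requires p-simulation on fields up to $\poly(q, 2^n)$ in size to conclude PIT$\in\cc{NP}$ for circuits over $\Q$ or $\F_q$. So the corollary is stated with matching $\poly(q, 2^n)$ thresholds, and nothing extra needs to be proved once the hypothesis about $K$ is parsed with this range of $\F$ in mind.

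Since both ingredients are already proved in the paper, there is no genuine obstacle in the proof itself; the content of the corollary lies in the observation that the combination of these two results yields a clean non-uniform-to-uniform transfer. The substantive point I would emphasize in the write-up is conceptual rather than technical: the PIT axioms are strictly weaker than a full correctness proof of $K$, so the hypothesis is plausibly attackable, and yet via \Cref{thm:pit} and \Cref{thm:main} it yields the uniform consequence $\PIT\in\cc{NP}$.
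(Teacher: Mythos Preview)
Your proposal is correct and matches the paper's proof essentially verbatim: apply \Cref{thm:pit} to conclude that $\mathcal{P}$ p-simulates IPS$_\F$ for \UNSAT over fields of size $\leq \poly(q,2^n)$, then invoke the forward direction of \Cref{thm:main} to get $\PIT \in \cc{NP}$ over $\Q$ or fields of size $\leq q$. The field-size bookkeeping you flag is exactly the only thing to check, and you have it right.
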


\begin{proof}
Under these hypotheses, by Theorem~\ref{thm:pit}, IPS$_{\F}$ over fields of size $\leq \poly(q, 2^{n})$ is p-simulated by the Cook--Reckhow system $\mathcal{P}$. By our Main Theorem~\ref{thm:main}, it follows that PIT over $\Q$ or fields of size $\leq q$ is in $\cc{NP}$.
\end{proof}

One aspect we find potentially interesting about this approach is that, on the one hand, since PIT is in $\cc{BPP}$, we know that PIT is in $\cc{P/poly}$, so in some sense the ``whole question'' of derandomizing PIT is one of ``\emph{uniformizing}'' PIT, in the sense of ``removing the use of non-uniformity.'' On the other hand, in Corollary~\ref{cor:pit}, neither the circuit family $K$ nor the short $\mathcal{P}$-proofs for the PIT axioms need to be uniform; they can be non-uniform, and as long as they \emph{exist} (and have polynomial size), it implies the \emph{uniform} conclusion that PIT is in $\cc{NP}$ (essentially, via the uniformity in the proof-checker for $\mathcal{P}$). This particular aspect of this proposal makes this approach feel, at least to this author, like it would be significantly different from approaches based on pseudo-random generators or on unconditional derandomization of PIT for specific circuit classes, either of which seem to require more uniform solutions.

\subsection{Open questions} \label{sec:open}
\begin{open}
Can we improve \Cref{thm:main} for IPS for \UNSAT to use the same field for both IPS and \PIT, as the result for \lang{Variety Emptiness} (\Cref{thm:mainpoly}) does?
\end{open}
In either the case of finite fields or the rationals, this would seem to need a new approach; see \Cref{rmk:field} for more discussion.

\begin{open}
What is the relationship between IPS over different fields, especially when viewed as proof systems for Boolean \UNSAT? 
\end{open}
For two distinct primes $p$ and $q$, if we take an unsatisfiable system of polynomial equations over $\F_p$, and use the techniques of this paper to produce an unsatisfiable CNF $\varphi$ such that IPS$_{\F_p}$ can derive the original polynomial equations from $\varphi$, is $\varphi$ hard for IPS$_{\F_q}$? 

We note that even for extension fields the answer is not immediately apparent. With the Nullstellensatz or PC proof systems, a certificate exists over an extension field $\K \supseteq \F$ if and only if a certificate exists over the ground field $\F$; this follows because certificates in those systems can be viewed as solutions to certain (unions of exponentially large) linear equations over $\F$, and linear equations have the property that they have solutions over an extension field iff they have solutions over the ground field. Interestingly, because of the equivalence between linear $\cc{VP}_{det}$-IPS and PC (where proof size in PC is measured by number of lines) \cite[Prop.~3.4]{GrochowPitassi}, this tells us that the power of linear $\cc{VP}_{det}$-IPS only depends at most on the characteristic of the field. But for general IPS we have no such equivalence (though IPS certificates are the solutions of \emph{polynomial} equations, see the proof of \cite[Prop.~3.2]{GrochowPitassi}, with Koiran). 

We may similarly ask about \PIT over different fields:

\begin{open}
What is the relationship between the various versions of \PIT over different fields?
\end{open}

\textit{Other fields.} Lastly, can our results be extended to other fields, such as algebraic number fields or $\mathbb{C}$? Over algebraic number fields we suspect the answer is yes, using just a little additional number theory to get an analogue of \Cref{lem:field}. Over $\mathbb{C}$ we also suspect the answer is yes, using methods similar to Koiran \cite{koiran}. Over something like a function field, or the field of fractions of the coordinate ring of a variety, we do not have strong intuition about the result, but expect it to be quite a bit more complicated to resolve.

\section*{Origin of the paper\footnote{This is something I'm trying, with the thought that sections like this could be useful for aspiring researchers in the future, who wonder ``How did they even think to work on this in the first place? How did they decide to?'' I hope others will join me in this experiment in pulling back the curtain.}}
\label{sec:origin}
Ever since we worked on IPS in 2013, it was a natural question as to whether IPS verification could be done deterministically, somehow avoiding the worst case of \PIT, and simultaneously putting IPS into the class of Cook--Reckhow proof systems. After \Hrubes and an anonymous reviewer pointed out to us that it was not even obvious that ZFC p-simulated IPS (see p. 4, \Cref{sec:intro}), Toni Pitassi and I talked about what difficulty ZFC might have in p-simulating IPS. A (seemingly) key issue was that it was unclear whether ZFC---as a proof system for Boolean \UNSAT---could prove the PIT Lemma (reproduced as \Cref{lem:pit} above), whose proof is based on a (probabilistic) counting argument. If this was not obvious even for such a powerful proof system as ZFC, it was natural to wonder whether the same difficulty would be encountered by trying to p-simulate IPS by \emph{any one} Cook--Reckhow proof system, and hence whether such a p-simulation entailed some derandomization of \PIT.

In February of 2022, while unpacking some boxes, I was trying to think of new projects to work on with Toni Pitassi, and decided (somewhat randomly) to think about this question again. How could one encode the identically vanishing of an algebraic circuit $C$ into an unsatisfiable system of polynomial equations? The natural, well-known trick from algebraic geometry (going back probably to Hilbert if not earlier) is to add an equation like $1-zC = 0$, which forces the output of $C$ to be invertible. And the natural trick from circuit complexity, going back at least to Ben-Or \cite{BenOr} if not earlier, is to add a new variable for each gate, and equations enforcing that the gate variables compute the polynomials at the gate. This combination led to \Cref{lem:algebraic}, which is what made it clear the probably the rest could be worked out. I didn't realize how long it would take and how much work it would be! 

\section*{Acknowledgments}

I would like to thank Toni Pitassi for helpful discussions in the early stages of this work, and Pavel \Hrubes for pointing out the initial error about ZFC that eventually led to the question resolved in this paper. I would also like to thank A. Atserias, J. Nordström, P. Pudlák, and R. Santhanam for organizing and inviting me to the Dagstuhl Seminar 18051: Proof Complexity, in January 2018, where the conversation with \Hrubes occurred; and E. Allender, A. Kolokolova, P. Papakonstantinou, and R. Santhanam for organizing, inviting me to present at, and accommodating my need for remote presentation at the DIMACS Workshop on Meta-Complexity, Barriers, and Derandomization, where I presented on a preliminary version of this work (recording available \href{https://youtu.be/7bzd9DaMHRQ}{here}) and had many interesting discussions.  This work was supported by NSF CAREER award CISE-2047756. 

\bibliographystyle{alphaurl}
\bibliography{ips-pit}

\end{document}